\documentclass[aps,pra,twocolumn,notitlepage,nofootinbib,superscriptaddress,longbibliography]{revtex4-2}

\usepackage{amsthm}
\usepackage{amsmath}
\usepackage{amssymb}
\usepackage{braket}
\usepackage{tikz}
\usepackage{quantikz}

\usepackage{graphicx}
\usepackage{dcolumn}

\usepackage[hidelinks]{hyperref}
\usepackage{thmtools}
\usepackage{thm-restate}

\declaretheorem[name=Theorem]{theorem}
\declaretheorem[name=Lemma,sibling=theorem]{lemma}
\declaretheorem[name=Proposition,sibling=theorem]{proposition}
\declaretheorem[name=Corollary,sibling=theorem]{corollary}
\declaretheorem[name=Definition,sibling=theorem]{definition}
\declaretheorem[name=Remark,sibling=theorem]{remark}

\newcommand{\opr}{\operatorname}

\newcommand{\tr}{\opr{Tr}}
\newcommand{\cC}{\mathcal{C}}
\begin{document}
\title{Structure, Optimality, and Symmetry in Shadow Unitary Inversion}

\author{Guocheng Zhen}
\thanks{Guocheng Zhen and Yu-Ao Chen contributed equally to this work.}
\affiliation{Thrust of Artificial Intelligence, Information Hub,\\
The Hong Kong University of Science and Technology (Guangzhou), Guangdong 511453, China}
\author{Yu-Ao Chen}
\thanks{Guocheng Zhen and Yu-Ao Chen contributed equally to this work.}
\affiliation{Thrust of Artificial Intelligence, Information Hub,\\
The Hong Kong University of Science and Technology (Guangzhou), Guangdong 511453, China}
\author{Mingrui Jing}
\affiliation{Thrust of Artificial Intelligence, Information Hub,\\
The Hong Kong University of Science and Technology (Guangzhou), Guangdong 511453, China}
\author{Jingu Xie}
\affiliation{Thrust of Artificial Intelligence, Information Hub,\\
The Hong Kong University of Science and Technology (Guangzhou), Guangdong 511453, China}
\author{Ranyiliu Chen}
\email{chenranyiliu@quantumsc.cn}
\affiliation{Quantum Science Center of Guangdong-Hong Kong-Macao Greater Bay Area, Shenzhen 518045, China}
\author{Xin Wang}
\email{felixxinwang@hkust-gz.edu.cn}
\affiliation{Thrust of Artificial Intelligence, Information Hub,\\
The Hong Kong University of Science and Technology (Guangzhou), Guangdong 511453, China}

\date{\today}

\begin{abstract}
Reversing unitary operations is a key task in quantum computing and quantum control. In this work, we introduce and develop the framework of shadow unitary inversion, a relaxed variant of unitary inversion in which the goal is to reproduce the action of the inverse unitary only at the level of the expectation value of a fixed observable. This task captures an operational setting in which only shadow information is required and allows query complexities significantly below those of full unitary inversion. We establish a dimension-dependent lower bound showing that any $t$-query scheme requires $t$ to scale at least linearly with the system dimension, with the constant determined by the spectral properties of the target observable. In the qubit case, we construct a deterministic three-query sequential protocol that achieves exact shadow inversion, and we provide a complete characterization of all admissible qubit channels satisfying the shadow constraint. Numerical evidence suggests that three queries are optimal. For higher-dimensional systems, we develop a semidefinite-programming formulation for optimizing shadow-inversion combs and introduce a representation-theoretic symmetry reduction that decomposes the problem into invariant blocks, substantially reducing the problem size. These results provide the first systematic study for shadow unitary inversion and establish its resource requirements and symmetry structure across dimensions.
\end{abstract}
                              
\maketitle

\section{Introduction}

In quantum information science, the ability to reverse an unknown unitary transformation represents a fundamental challenge that lies at the heart of quantum control, error correction, and information recovery~\cite{Nielson-chuang,Reverse-2019}. 
A unitary operation $U$ describes an isolated quantum evolution, and its inverse $U^{-1}$ corresponds to effectively undoing the associated dynamical process, thereby restoring the system to its previous state. {Moreover, the ability to efficiently perform the inversion of unitaries has been proven to boost certain quantum information processing tasks \cite{Tang:2025owa}.}
When $U$ is completely characterized, constructing $U^{-1}$ is straightforward through physical operations governed by the inverse Hamiltonian~\cite{QC-2002,Khaneja2001}. 
However, in many realistic situations, such as when dealing with black-box quantum devices or untrusted quantum channels, the exact description of $U$ is not available \emph{a priori}. Hence determining and implementing $U^{-1}$ without explicit knowledge of $U$ is of profound importance. 

{For exact and deterministic unitary reversion of unknown unitaries, the work~\cite{QURA} solved this problem by developing the quantum unitary reversal algorithm for quantum systems with arbitrary dimension $d$, requiring $\mathcal{O}(d^2)$ queries of the unknown unitary, which has been proven to be the optimal scaling \cite{Odake2024analytical,Chen2025tight}. For the qubit case, simpler algorithms have also been developed \cite{Rever-unknown,mo2025parameterized}. As this query complexity remains costly for near-term devices, several relaxed but operationally meaningful variants of unitary reversion have been explored. These include virtual unitary reversion \cite{zhu2024reversing}, implemented via nonphysical HPTP maps followed by classical post-processing, as well as probabilistic \cite{Storage-Retrieval,Prob-exact, Invert-general} and approximate \cite{Haah2023query,Chen2025tight} inversion schemes. Recently, the work \cite{mo2025} proposed methods of inversion of unitaries with structured Hamiltonians.}

{In this work we introduce a different relaxation of the unitary reversion problem, which we call \emph{shadow inversion} (Fig. \ref{fig:diagram}). Instead of requiring full reversal of the unitary, shadow inversion demands correctness only under a fixed measurement.} This relaxation is meaningful, as in many quantum information tasks only shadow information is relevant \cite{Shadow-Tomo}. The framework of shadow information was formalized in \cite{Shadow-Tomo} and later extended in the theory of classical shadows \cite{Huang2020}, which provides both theoretical and practical scalability. More recently, the concept has been further generalized to study information recoverability in noisy quantum environments \cite{recover-shadow}.
Naturally, one would ask what is the minimum query complexity for implementing the shadow inversion of a unitary under a given observable, and whether it can be fundamentally lower than that of the complete inversion process. 

This work establishes the first systematic framework for shadow unitary inversion and provides following three main contributions. First, we present an exact three-query construction for qubit systems and provide a complete structural characterization of all qubit channels that satisfy the shadow inversion condition. Second, we establish a general lower bound showing that any $t$-query shadow unitary inversion protocol in dimension $d$ must satisfy $t = O(d)$. Third, we develop a semidefinite-programming framework for arbitrary dimension and introduce a symmetry-based reduction that decomposes the problem into invariant blocks, substantially reducing the number of free variables. 

\begin{figure*}
    \centering
    \includegraphics[width=0.70\linewidth]{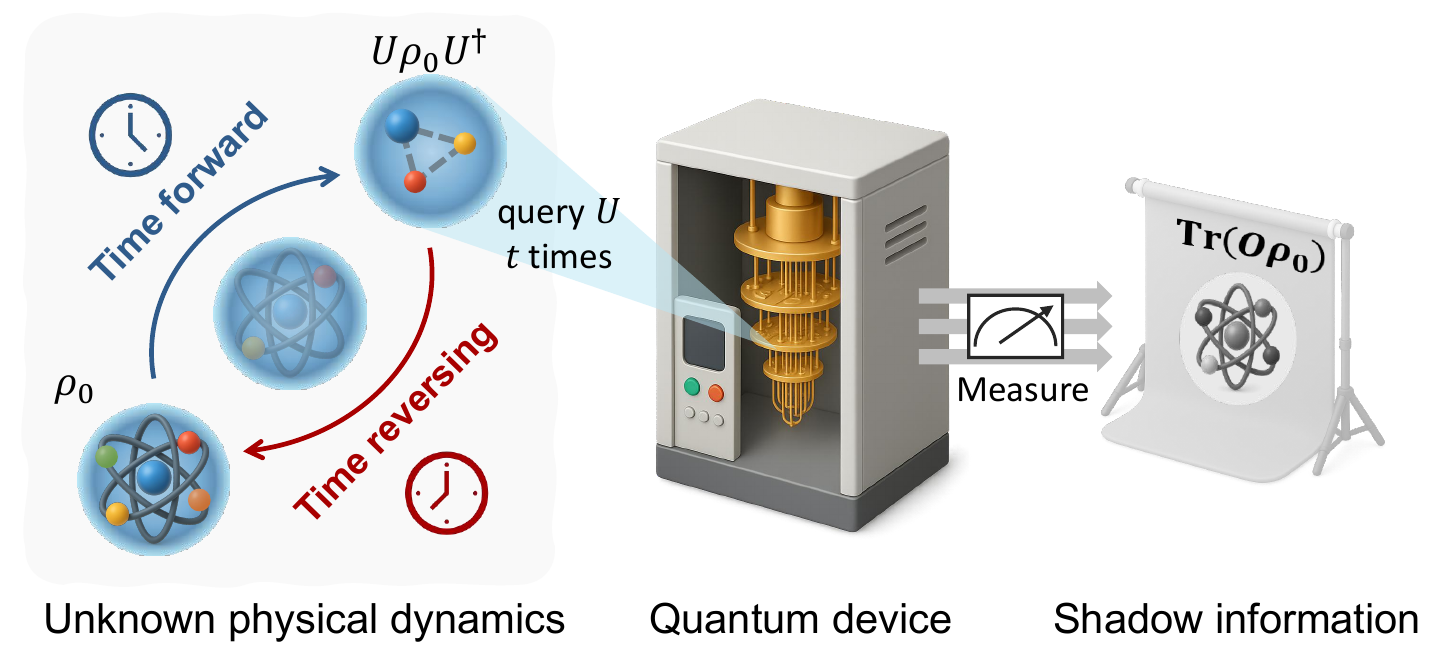}
    \caption{The general framework of shadow inversion problem with respect to the unknown physical dynamics. With the access of $t$-query of the unknown unitary evolution $U$, our quantum device target to reproduce the shadow information of the initial state $\rho_0$ with respect to the observable $O$.}
    \label{fig:diagram}
\end{figure*}

\section{General formulation and fundamental limit}
\label{prob-formu}
With the motivation and conceptual picture in place, we now introduce the general formalism that governs shadow unitary inversion, which aims to reproduce, for all input states, the expectation value of a fixed observable after the inverse of an unknown unitary evolution.

We consider a system of dimension $d$ and a fixed observable $O$. A circuit $\mathcal{N}$ is allowed to adaptively (sequentially) query an unknown unitary $U$ exactly $t$ times. Its action depends on the choices made by the circuit between queries and on any internal memory system. For every $U$, we denote the effective channel implemented by the circuit as $\mathcal{N}_U$.

\begin{definition}[Shadow unitary inversion]

{For $d,t \in \mathbb{N}^+$, let $O$ be a $d$-dimensional observable. A quantum circuit $\mathcal{N}$ is said to be a \emph{$t$-query shadow inversion of $d$-dimensional unitaries under $O$}, if for any unitary $U \in \opr{U}(d)$, $\mathcal{N}$ query $U$ exactly $t$ times, and the output circuit $\mathcal{N}_U$ satisfies}
\begin{equation}\label{shadow-inversion}
\operatorname{Tr}\!\bigl[\mathcal{N}_{U}(\rho)\, O \bigr]
    = \operatorname{Tr}\!\bigl[U^{\dagger} \rho U O \bigr]
\end{equation}
for all density operators $\rho\in\mathsf{D}(\mathbb{C}^d)$.
\end{definition}

An equivalent Heisenberg-picture statement of the shadow inversion condition is
\begin{equation*}
    \tr \bigl[\rho \mathcal{N}_{U}^{\dagger}(O)\bigr] = \tr \bigl[\rho UOU^{\dagger}\bigr],~\forall \rho\Longleftrightarrow{} \mathcal{N}_{U}^{\dagger}(O) = UOU^{\dagger}
\end{equation*}
which requires the adjoint channel to reproduce the action of $U$ on the single operator $O$ for all unitary $U$. This formulation allows us to quantify the resources required for exact shadow inversion in arbitrary dimension. The following theorem establishes a $O(d)$ lower bound on the number of queries needed by any protocol. The bound, not surprisingly, depends on the spectral structure of $O$.

\begin{restatable}{theorem}{lbthm}\label{lb-thm}
For any $d,t \in \mathbb{N+}$, if $\mathcal{N}$ is a $t$-query shadow inversion of $d$-dimensional unitaries under some fixed non-trivial $d$-dimensional Hermitian observable $O \not\propto I_{d}$, then we have the following lower-bound
    \begin{equation}\label{lb-query}
        t \geq d \cdot \max_{k = 1, \cdots, d}\Big\{ \frac{\max_{1 \leq i \leq d}|\lambda_{i} - \lambda_{k}|}{\sum_{i=1}^{d}|\lambda_{i} - \lambda_{k}|} \Big\} - 1
    \end{equation}
    where $\lambda_{1} \geq \cdots \geq \lambda_{d}$ are the eigenvalues of $O$.
\end{restatable}

    

The factor $\max_{k = 1, \cdots, d}\Big\{ \frac{\max_{1 \leq i \leq d}|\lambda_{i} - \lambda_{k}|}{\sum_{i=1}^{d}|\lambda_{i} - \lambda_{k}|} \Big\}$ reflects the level of concentration of the spectrum of $O$. For example, when $O$ corresponds to a $\{P,I-P\}$ measurement where $P$ is a rank-$k$ ($k\le d/2$) projection, it give a lower-bound $t\ge d/k-1$. Overall, it contrasts with the $\Theta(d^2)$ query requirement for full inversion, showing that shadow inversion admits substantially lower complexity while still exhibiting a meaningful information-theoretic limit. We refer the proof of theorem \ref{lb-thm} to Appendix \ref{sec:lb}. 

\section{Qubit shadow unitary inversion}

The general framework above establishes the constraints that every shadow inversion protocol must satisfy, together with a dimension-dependent lower bound on query complexity. We now focus on the qubit case, which is the simplest nontrivial setting and allows a complete characterization. In this regime we construct an explicit protocol achieving exact shadow inversion using three queries and then show that all exact solutions necessarily take a very restricted form. Numerical evidence further suggests that three queries are indeed optimal, distinguishing the qubit case as a fully understood instance of shadow inversion.

Consider the spectrum decomposition $O = V \Sigma V^\dagger$, where $\Sigma$ is real diagonal. In qubit case, simply appending $V$ at the end of the circuit reduces the problem to the computational basis scenario, therefore it suffices to study the case $O = Z$, the Pauli-Z operator. Throughout this section we reduce the problem to reproducing the statistics of $U^\dagger\rho U$ with respect to the $Z$ measurement.

Our first result establishes that three queries always suffice.

\begin{restatable}{theorem}{maintheorem}\label{main-thm}
{For any fixed $2$-dimensional observable $O$, there exists a $3$-query shadow inversion of $2$-dimensional unitaries under $O$.} 
\end{restatable}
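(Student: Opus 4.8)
The plan is to first normalize the problem, reformulate it in the Bloch picture, and then exhibit an explicit sequential circuit.

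\textbf{Normalization.} Rewriting \eqref{shadow-inversion} in the Heisenberg picture via $\operatorname{Tr}[U^{\dagger}\rho U\,O]=\operatorname{Tr}[\rho\,UOU^{\dagger}]$, the goal is a circuit with $\mathcal{N}_U^{\dagger}(O)=UOU^{\dagger}$; since $\mathcal{N}_U^{\dagger}$ is unital, only the traceless part of $O$ is at stake, so using the spectral decomposition $O=\alpha I+\beta\,WZW^{\dagger}$ (with $\alpha,\beta\in\mathbb{R}$ and $W\in\opr{U}(2)$ fixed) it suffices to handle $O=Z$: indeed, if $\mathcal{M}$ is a $3$-query shadow inversion under $Z$, then $\rho\mapsto W\,\mathcal{M}_{W^{\dagger}UW}(W^{\dagger}\rho W)\,W^{\dagger}$ is one under $O$, each query to $W^{\dagger}UW$ being one query to $U$ dressed by the fixed $W$. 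Since $U^{\dagger}(\cdot)U$ and $UZU^{\dagger}$ ignore the global phase of $U$, I may also assume $U\in\mathrm{SU}(2)$.

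\textbf{Bloch reformulation.} Writing $\mathcal{N}_U$ as an affine Bloch map $\vec r\mapsto M_U\vec r+\vec c_U$ and letting $R_U\in\mathrm{SO}(3)$ be the rotation with $U\sigma_iU^{\dagger}=\sum_j(R_U)_{ji}\sigma_j$, a short calculation shows that \eqref{shadow-inversion} with $O=Z$ is equivalent to the \emph{single} row identity $(M_U)_{3,\bullet}=(R_U^{-1})_{3,\bullet}$ together with $(\vec c_U)_3=0$, with all other entries of $(M_U,\vec c_U)$ free. Exact inversion would instead force $M_U=R_U^{-1}$ and $\vec c_U=0$ entirely; the gap between ``one row'' and ``everything'' is the slack I intend to trade for one query.

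\textbf{The circuit.} I would present, as a quantikz diagram, a sequential circuit of teleportation-and-correction type. The first use of $U$ turns a fixed singlet into a ``$U^{\dagger}$-port'' via $(U\ox I)|\Phi^{-}\rangle=(I\ox U^{\dagger})|\Phi^{-}\rangle$ (exact on $\mathrm{SU}(2)$); the input is coherently teleported into the port, so that the $k$-th flag branch carries $U^{\dagger}P_k\rho\,P_kU$ for the Bell byproduct Pauli $P_k$; the two remaining uses of $U$, interleaved with flag-controlled Cliffords, implement branch-dependent corrections, after which the flag and ancillas are traced out. The point is that, by the Bloch reformulation, these corrections only have to restore the $z$-component: on branch $k$ the residual operator is the Pauli conjugate $P_k(UZU^{\dagger})P_k$ of the target, and re-aligning just its $Z$-part --- rather than undoing $P_k$ as a full channel --- is achievable with two more uses of $U$ and fixed post-processing.

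\textbf{Verification and main obstacle.} I would then compute the composed circuit's Heisenberg action on $Z$ branch-by-branch, check that each branch contributes $\operatorname{Tr}[\rho\,UZU^{\dagger}]$ with weights summing to one, and verify that the overall map is completely positive and trace preserving; the normalization step then returns the statement for arbitrary $O$. The main obstacle is precisely the design of the two correction rounds: after one teleportation the error $P_k(UZU^{\dagger})P_k$ generically points off the target axis, and undoing it looks like a second inversion, so the real work is to exploit the one-row relaxation so that the correction fits into the two remaining $U$-slots, and then to confirm complete positivity of the resulting comb. (As an alternative to a hand-built circuit, the existence of such a comb can be recast as a finite semidefinite feasibility problem, in the spirit of the qudit framework developed later in the paper, and settled by displaying a feasible point; but an explicit circuit is more informative, and since the theorem only asserts existence, no matching lower bound is required here.)
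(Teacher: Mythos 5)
Your normalization step is sound and matches the paper's own remark, and the Bloch-picture reformulation --- that only one row of the affine Bloch map is constrained, namely $(M_U)_{3,\bullet}=(R_U^{-1})_{3,\bullet}$ and $(\vec c_U)_3=0$ --- is a correct and useful observation (it is essentially the content of Proposition~\ref{form-des} phrased differently). But the proof has a genuine gap exactly where you flag it: the circuit is never actually constructed. You describe a teleportation-style first query that lands, in branch $k$, on $U^{\dagger}P_k\rho P_kU$, and then assert that ``re-aligning just its $Z$-part \ldots\ is achievable with two more uses of $U$ and fixed post-processing.'' That assertion is the entire theorem, and you supply no Kraus operators, no comb, no SDP feasible point, and no verification that a trace-preserving, completely positive map exists whose restriction to each branch restores the $z$-row. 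Your closing paragraph concedes as much by calling this ``the real work.'' As written, what you have is a plausible-sounding strategy plus a correct reduction, not a proof of existence.

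It is also worth noting that the strategy you sketch is \emph{not} the one the paper uses, and the difference matters. The paper does not teleport the input through a $U^{\dagger}$-port and then correct Pauli byproducts. Its gate $V_0$ prepares $\tfrac12\sum_{j}|j\rangle\ox UP_j|\psi\rangle$ (Pauli applied \emph{before} $U$, no Bell pair), and $V_1$ then antisymmetrizes pairs $(j,k)$ to manufacture the six operators $UP_jU^{\dagger}\pm P_j$ in one shot, exploiting the $\mathrm{SU}(2)$ identity behind $UP_jUP_k-UP_kUP_j$. The third query and the fixed unitaries $V_2,V_3$ then collapse the ancilla to $|0\rangle$ and route the residual Pauli weights into controlled-$Z$ phases so that the $Z$-expectation works out exactly. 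In your teleportation picture, by contrast, the branch errors $P_k(UZU^{\dagger})P_k$ point generically off-axis, and it is not clear (and you do not show) that two further queries suffice to re-align all four branches simultaneously while keeping the overall comb CP and TP. If you want to salvage a teleportation-style argument, you would need to exhibit the two correction channels explicitly, or else take the escape hatch you yourself mention and produce a feasible $C$ for the SDP \eqref{SDP-setting}; either would close the gap, but neither is present.
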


The construction arises from an optimization of quantum circuits with oracle accesses to $U$ in the language of quantum combs and link product \cite{Chiribella2008quantum}, subject to the shadow inversion constraint. Specifically, we consider the semidefinite program
\begin{equation}\label{SDP-setting}
\begin{aligned}
    \min_{C} \int_{\opr{U}(d)} \|\tr_{2}\big[(C \ast |U\rangle\!\rangle\!\langle\!\langle U|^{\otimes t})^{\intercal}&(I_d \otimes O^{\intercal})\big]\\ &- UOU^{\dagger}\|_Fd\mu_{H},        
\end{aligned}
\end{equation}
where the minimization is over all $t$-slot quantum combs $C$. Here $\mu_{H}$ is the Haar measure on the Unitary group $\opr{U}(d)$ and $\|\cdot\|_F$ is Frobenius norm. A solution achieving zero objective value of \eqref{SDP-setting} provides a comb that implements exact shadow inversion. For $t = 3$ we obtain a feasible solution, from which an explicit circuit can be extracted. We defer the precise construction as well as the proof of correctness to Appendix \ref{proof-of-thm}. 

Numerical evidence (Table \ref{tab:seq-vs-paral}) indicates that the lower-bound of $t$ for $t$-query circuit achieving shadow inversion of $2$-dimensional unitaries under fixed $2$-dimensional observable $O$ is $3$ which suggests the construction in Theorem~\ref{main-thm} may be optimal. 
The code for this optimization problem is available at ref \cite{ZhenGuocheng_shadow_unitary_inversion_numerics}. We also remark that, parallel quantum combs may not be able to achieve shadow qubit-unitary inversion within 3 slots, constituting an example where adaptive processing is essential.

\begin{table*}
\caption{Comparison of sequential and parallel quantum combs for shadow inversion of $2$-dimensional unitary under any fixed $2$-dimensional observable. Reported values are the solutions of SDP problem \eqref{SDP-setting}. The Haar integral is approximated via Monte Carlo with 2000 uniformly sampled unitary matrices.}
\setlength{\tabcolsep}{10pt} 
\renewcommand{\arraystretch}{1.3} 
\begin{tabular}{c|ccc|ccc}
\hline
 & \multicolumn{3}{c|}{Sequential} & \multicolumn{3}{c}{Parallel} \\
\cline{2-7}
$t$ & 1 & 2 & 3 & 1 & 2 & 3 \\
\hline
min values of \eqref{SDP-setting} & 0.7058 & 0.1894 & 0 & 0.7058 & 0.4707 & 0.3536 \\
\hline
\end{tabular}
\label{tab:seq-vs-paral}
\end{table*}



To further understand the mechanism behind the three-query threshold, we derive a complete characterization of all channels that can constitute exact shadow inversion under $Z$. This result identifies the precise operator structure available to any $t$-query protocol, while regardless of its internal comb structure.

\begin{restatable}{proposition}{mainprop}\label{form-des}

For any $t \in \mathbb{N}^+$, $\mathcal{N}$ is a $t$-query shadow inversion of $2$-dimensional unitaries under Pauli-$Z$ if and only if
\begin{equation}\label{eq:form-of-CPTP}
\begin{aligned}
    \mathcal{N}_{U}(\rho) = p(U) U^\dagger \rho U &+ (1 - p(U)) Z U^\dagger \rho U Z\\ &+ r(U) (U^\dagger \rho U Z - Z U^\dagger \rho U)
\end{aligned}
\end{equation}
holds for any $U \in \opr{U}(2)$ and density operator $\rho \in \mathsf{D}(\mathbb{C}^{2})$, where $p,r$ are functions of $U$ satisfying the following:

\begin{align*}
&0 \leq p(U) \leq 1,  \\
&\operatorname{Re}(r(U)) = 0,  \\
&|r(U)|^2 \leq p(U)(1 - p(U)). 
\end{align*}
\end{restatable}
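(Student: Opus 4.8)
The plan is to prove both implications pointwise in $U$, reducing everything to a characterization of the qubit channels $\Gamma$ with $\Gamma^{\dagger}(Z)=Z$ in the Heisenberg picture. Only the fact that $\mathcal{N}_U$ is CPTP for each fixed $U$ will be used, so neither $t$ nor the comb structure enters --- consistent with the $t$-independent form of the conclusion. The implication ``$\Leftarrow$'' is a one-line check: if $\mathcal{N}_U$ has the form \eqref{eq:form-of-CPTP}, then $Z^{2}=I$ gives $\operatorname{Tr}[ZU^{\dagger}\rho U Z\cdot Z]=\operatorname{Tr}[U^{\dagger}\rho U Z]$ and $\operatorname{Tr}[(U^{\dagger}\rho U Z-ZU^{\dagger}\rho U)Z]=0$, so $\operatorname{Tr}[\mathcal{N}_U(\rho)Z]$ collapses to $\bigl(p(U)+(1-p(U))\bigr)\operatorname{Tr}[U^{\dagger}\rho U Z]=\operatorname{Tr}[U^{\dagger}\rho U Z]$, which is exactly \eqref{shadow-inversion}; since $\mathcal{N}$ is by hypothesis a $t$-slot comb, it is then a shadow inversion. (The three constraints on $p,r$ are precisely what is needed to make the right-hand side of \eqref{eq:form-of-CPTP} a legitimate channel, as one reads off from the Choi matrix computed below, but that is not logically required here.)

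For the substantive implication ``$\Rightarrow$'', fix $U$ and set $\Gamma(\sigma):=\mathcal{N}_U(U\sigma U^{\dagger})$, which is CPTP and satisfies $\mathcal{N}_U(\rho)=\Gamma(U^{\dagger}\rho U)$. Condition \eqref{shadow-inversion} then reads $\operatorname{Tr}[\Gamma(\sigma)Z]=\operatorname{Tr}[\sigma Z]$ for all $\sigma$, that is $\Gamma^{\dagger}(Z)=Z$; together with unitality $\Gamma^{\dagger}(I)=I$ this gives $\Gamma^{\dagger}(\ketbra00)=\ketbra00$ and $\Gamma^{\dagger}(\ketbra11)=\ketbra11$, hence $\langle i|\Gamma(\sigma)|i\rangle=\operatorname{Tr}[\sigma\,\Gamma^{\dagger}(\ketbra ii)]=\langle i|\sigma|i\rangle$, so $\Gamma$ preserves the computational diagonal. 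In particular $\Gamma(\ketbra00)$ is a state whose $(1,1)$-entry vanishes, so $\Gamma(\ketbra00)=\ketbra00$, and similarly $\Gamma(\ketbra11)=\ketbra11$. Therefore the Choi matrix $J=\sum_{i,j}\ketbra ij\ox\Gamma(\ketbra ij)$ satisfies $\langle01|J|01\rangle=\langle1|\Gamma(\ketbra00)|1\rangle=0$ and $\langle10|J|10\rangle=\langle0|\Gamma(\ketbra11)|0\rangle=0$; since $J\succeq0$, the rows and columns indexed by $|01\rangle$ and $|10\rangle$ vanish, so $J$ is supported on $\{|00\rangle,|11\rangle\}$ with $2\times2$ block $\left(\begin{smallmatrix}1 & b\\ \bar b & 1\end{smallmatrix}\right)$, where $b:=\langle0|\Gamma(\ketbra01)|1\rangle$, and this block is positive semidefinite iff $|b|\le1$. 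Reading $\Gamma$ off from $J$ yields the phase-damping-type channel $\Gamma(\sigma)=\langle0|\sigma|0\rangle\,\ketbra00+\langle1|\sigma|1\rangle\,\ketbra11+b\langle0|\sigma|1\rangle\,\ketbra01+\bar b\langle1|\sigma|0\rangle\,\ketbra10$ with $b=b(U)$ and $|b(U)|\le1$.

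It remains to cast this in the form \eqref{eq:form-of-CPTP}. Set $p(U):=\tfrac12\bigl(1+\operatorname{Re}b(U)\bigr)$ and $r(U):=-\tfrac i2\operatorname{Im}b(U)$. A short entrywise computation shows $p\sigma+(1-p)Z\sigma Z+r(\sigma Z-Z\sigma)$ has diagonal entries $\langle i|\sigma|i\rangle$ and off-diagonal entries $(2p-1-2r)\langle0|\sigma|1\rangle$ and $(2p-1+2r)\langle1|\sigma|0\rangle$, while by construction $2p-1-2r=b$ and $2p-1+2r=\bar b$; substituting $\sigma=U^{\dagger}\rho U$ reproduces \eqref{eq:form-of-CPTP}. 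Finally $\operatorname{Re}(r(U))=0$ by construction, and $|r(U)|^{2}=\tfrac14(\operatorname{Im}b)^{2}\le\tfrac14\bigl(1-(\operatorname{Re}b)^{2}\bigr)=p(U)(1-p(U))$ because $|b|\le1$, which in turn forces $0\le p(U)\le1$. The main obstacle is the step inside ``$\Rightarrow$'' upgrading ``$\Gamma$ preserves the diagonal and is completely positive'' to ``$\Gamma$ merely rescales the off-diagonal by one scalar of modulus at most one'': complete positivity, not merely positivity, is essential there, and the Choi-matrix argument above --- which hinges on the fact that a positive semidefinite matrix with a zero diagonal entry has the entire corresponding row and column equal to zero --- is the cleanest way to exploit it.
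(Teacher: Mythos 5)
Your proof is correct, and it takes a genuinely different route from the paper's. Both arguments begin with the same reduction $\Gamma = \mathcal{C}_U(\sigma) := \mathcal{N}_U(U\sigma U^\dagger)$ and the observation that shadow inversion is equivalent to $\Gamma^\dagger(Z) = Z$ (equivalently $\Gamma^\dagger(\ketbra00)=\ketbra00$, $\Gamma^\dagger(\ketbra11)=\ketbra11$). From there they diverge. The paper picks a Kraus decomposition $\Gamma(\sigma)=\sum_k A_k\sigma A_k^\dagger$, writes the two operator constraints $\sum_k A_k^\dagger A_k = I$ and $\sum_k A_k^\dagger Z A_k = Z$ out entrywise in eight scalar equations, and linearly combines them to force every $A_k$ to be diagonal; it then expands each $A_k$ in the $\{I,Z\}$ basis and reads off $p,q,r$ with Cauchy--Schwarz supplying $|r|^2\le pq$. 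Your argument instead works directly in the Choi picture: diagonal preservation plus positivity of $\Gamma(\ketbra{i}{i})$ gives $\Gamma(\ketbra{i}{i})=\ketbra{i}{i}$, and then the single structural fact that a PSD matrix with a vanishing diagonal entry has the whole corresponding row and column zero collapses the Choi matrix to a $2\times2$ block supported on $\{\ket{00},\ket{11}\}$, identifying $\Gamma$ as a phase-damping channel with coherence parameter $b$, $|b|\le1$. The trade-off: the Kraus route is elementary but commits to a decomposition and requires bookkeeping eight scalar equations; the Choi route is shorter and decomposition-free, makes the exact role of complete positivity (as opposed to mere positivity) transparent, and replaces Cauchy--Schwarz with the trivial PSD condition $|b|\le1$ on a $2\times2$ block. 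Your back-end computation ($p=\tfrac12(1+\operatorname{Re}b)$, $r=-\tfrac{i}{2}\operatorname{Im}b$, the matching of off-diagonal coefficients $2p-1\mp 2r$ with $b,\bar b$, and the equivalence $|r|^2\le p(1-p)\Leftrightarrow|b|\le1$) checks out. You also explicitly dispatch the ``$\Leftarrow$'' direction, which the paper leaves implicit; your reading --- that both directions take the $t$-query comb structure as ambient hypothesis so that only the trace identity is at stake --- is the correct one.
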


Although a full analytical proof of the lower bound for qubit case remains open, Proposition \ref{form-des} offers a step in this direction and may ultimately lead to either a proof or a counterexample.

\section{Symmetry-based SDP simplification for general dimension}

The qubit case provides a complete and tractable example of shadow inversion. For higher dimensions, however, the size of the optimization problem increases rapidly.

but higher-dimensional systems exhibit significantly richer structure. In a $t$-slot comb the size of the Choi matrix $C$ in the optimization \eqref{SDP-setting} is $d^{2(t+1)} \times d^{2(t+1)}$, resulting in a total variable number of $d^{4t+4}$. Direct semidefinite optimization is therefore computationally infeasible for even moderate values of $d$ and $t$. To address this, we introduce a symmetry-based reduction that substantially decreases the number of free parameters.




We show that after an appropriate permutation of tensor factors, without loss of generality one can assume that Choi operators of feasible shadow inversion combs commute with the group action generated by $U^{\otimes(t+1)}$ and by actions of $O'\cap$ on the system registers:
\begin{equation}\label{symmerty-prop}
\begin{aligned}
    &[P_{\pi}CP_{\pi}, U^{\otimes t+1} \otimes V^{\otimes t} \otimes W] = 0\\
    &\forall U \in \opr{U}(d), V,W \in O'\cap\opr{U}(d),
\end{aligned}
\end{equation}
where $O'$ is the commutant of the observable $O$ and $P_{\pi}$ is some fixed permutation (see Corollary \ref{cor-sym} for details).

An important observation about \eqref{symmerty-prop} is that, in the traditional task of deterministic and exact unitary inversion \cite{Invert-general}, the Choi operator enjoys a general unitary covariance. In our exact shadow inversion task, by contrast, the Choi operator is not only generally unitary-covariant, but also exhibits an additional unitary covariance associated with the observable~$O$. This structure is similar to a related work that studies the task with a fixed input
state~$\rho$ \cite{murao2025higherorderquantumcomputingknown}: in that setting, the unitary covariance of the Choi operator likewise combines a general unitary covariance with an extra covariance determined by~$\rho$.

Since operators commuting with a group representation decompose into blocks corresponding to irreducible components, we can apply standard representation-theoretic techniques to construct a basis in which all constraints become block diagonal. Schur–Weyl duality then allow us to separate the Hilbert space into invariant subspaces labeled by irreducible representations of the intersection of $U(d)$ and the commutant $O'$. The Choi operator then decomposes into a direct sum of blocks, and each block carries a small number of independent degrees of freedom. The SDP constraints restrict these blocks independently, enabling the full optimization to be replaced by several smaller SDPs running in parallel.

This leads to a significant reduction in the combinatorial size of the optimization problem. Explicit counting of the block structure yields the following bound on the number of free variables.

\begin{restatable}{theorem}{upperboundvariables}\label{upper-bound-variables}
For any $d,t \in \mathbb{N}^{+}$, let $O$ be some fixed $d$-dimensional observable and $N_{O,t}$ be the number of variables in the simplified SDP for general $t$-query shadow inversion of $d$-dimensional unitary under $O$. Then for any $O$, it holds that
    \begin{equation*}
        N_{O,t} \leq (t+1)! \, t! \, d^{t+1}
    \end{equation*}
\end{restatable}

This reduction replaces the original exponential factor $d^{3t+3}$ with $(t+1)!t!$, a poly-factorial one in $t$. In practice, this structure enables numerical exploration of shadow inversion protocols at dimensionalities that would otherwise be far beyond reach. The full details of the reduction procedure and the construction of the symmetry-adapted basis are provided in Appendix \ref{subsec:simplification}.

\section{Concluding remarks}

The shadow inversion technique provides a direct operational method for estimating the quantity $\tr[U^\dagger\rho UO]$ using only oracle queries to the unknown unitary $U$. An alternative route, explored in earlier work \cite{mo2025parameterized}, is to virtually construct an inversion comb and then recover the same expectation value through classical post-processing of measurement statistics. This virtual strategy can reproduce many properties of an inverse transformation while avoiding the need to implement a physical reversal. 

We show how the statistical performance of these two approaches compares when both are used to estimate the same observable expectation. Specifically, for the qubit case, using a total of $N$ queries to the unknown unitary, an estimator derived from shadow inversion yields an unbiased estimate with variance $\frac{3}{N}\left(1-\tr\!\left[U^{\dagger}\rho UZ\right]^2\right)$. In contrast, the virtual-comb estimator exhibits variance at least $\frac{1}{N}\left(9-\tr\!\left[U^{\dagger}\rho UZ\right]^2\right)$, which is strictly larger for all inputs. This comparison indicates that shadow inversion  provides a more query-efficient route to observable estimation. We refer the detailed calculation to Appendix \ref{app:virtualcomb}.

A related but operationally different line of work, developed concurrently in Ref.~\cite{murao2025higherorderquantumcomputingknown}, studies a dual version of the shadow inversion task in which the goal is to reverse an unknown unitary with respect to a fixed input state, rather than a fixed observable. This scenario leads to a mathematically analogous contraction of the unitary action, but it differs in the operational setting and in the structure of the corresponding constraints. Their analysis also yields a three-query construction for qubit unitaries. Although both approaches involve partial information about the unitary and share similar structural features, a formal duality between the two formulations has not yet been established. Understanding whether these tasks can be related through an explicit dual transformation or a common variational framework remains an interesting direction for future work, motivated by the goal of fully understanding the fundamental limits of reversing quantum dynamics.

\section*{Acknowledgment}
This work was partially supported by the National Key R\&D Program of China (Grant No.~2024YFB4504004), the Guangdong Provincial Quantum Science Strategic Initiative (Grant No.~GDZX2403008, GDZX2403001), the Guangdong Provincial Key Lab of Integrated Communication, Sensing and Computation for Ubiquitous Internet of Things (Grant No.~2023B1212010007), the Quantum Science Center of Guangdong-Hong Kong-Macao Greater Bay Area, and the Education Bureau of Guangzhou Municipality.

\bibliography{ref}

\appendix
\newpage

\vspace{2cm}
\onecolumngrid
\vspace{2cm}

\renewcommand{\theequation}{S\arabic{equation}}
\renewcommand{\theproposition}{S\arabic{proposition}}
\renewcommand{\thedefinition}{S\arabic{definition}}
\renewcommand{\thelemma}{S\arabic{lemma}}
\renewcommand{\thetheorem}{S\arabic{theorem}}
\renewcommand{\thefigure}{S\arabic{figure}}
\setcounter{equation}{0}
\setcounter{table}{0}
\setcounter{section}{0}
\setcounter{proposition}{0}
\setcounter{lemma}{0}
\setcounter{theorem}{0}
\setcounter{definition}{0}
\setcounter{figure}{0}
\setcounter{subsection}{0}

\begin{center}
\large{\textbf{Supplemental Material for
`Structure, Optimality, and Symmetry in Shadow Unitary Inversion'}}
\end{center}

\section{Measure theory and representation theory preliminaries}\label{sec:preliminary}

\subsection*{Measure Theory}
In the study of quantum information theory, measure-theoretic tools play a central role. In particular, when dealing with objects that carry a group structure, one often seeks to define a ``uniform'' distribution that captures the underlying symmetry. Haar measure provides precisely such a natural 
framework. Let us recall its properties.
\begin{definition}[Haar measure]\label{haar-basic}
Let $G$ be a locally compact Hausdorff topological group. The Haar measure on $G$ is the unique regular Borel probability measure $\mu_{H}$ on $G$ such that
\begin{equation*}
\mu_{H}(gE) \;=\; \mu_{H}(E) \;=\; \mu_{H}(Eg)
\end{equation*}
for every Borel set $E\subseteq G$ and every $g\in G$.
\end{definition}
Moreover, in the more general setting of measure spaces, one requires classical integration theorems to handle integrability on product spaces and the exchange of the order of integration. In this context, Tonelli’s theorem (together with Fubini’s theorem) becomes indispensable.

Let $(X,\mathcal A,\mu)$ be a measure space, we call it a $\sigma$-finite measure space if the set $X$ can be covered with at most countably many measurable sets with finite measure, that is there are sets $A_n \in \mathcal{A}$ such that
\begin{equation*}
    \bigcup_{n \in \mathbb{N}} A_{n} = X, \quad \mu(A_n) < \infty \,\, \text{for all $n \in \mathbb{N}$}. 
\end{equation*}
Then we will introduce a very useful lemma which is a successor of Fubini's theorem \cite{Durrett_2019}.
\begin{lemma}[Tonelli’s theorem]
Let $(X,\mathcal A,\mu)$ and $(Y,\mathcal B,\nu)$ be $\sigma$-finite measure spaces, and let 
\begin{equation*}
f \colon X \times Y \;\longrightarrow\; [0,\infty]
\end{equation*}
be a non-negative measurable function.  Then
\begin{equation*}
    \begin{split}
    \int_X \Bigl(\int_Y f(x,y)\,d\nu(y)\Bigr)\,d\mu(x)
    \;&=\;
    \int_Y \Bigl(\int_X f(x,y)\,d\mu(x)\Bigr)\,d\nu(y)\\
    &=\;
    \int_{X\times Y} f(x,y)\,d(\mu\times\nu)(x,y)
    \end{split}
\end{equation*}
where $\mu\times\nu$ represents the product measure on $X \times Y$.
\end{lemma}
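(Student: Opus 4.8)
The plan is to establish the identity by the classical three-stage bootstrap of measure theory: prove it first for indicator functions of sets in the product $\sigma$-algebra $\mathcal{A}\otimes\mathcal{B}$, extend by linearity to non-negative simple functions, and then pass to arbitrary non-negative measurable $f$ through the monotone convergence theorem. Throughout, the role of $\sigma$-finiteness is to reduce the substantive step to the case of finite measures by exhausting $X$ and $Y$ by increasing sequences of finite-measure sets.

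\emph{Stage 1 (indicators).} For $E\in\mathcal{A}\otimes\mathcal{B}$ write $E_x:=\{y\in Y:(x,y)\in E\}$ and $E^y:=\{x\in X:(x,y)\in E\}$ for its sections. The family of $E$ whose $x$-sections all lie in $\mathcal{B}$ and whose $y$-sections all lie in $\mathcal{A}$ is a $\sigma$-algebra containing every measurable rectangle, hence equals $\mathcal{A}\otimes\mathcal{B}$, so sections are always measurable and $\nu(E_x)$, $\mu(E^y)$ make sense. The heart of the proof is to show that $x\mapsto\nu(E_x)$ is $\mathcal{A}$-measurable, $y\mapsto\mu(E^y)$ is $\mathcal{B}$-measurable, and
\begin{equation*}
\int_X \nu(E_x)\,d\mu(x) \;=\; \int_Y \mu(E^y)\,d\nu(y) \;=\; (\mu\times\nu)(E).
\end{equation*}
Assuming first that $\mu(X),\nu(Y)<\infty$, let $\mathcal{L}$ be the collection of $E$ for which this assertion holds. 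For a rectangle $A\times B$ one has $\nu\big((A\times B)_x\big)=\mathbf{1}_A(x)\,\nu(B)$, which is measurable with $\mu$-integral $\mu(A)\,\nu(B)=(\mu\times\nu)(A\times B)$, so $\mathcal{L}$ contains the $\pi$-system of measurable rectangles; finiteness of the measures allows one to check that $\mathcal{L}$ is closed under proper differences and under increasing countable unions (the latter by monotone convergence applied to the section functions), so $\mathcal{L}$ is a $\lambda$-system, and Dynkin's $\pi$-$\lambda$ theorem forces $\mathcal{L}=\mathcal{A}\otimes\mathcal{B}$. To remove the finiteness hypothesis, write $X=\bigcup_n X_n$ and $Y=\bigcup_n Y_n$ with $X_n\uparrow X$, $Y_n\uparrow Y$ of finite measure, apply the finite case to $E\cap(X_n\times Y_n)$, and let $n\to\infty$ using monotone convergence. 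This is precisely the asserted identity for $f=\mathbf{1}_E$.

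\emph{Stages 2 and 3 (simple, then general $f$).} A non-negative simple function is a finite non-negative linear combination $f=\sum_k c_k\mathbf{1}_{E_k}$ with $E_k\in\mathcal{A}\otimes\mathcal{B}$; linearity of each of the three integrals together with Stage 1 applied termwise gives the identity for $f$ and shows that $x\mapsto\int_Y f(x,y)\,d\nu(y)$ is $\mathcal{A}$-measurable. For general non-negative measurable $f$, take the standard increasing staircase approximation $f_n\uparrow f$ by simple functions. For fixed $x$, monotone convergence on $Y$ gives $\int_Y f_n(x,y)\,d\nu(y)\uparrow\int_Y f(x,y)\,d\nu(y)$, so the limit is $\mathcal{A}$-measurable as an increasing limit of the functions from Stage 2, and a second application of monotone convergence on $X$ gives $\int_X\!\int_Y f_n\,d\nu\,d\mu\uparrow\int_X\!\int_Y f\,d\nu\,d\mu$; the same argument with $X$ and $Y$ interchanged handles the reversed iterated integral, and monotone convergence on $X\times Y$ handles $\int_{X\times Y}f_n\,d(\mu\times\nu)\uparrow\int_{X\times Y}f\,d(\mu\times\nu)$. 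Since the three quantities coincide for every $f_n$ by Stage 2, so do their monotone limits, which is the claim.

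The main obstacle is Stage 1, and within it the measurability of the section-measure functions together with the verification that $\mathcal{L}$ is a $\lambda$-system. This is exactly where $\sigma$-finiteness is essential: the step showing $\mathcal{L}$ is stable under proper differences uses that $\nu(E_x)$ is finite (continuity from above), which can fail outright for infinite measures and is recovered only after the truncation to $X_n\times Y_n$. Everything after Stage 1 is a routine monotone-convergence bootstrap carrying no genuine difficulty.
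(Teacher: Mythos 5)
Your argument is correct: the three-stage bootstrap (sections and indicators via Dynkin's $\pi$-$\lambda$ theorem for finite measures, $\sigma$-finite truncation, then linearity and monotone convergence) is the classical proof of Tonelli's theorem, and your handling of where $\sigma$-finiteness enters is accurate. The paper itself does not prove this lemma at all — it is stated as a preliminary with a citation to a standard reference — so your proof simply supplies the standard textbook argument that the cited source uses, and there is no substantive difference to report.
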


\subsection*{Group Representation Theory}\label{g-r-t}
 In this section, we will introduce some results about group representation theory which will be useful for our following discussion. We begin with recalling the definition of finite group representation and irreducible representation.
 
 \begin{definition}[Finite group representation]\label{finite group representation}
    Let \( (G, \cdot) \) be a finite group and $V$ be a finite dimensional vector space over field $\mathbb{F}$. A linear representation of $G$ is a group homomorphism $\rho: G \to \operatorname{GL}(V) = \operatorname{Aut}(V)$ where $\operatorname{GL}(V)$ is the general linear group of $V$ and $\operatorname{Aut}(V)$ is the automorphism group.
    
 \end{definition}
 
 \begin{definition}[Irreducible representation]\label{irreducible representation}
    A representation \( R: G \to \operatorname{GL}(V) \) is called \textbf{irreducible} if there is no non-zero subspace \( W \subsetneq V \) such that \( R(g)W \subseteq W \) for all \( g \in G \). That is, the representation has no non-trivial invariant subspaces.
 \end{definition}

 We now turn to the two representations relevant to the Schur-Weyl duality. Recall that the symmetric
 group $S_n$ of degree $n$ is the group of all permutations of $n$ objects. Then we have the following natural representation of the symmetric group on the space $(\mathbb{C}^d)^{\otimes n}$:
 \begin{equation*}
    P_{\pi}\,|i_1\rangle \otimes |i_2\rangle \otimes \cdots \otimes |i_n\rangle
    := |i_{\pi^{-1}(1)}\rangle \otimes |i_{\pi^{-1}(2)}\rangle \otimes \cdots \otimes |i_{\pi^{-1}(n)}\rangle,
    \label{eq:symmetric_representation}
 \end{equation*}
 where $\pi \in S_n$ is a permutation and $\pi(i)$ is the label describing the action of $\pi$ on label $i$. For example, if we are considering $S_3$ and the permutation $\pi = (12)(3)$, then
 \begin{equation*}
 P_{\pi}\,|i_1, i_2, i_3\rangle = |i_2, i_1, i_3\rangle.
 \end{equation*}
 Next we turn to the representation of the unitary group. Let $\opr{U}(d)$ denote the group of $d \times d$ unitary operators. Then there is also a natural representation of $U_d$ on the space $(\mathbb{C}^d)^{\otimes n}$ given by
 \begin{equation*}
    U^{\otimes n}\,|i_1\rangle \otimes |i_2\rangle \otimes \cdots \otimes |i_n\rangle
    := U|i_1\rangle \otimes U|i_2\rangle \otimes \cdots \otimes U|i_n\rangle
 \end{equation*}
 for any $U \in U_d$. In representation theory, combinatorial structures such as Young diagrams and their associated tableaux serve as fundamental tools for describing and analyzing representations of symmetric and general linear groups. Hence let us introduce their definitions and properties.
 \begin{definition}[Partition of a natural number]
    Let $n \in \mathbb{N}$, and let $\lambda = (\lambda_{1},\cdots,\lambda_{k})$ be such that
    \begin{equation*}
        \sum_{i=1}^{k} \lambda_{i} = n, \quad and \quad \lambda_{i} \geq \lambda_{i+1} \quad \text{for every}\,\, i=1,\cdots,k-1.
    \end{equation*}
    Then, $\lambda$ is called a partition of $n$, and we write $\lambda \vdash n$.
 \end{definition}
 \noindent Based on this, we will give the definition of Young diagram as following:
 \begin{definition}[Young diagram]\label{Young diagram}
    Let $n \in \mathbb{N}$ and let $\lambda = (\lambda_{1},\cdots,\lambda_{k})$ be a partition of $n$. The Young diagram $Y_{\lambda}$ with size $n$ corresponding to $\lambda$ is a planar arrangement of $n$ boxes that are left-aligned and top-aligned, such that the $i$-th row of $Y_{\lambda}$ contains exactly $\lambda_{i}$ empty boxes. 
 \end{definition}
 For example, the Young diagrams with size $4$ corresponding to the partitions $(4),(3,1),(2,2)$ respectively are 
\begin{equation*}
 \begin{array}{|c|c|c|c|}
    \hline
    &  &  &  \\
    \hline
 \end{array}
 \,\, , \,\,
 \begin{array}{|c|c|c|}
    \hline
    &  &  \\
    \hline
    & \multicolumn{2}{c}{} \\
    \cline{1-1}
 \end{array}
 \,\, , \,\,
 \begin{array}{|c|c|}
    \hline
    &  \\
    \hline
    &  \\
    \cline{1-1}\cline{2-2}
 \end{array}.
\end{equation*}
We denote $Y_d^n$ the set of Young diagrams with size $n$ and no more than $d$ rows and we will introduce the Schur-Weyl duality theory \cite{Meliot2017,FultonHarris1991} which will play an important role in the following discussion.
\begin{lemma}[Schur-Weyl duality]\label{Schur-Weyl-duality}
Let $U^{\otimes n}$ and $P_{\pi}$ be the representations of group $U_d$ and $S_n$, respectively. Then, we have the following decomposition:
    \begin{align*}
        (\mathbb{C}^d)^{\otimes n} &\cong \bigoplus_{\lambda\in Y_d^n} \mathcal{U}_{\lambda} \otimes \mathcal{S}_{\lambda},\\
        P_{\pi} &\cong\bigoplus_{\lambda\in Y_d^n} I_{\mathcal{U}_\lambda} \otimes \pi_{\lambda},\\
        U^{\otimes n} &\cong\bigoplus_{\lambda\in Y_d^n} U_\lambda \otimes I_{\mathcal{S}_{\lambda}},
    \end{align*}
    where $U_{\lambda}$ and $\pi_{\lambda}$ are irreducible representations of $U^{\otimes n}$ and $P_{\pi}$ labeled by $\lambda$, respectively. Moreover, $\mathcal{U}_{\lambda}$ and $\mathcal{S}_{\lambda}$ are respective representation spaces.
\end{lemma}
\noindent It is important for us to know the transform that project $U^{\otimes n}$ onto its irreducible representations. In order for this, we need to introduce the following combinatorial tools.
 \begin{definition}[Standard Young Tableau]\label{Standard Young Tableau}
    Let $Y_{\lambda}$ be a particular Young diagram of size $n$. A Standard Young Tableau (SYT) of shape $Y_{\lambda}$ is the diagram $Y_{\lambda}$ where each box is filled with a unique number in $[n] = \{1,\cdots,n\}$ and each number occurring once such that the numbers increase from left to right and from top to bottom in each row and column.
 \end{definition}
\noindent For example, there are three kinds of SYTs of shape $Y_{(3,1)}$:
\begin{equation*}
    \begin{array}{|c|c|c|}
        \hline
        1 & 3 & 4 \\
        \hline
        2 \\
        \cline{1-1}
    \end{array}
    \,\, , \,\,
    \begin{array}{|c|c|c|}
        \hline
        1 & 2 & 4 \\
        \hline
        3 \\
        \cline{1-1}
    \end{array}
    \,\, , \,\,
    \begin{array}{|c|c|c|}
        \hline
        1 & 2 & 3 \\
        \hline
        4 \\
        \cline{1-1}
    \end{array}.
\end{equation*}
 
 \begin{definition}[Semi-standard Young Tableau]\label{Semi-standard Young Tableau}
   Let $Y_{\lambda}$ be a particular Young diagram of size $n$. A Semi-standard Young Tableau (SSYT) of shape $Y_{\lambda}$ with filling $d$ is the diagram $Y_{\lambda}$ where each box is filled with a unique number in $[d] = \{1,\cdots,d\}$ such that the numbers non-decrease from left to right and strictly increase from top to bottom in each row and column. 
 \end{definition}
 \noindent For example, there are two kinds of SSYTs of shape $Y_{(2,1)}$ with filling 2:
\begin{equation*}
    \begin{array}{|c|c|}
        \hline
        1 & 1 \\
        \hline
        2 \\
        \cline{1-1}
    \end{array}
    \,\, , \,\,
    \begin{array}{|c|c|}
        \hline
        1 & 2 \\
        \hline
        2 \\
        \cline{1-1}
    \end{array}.
\end{equation*}
 
Now it is natural to ask the number of SYTs and SSYTs with the shape of $Y_{\lambda}$ and hence we will introduce the hook length formula.
\begin{definition}[Hook length]\label{hook length}
Let $Y_{\lambda}$ be a particular Young diagram and fill each empty box with one more than the total number of boxes lying to the right and underneath it, we will denote the number of the box at position $(i,j)$ as $h_{\lambda}(i,j)$. Then the hook length of $Y_{\lambda}$ which is denoted by $H_{Y_{\lambda}}$ is given by the product of all numbers appearing in the resulting tableau, that is 
\begin{equation*}
H_{Y_{\lambda}} = \prod_{i,j} h_{\lambda}(i,j).
\end{equation*}
 \end{definition}
\noindent For example, let us consider the Young diagram with size $8$ corresponding to the partition \( \lambda = (4, 3, 1) \):
    \begin{equation*}
    \begin{array}{|c|c|c|c|}
        \hline
        &  &  & \\
        \hline
        &  &  \\
        \cline{1-3}
        & \multicolumn{2}{c}{} \\
        \cline{1-1}
    \end{array}
    \quad \longrightarrow \quad
    \begin{array}{|c|c|c|c|}
        \hline
        6 & 4 & 3 & 1\\
        \hline
        4 & 2 & 1 \\
        \cline{1-3}
        1 &\multicolumn{2}{c}{} \\
        \cline{1-1}
    \end{array}.
    \end{equation*}
Then the hook length of $Y_{(4,3,1)}$ is
\begin{equation*}
    H_{Y_{(4,3,1)}} = \prod_{i,j} h_{(4,3,1)}(i,j) = 6 \times 4^2 \times 3 \times 2 \times 1^3 = 576.
\end{equation*}
We now establish the relationship between the dimensions of $\mathcal{U}_{\lambda}$ and $\mathcal{S}_{\lambda}$ and the hook lengths of $Y_{\lambda}$.

\begin{lemma}[Dimension and multiplicity of irreducible representation]\label{D-M}
 Let $U^{\otimes n}$ and $P_{\pi}$ be the representations of group $U_d$ and $S_n$, respectively. Then the dimension of the irreducible representation $\pi_{\lambda}$ of $P_{\pi}$ labeled by $\lambda$ is exactly equal to the number of SYTs of shape $Y_{\lambda}$ which can be calculated as
    \begin{equation*}
    \dim(\mathcal{S}_{\lambda}) = \frac{n!}{H_{Y_{\lambda}}}.
    \end{equation*}
    Moreover, the dimension of the irreducible representation $U_{\lambda}$ of $U^{\otimes n}$ labeled by $\lambda$ is exactly equal to the number of SSYTs of shape $Y_{\lambda}$ with filling $d$ which can be calculated as
    \begin{equation*}
    \dim(\mathcal{U}_{\lambda}) = \frac{\prod_{i,j} (d+j-i)}{H_{Y_{\lambda}}}.
    \end{equation*}
 \end{lemma}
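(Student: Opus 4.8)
The plan is to reduce the statement to two classical combinatorial counting facts and then invoke (or sketch) the corresponding product formulas. First I would recall from the standard representation theory of the symmetric group (Specht modules; see \cite{FultonHarris1991}) that the irreducible $S_n$-module $\mathcal{S}_\lambda$ carries a basis indexed by the standard Young tableaux of shape $Y_\lambda$, so that $\dim(\mathcal{S}_\lambda) = f^\lambda$, the number of SYT of shape $\lambda$. Dually, the irreducible $\opr{U}(d)$-module $\mathcal{U}_\lambda$ appearing in Lemma~\ref{Schur-Weyl-duality} is the image of the Schur functor applied to $\mathbb{C}^d$, and its dimension equals the number of semistandard Young tableaux of shape $Y_\lambda$ with entries in $[d]$, i.e. the evaluation $s_\lambda(\underbrace{1,\dots,1}_{d})$ of the Schur polynomial. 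With these two identifications the lemma becomes the pair of purely combinatorial identities $f^\lambda = n!/H_{Y_\lambda}$ (the Frame–Robinson–Thrall hook length formula) and $s_\lambda(1^d) = \prod_{(i,j)\in Y_\lambda}(d+j-i)/H_{Y_\lambda}$ (Stanley's hook–content formula).

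For the first identity I would invoke the Frobenius determinantal formula $f^\lambda = n!\,\frac{\prod_{1\le i<j\le k}(\ell_i-\ell_j)}{\prod_{i=1}^k \ell_i!}$, where $k$ is the number of nonzero parts of $\lambda$ and $\ell_i := \lambda_i + k - i$, and then verify the elementary identity $\frac{\prod_{i<j}(\ell_i-\ell_j)}{\prod_i \ell_i!} = \frac{1}{\prod_{(i,j)\in Y_\lambda} h_\lambda(i,j)}$ by induction, peeling off the last part $\lambda_k$ and tracking how both sides transform. Alternatively, the probabilistic hook-walk argument of Greene–Nijenhuis–Wilf furnishes a self-contained proof of $f^\lambda = n!/H_{Y_\lambda}$ directly, which I would cite as an equally acceptable route.

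For the second identity I would start from the Weyl bialternant formula $s_\lambda(x_1,\dots,x_d) = \det(x_i^{\ell_j})_{i,j}/\det(x_i^{d-j})_{i,j}$ with $\ell_j = \lambda_j + d - j$, and evaluate at $x_1 = \dots = x_d = 1$ via the standard confluent-Vandermonde limit, obtaining the Weyl dimension form $s_\lambda(1^d) = \prod_{1\le i<j\le d}\frac{\ell_i-\ell_j}{j-i}$. Converting this to the claimed shape then uses two bookkeeping facts: the denominator $\prod_{i<j}(j-i)$ is the superfactorial $\prod_{m=1}^{d-1} m!$, and the product $\prod_{i<j}(\ell_i - \ell_j)$, after cancellation against the superfactorial, reassembles cell-by-cell so that the hooks $h_\lambda(i,j)$ appear in the denominator and the shifted contents $d + j - i$ in the numerator — again an induction on the number of rows (or cells) of $Y_\lambda$.

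The main obstacle is exactly this hook/content bookkeeping: translating antisymmetric products over pairs of shifted parts $\ell_i$ into products over the cells of the diagram. Both the hook length formula and the hook–content formula are well known but their combinatorial cores are not one-line computations; in a write-up for this paper I would state them as cited classical results (e.g.\ \cite{FultonHarris1991,Meliot2017}) and record only the reduction above. Once the two identities are in hand, Lemma~\ref{D-M} is immediate from the Specht-module and Schur-functor dimension counts together with Schur–Weyl duality (Lemma~\ref{Schur-Weyl-duality}).
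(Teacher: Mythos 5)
Your proposal is correct and, importantly, goes \emph{further} than the paper does: the paper does not actually prove Lemma~\ref{D-M} at all. It is stated in the appendix as a known classical fact, with the surrounding discussion (Schur--Weyl duality, Young tableaux, hook lengths) serving only to fix notation, and the reader is implicitly deferred to the cited references on representation theory. So there is no ``paper's own proof'' to compare against; the paper takes the route you describe in your last sentence — state the Specht-module and Schur-functor dimension counts, state the hook length and hook-content formulas, cite them, and move on.

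Your sketch is a sound roadmap for an actual proof. The reduction to the two combinatorial identities $f^\lambda = n!/H_{Y_\lambda}$ and $s_\lambda(1^d) = \prod_{(i,j)}(d+j-i)/H_{Y_\lambda}$ is exactly right, and your proposed derivations (Frobenius determinantal formula or Greene--Nijenhuis--Wilf hook walk for the first; Weyl bialternant plus confluent Vandermonde for the second) are standard and correct. The one place you rightly flag as the real work — converting antisymmetric products over shifted parts $\ell_i$ into products over cells, i.e.\ the hook/content bookkeeping — is indeed where the technical content lives, and you are honest that you would cite rather than reprove it, which matches the paper's own level of rigor here. If you wanted to actually include a proof in the write-up, the main gap to fill would be that induction on rows/cells; as a pointer toward the cleanest version, the identity $\prod_{i<j}(\ell_i - \ell_j)/\prod_i \ell_i! = 1/\prod_{(i,j)} h_\lambda(i,j)$ has a slick proof by showing that the multiset of hook lengths in row $i$ is exactly $\{1,2,\dots,\ell_i\}\setminus\{\ell_i - \ell_j : j > i\}$, which collapses the induction into a single observation.

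One small point of care: in the second identity you need $\ell(\lambda)\le d$ (so that $\mathcal{U}_\lambda$ is nonzero), which is implicitly assumed since $\lambda \in Y_d^n$ in the paper's Schur--Weyl decomposition; worth making explicit if you write this up.
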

In the following, we will provide a specific method for calculating the Schur basis under which the representation matrix of $U^{\otimes n}$ for any $U \in \opr{U}(d)$ will take on a block-diagonal form from the computation basis. Generally, we take $V = \mathbb{C}^{d}$ and the standard orthogonal basis of $\mathbb{C}^{d}$ is denoted as $\{e_1, e_2, \cdots, e_d\}$. Then the standard orthogonal tensor basis of $(\mathbb{C}^{d})^{\otimes n}$ which we called computation basis is given by 
\begin{equation*}
\{e_{i_1} \otimes e_{i_2} \otimes \cdots \otimes e_{i_n}|\, 1 \leq i_l \leq d, \, l=1,\cdots, n\}.
\end{equation*}
Next, we will illustrate the process of constructing Schur basis:\\
\textbf{1. Calculate the Young Symmetrizer using the Standard Young Tableaux (SYT):}
 
 \begin{enumerate}
    \item List all Young diagrams of size $n$ with no more than $d$ rows, that is the set $Y_{d}^{n}$.
    \item For each $Y_{\lambda} \in Y_{d}^{n}$, determine all possible SYTs $\{\theta _{i}^{\lambda}\}_{i}$ with shape $\lambda$ and all possible SSYTs $\{\Phi_{j}^{\lambda}(d)\}$ with shape $\lambda$ and filling $d$.
    \item For each SYT $\theta_{i}^{\lambda}$, construct the corresponding unnormalized Young Symmetrizer $P_{\theta_{i}^{\lambda}}= R_{\theta_{i}^{\lambda}} C_{\theta_{i}^{\lambda}}$:
    \begin{enumerate}
        \item Row Symmetrizer \( R_{\theta_{i}^{\lambda}} \): Each row symmetrizer \( R_{\theta_{i}^{\lambda}}^{j}  \) of $\theta_{i}^{\lambda}$ is defined as the sum of all permutations of the numbers in row $j$ of the Young tableau \( \theta_{i}^{\lambda} \) with normalization coefficient. Formally:
        \begin{equation*}
        R_{\theta_{i}^{\lambda}}^{j} = \frac{1}{m_{j}!} \sum_{\sigma \in \text{Row}_j(\theta_{i}^{\lambda})} \sigma
        \end{equation*}
        where \( \text{Row}_j(\theta_{i}^{\lambda}) \) denotes the symmetric group acting on the elements of row $j$ of $\theta_{i}^{\lambda}$ and $m_{j}$ is the number of boxes in the row $j$. Then we denote 
        \begin{equation*}
        R_{\theta_{i}^{\lambda}} = \prod_{j} R_{\theta_{i}^{\lambda}}^{j}.
        \end{equation*}
        
        \item Column Anti-symmetrizer \( C_{\theta_{i}^{\lambda}} \): Each column anti-symmetrizer \( C_{\theta_{i}^{\lambda}}^{k} \) is defined as the alternating sum of all permutations of the numbers in column $k$ of the Young tableau \( \theta_{i}^{\lambda} \). Formally:
        \begin{equation*}
        C_{\theta_{i}^{\lambda}}^{k} = \frac{1}{l_{k}!} \sum_{\tau \in \text{Col}_{k}(\theta_{i}^{\lambda})} \text{sgn}(\tau) \cdot \tau
        \end{equation*}
        where \( \text{sgn}(\tau) \) represents the sign of the permutation \( \tau \) which is determined by its parity, \( \text{Col}_{k}(\theta_{i}^{\lambda}) \) denotes the symmetric group acting on the elements of column $k$ of $\theta_{i}^{\lambda}$ and $l_{k}$ is the number of boxes in the column $k$. Then we denote 
        \begin{equation*}
            C_{\theta_{i}^{\lambda}} = \prod_{k} C_{\theta_{i}^{\lambda}}^{k}.
        \end{equation*}
        
        \item For example, take the SYT $\theta^{(3,2)}$ with shape $(3,2)$ as 
        \begin{equation*}
        \begin{array}{|c|c|c|}
            \hline
            1 & 3 & 4 \\
            \hline
            2 & 5   \\
            \cline{1-2}
            \multicolumn{2}{c}{} \\
        \end{array}
        \end{equation*}
        and we can calculate that
        \begin{equation*}
        \begin{aligned}
        R_1 &= \frac{1}{3!} \sum_{\sigma \in \text{Row}_{1}(\theta^{(3,2)})} \sigma = \frac{1}{6} ((1) + (13) + (14) + (34) + (143) + (134)),\\
        R_2 &= \frac{1}{2!} \sum_{\sigma \in \text{Row}_{2}(\theta^{(3,2)})} \sigma = \frac{1}{2} ((1) + (25))
        \end{aligned}
        \end{equation*}
        where we always denote $(1)$ as the identity permutation and we can get the Row Symmetrizer
        \begin{equation*}
            \begin{aligned}
                R_{\theta^{(3,2)}} = &\left( \frac{1}{6} \big((1) + (13) + (14) + (34) + (143) + (134)\big) \right) \cdot \\
                 &\left( \frac{1}{2} \big((1) + (25)\big) \right).
            \end{aligned}
        \end{equation*}
        In the similar way, we can calculate that
        \begin{equation*}
            \begin{aligned}
                C_1 &= \frac{1}{2!} \sum_{\tau \in \text{Col}_{1}(\theta^{(3,2)})} \text{sgn}(\tau) \tau = \frac{1}{2} \big((1) - (12)\big)\\
                C_2 &= \frac{1}{2!} \sum_{\tau \in \text{Col}_{2}(\theta^{(3,2)})} \text{sgn}(\tau) \tau = \frac{1}{2} \big((1) - (35)\big)\\
                C_3 &= \frac{1}{1!} \sum_{\tau \in \text{Col}_{3}(\theta^{(3,2)})} \text{sgn}(\tau) \tau = (1)\\
            \end{aligned}
        \end{equation*}
        and hence the Column Anti-symmetrization operator is:
        \begin{equation*}
        C_{\theta^{(3,2)}} = \left( \frac{1}{2} \big((1) - (12)\big) \right) \cdot \left( \frac{1}{2} \big((3) - (35)\big) \right) \cdot (1).
        \end{equation*}
        Then we can get the unnormalized Young Symmetrizer $P_{\theta^{(3,2)}}$
        \begin{equation*}
            \begin{aligned}
            P_{\theta^{(3,2)}} = &\left( \frac{1}{6} \big((1) + (13) + (14) + (34) + (143) + (134)\big) \right) \cdot \\
            &\left( \frac{1}{2} \big((1) + (25)\big) \right) \cdot \left( \frac{1}{2} \big((1) - (12)\big) \right) \cdot\\
            &\left( \frac{1}{2} \big((3) - (35)\big) \right) \cdot (1).
            \end{aligned}
        \end{equation*}
    \end{enumerate}
 \end{enumerate}
\textbf{2. Assign each seed vector to each $\Phi_{j}^{\lambda}$ corresponding to each $\theta_{i}^{\lambda}$:}
 
 \begin{enumerate}
    \item For each SSYT $\Phi_{j}^{\lambda}(d)$ and SYT $\theta_{i}^{\lambda}$, we assign a unique computation basis vector corresponding to the order of filling numbers in $\theta_{i}^{\lambda}$ to $\Phi_{j}^{\lambda}(d)$ as the seed vector $e_{\theta_{i},\Phi_{j}(d)}^{\lambda}$. 
    
    \item For example, we take $\lambda = (3,2)$, $d = 3$ and the SSYT $\Phi^{(3,2)}(3)$ as 
        \begin{equation*}
        \begin{array}{|c|c|c|}
            \hline
            1 & 2 & 2 \\
            \hline
            3 & 3   \\
            \cline{1-2}
            \multicolumn{2}{c}{} \\
        \end{array}.
        \end{equation*}
        Moreover, we take the SYT $\theta^{(3,2)}$ as 
        \begin{equation*}
        \begin{array}{|c|c|c|}
            \hline
            1 & 2 & 4 \\
            \hline
            3 & 5   \\
            \cline{1-2}
            \multicolumn{2}{c}{} \\
        \end{array}
        \end{equation*}
        and then we will get the seed vector
        \begin{equation*}
        e_{\theta, \Phi(3)}^{(3,2)} = e_1 \otimes e_2 \otimes e_3 \otimes e_2 \otimes e_3.
        \end{equation*}
 \end{enumerate}
 \textbf{3. Construct the Schur basis matrix of $U^{\otimes n}$ for any $U \in \opr{U}(d)$:}
 
 \begin{enumerate}
    \item From Schur-Weyl duality theory we know that the irreducible representations of $U^{\otimes n}$ can be labeled by the Young diagram $Y_{\lambda} \in Y_{d}^{n}$. Next we will construct the Schur basis for each $U_{\lambda}$.
    
    \item For each $\theta_{i}^{\lambda}$, we construct the following space
    \begin{equation*}
        V_{\theta_{i}^{\lambda}} = \{P_{\theta_{i}^{\lambda}}(e_{\theta_{i},\Phi_{j}(d)}^{\lambda})\}_{j}
    \end{equation*}
    by applying the unnormalized Young Symmetrizer $P_{\theta_{i}^{\lambda}}$ successively to each $e_{\theta_{i},\Phi_{j}(d)}^{\lambda}$ with iterating $j$.
    
    \item Then for each $i$, $V_{\theta_{i}^{\lambda}}$ is an irreducible representation space $\mathcal{U}_{\lambda}$ and we can find that its dimension matches the number of SSYTs of shape $Y_{\lambda}$ with filling $d$. Moreover, we can also find that its multiplicity matches the number of SYTs of shape $Y_{\lambda}$ and these two results match Lemma~\ref{D-M}.
    
    \item Finally, we define the matrix
    \begin{equation*}
    Q_{\theta_i^{\lambda}} = 
    \bigl[\, P_{\theta_i^{\lambda}}(e_{\theta_i, \Phi_j(d)}^{\lambda}) \,\bigr]_j
    \quad \text{and} \quad
    Q_{\lambda} = 
    \bigl[\, Q_{\theta_i^{\lambda}} \,\bigr]_i 
    \end{equation*}
    which means the columns of \( Q_{\theta_i^{\lambda}} \) are the vectors 
    \( P_{\theta_i^{\lambda}}(e_{\theta_i, \Phi_j(d)}^{\lambda}) \) for different \( j \),
    and the matrices \( Q_{\theta_i^{\lambda}} \) are concatenated side by side 
    to form \( Q_{\lambda} \). Then we construct the Schur matrix $Q$ by
    \begin{equation*}
     \tilde{Q} 
     = \bigl[\, Q_{\lambda} \,\bigr]_{\lambda}
     \xrightarrow{\text{Gram–Schmidt orthonormalization}}
     Q
    \end{equation*}
    which means that for all $U \in \opr{U}(d)$
    \begin{equation*}
        Q^{\dagger}U^{\otimes n}Q = \bigoplus_{Y_\lambda \in Y_{d}^{n}} \operatorname{diag}(\underbrace{U^{\lambda}, U^{\lambda}, \dots, U^{\lambda}}_{\dim(\mathcal{S}_{\lambda}) \text{times}}).
    \end{equation*}
 
 \end{enumerate}
 Next, we will take $n = d = 2$ as an example to illustrate this process: take $\mathbb{C}^2$ with the standard orthogonal basis $\{e_1,e_2\}$. Then the standard orthogonal tensor basis of $(\mathbb{C}^{2})^{\otimes 2}$ is
    \begin{equation*}
    \{e_1\otimes e_1,\; e_1\otimes e_2,\; e_2\otimes e_1,\; e_2\otimes e_2\}.
    \end{equation*}
 Then the set $Y_{d}^{n}$ of all Young diagrams with size $n = 2$ and at most $d = 2$ rows is:
    \begin{equation*}
    Y_{2}^{2} = \{\lambda_{1} = (2), \lambda_{2} = (1,1)\}.
    \end{equation*}
 For $\lambda_{1} = (2)$, there is only one SYT with the shape $Y_{(2)}$, that is
 \begin{equation*}
    \theta^{(2)} =
    \begin{array}{|c|c|}
        \hline
        1 & 2  \\
        \hline
    \end{array}
 \end{equation*}
and we can calculate its corresponding unnormalized Young Symmetrizer
\begin{equation*}
    \begin{aligned}
    P_{\theta^{(2)}} & = R_{\theta^{(2)}} C_{\theta^{(2)}}\\
    & = \frac{1}{2!}\big((1)+(12)\big) \cdot (1)\\
    & = \frac{1}{2}\big((1)+(12)\big).
    \end{aligned}
\end{equation*}
Next, there are three SSYTs with the shape $Y_{(2)}$ and filling $2$, they are
\begin{equation*}
\Phi_{1}^{(2)}(2) = \begin{array}{|c|c|}
    \hline
    1 & 1  \\
    \hline
    \end{array}
    \,\, , 
\Phi_{2}^{(2)}(2) = \begin{array}{|c|c|}
        \hline
        1 & 2  \\
        \hline
    \end{array}
    \,\, , 
\Phi_{3}^{(2)}(2) = \begin{array}{|c|c|}
        \hline
        2 & 2  \\
        \hline
    \end{array}
\end{equation*}
and their seed vectors corresponding to $\theta^{(2)}$ are
\begin{equation*}
    e_{\theta, \Phi_{1}(2)}^{(2)} = e_{1} \otimes e_{1}\,\, , e_{\theta, \Phi_{2}(2)}^{(2)} = e_{1} \otimes e_{2}\,\, , e_{\theta, \Phi_{3}(2)}^{(2)} = e_{2} \otimes e_{2}.
\end{equation*}
Hence we can get the matrix $Q_{(2)} = Q_{\theta^{(2)}}$ as
\begin{equation*}
    \begin{aligned}
    Q_{(2)} = Q_{\theta^{(2)}} &= [P_{\theta^{(2)}}(e_{\theta, \Phi_{j}(2)}^{(2)})]_{j = 1,2,3}\\
    &= [e_{1} \otimes e_{1}, \frac{1}{2}(e_{1} \otimes e_{2} + e_{2} \otimes e_{1}), e_{2} \otimes e_{2}].
    \end{aligned}
\end{equation*}
Then for $\lambda_{2} = (1,1)$, there is also only one SYT with the shape $Y_{(1,1)}$, that is
\begin{equation*}
    \theta^{(1,1)} = \begin{array}{|c|}
        \hline
        1  \\
        \hline
        2\\
        \cline{1-1}
    \end{array}
\end{equation*}
and we can calculate its corresponding unnormalized Young Symmetrizer
\begin{equation*}
    \begin{aligned}
        P_{\theta^{(1,1)}} & = R_{\theta^{(1,1)}} C_{\theta^{(1,1)}}\\
        & = (1) \cdot \frac{1}{2}\big((1) - (12)\big)\\
        & = \frac{1}{2}\big((1) - (12)\big).
    \end{aligned}
\end{equation*}
Next, there is only one SSYT with the shape $Y_{(1,1)}$ and filling 2, that is
\begin{equation*}
    \Phi^{(1,1)}(2) = \begin{array}{|c|}
        \hline
        1  \\
        \hline
        2\\
        \cline{1-1}
    \end{array}
\end{equation*}
and its seed vector corresponding to $\theta^{(1,1)}$ is
\begin{equation*}
    e_{\theta, \Phi(2)}^{(1,1)} = e_{1} \otimes e_{2}.
\end{equation*}
Hence we can get the matrix $Q_{(1,1)} = Q_{\theta^{(1,1)}}$ as
\begin{equation*}
    \begin{aligned}
        Q_{(1,1)} = Q_{\theta^{(1,1)}} &= [P_{\theta^{(1,1)}}(e_{\theta, \Phi(2)}^{(1,1)})]\\
        &= [\frac{1}{2}(e_{1} \otimes e_{2} - e_{2} \otimes e_{1})].
    \end{aligned}
\end{equation*}
Then we will get the following unnormalized matrix
\begin{equation*}
    \begin{aligned}
        \tilde{Q} &= [Q_{(2)},Q_{(1,1)}]\\
        &= [e_{1} \otimes e_{1}, \frac{1}{2}(e_{1} \otimes e_{2} + e_{2} \otimes e_{1}), e_{2} \otimes e_{2}, \frac{1}{2}(e_{1} \otimes e_{2} - e_{2} \otimes e_{1})].
    \end{aligned}
\end{equation*}
After the process of Gram–Schmidt orthonormalization, we can get the well-known Schur matrix $Q$ for $U^{\otimes 2}$ where $U \in \opr{U}(d)$:
\begin{equation*}
    Q = [e_{1} \otimes e_{1}, \frac{1}{\sqrt{2}}(e_{1} \otimes e_{2} + e_{2} \otimes e_{1}), e_{2} \otimes e_{2}, \frac{1}{\sqrt{2}}(e_{1} \otimes e_{2} - e_{2} \otimes e_{1})].
\end{equation*}

\section{Quantum comb}
\label{app:comb}
In this section, we will introduce the notion of multi-slot parallel and sequential quantum comb. The linear spaces associated to input and output are described by the tensor product of $i$ subspaces. In this work, we will use bold letters to indicate this tensor product subsystem stucture:
\begin{equation*}
\mathcal{H}_{\boldsymbol{I}} := \otimes_{i = 1}^{k} \mathcal{H}_{I_i}, \quad \mathcal{H}_{\boldsymbol{O}} := \otimes_{i = 1}^{k} \mathcal{H}_{O_i}.
\end{equation*}
Sequential quantum comb represents general quantum circuits where different encoder operations are applied in between the uses of the input channels $\mathcal{C}_i$ \cite{Theo-frame}. We can see Figure~\ref{fig:comb_framework} for the illustration of the difference among three kinds of quantum comb. 
\begin{figure}[h!]
    \centering
    \includegraphics[width=0.9\linewidth]{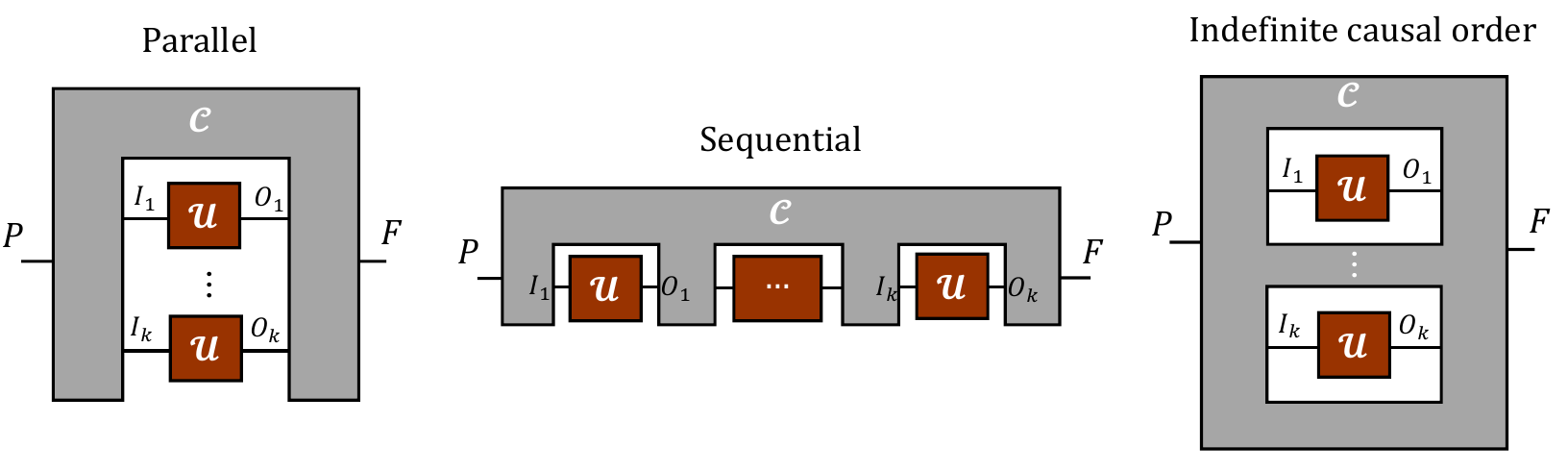}
    \caption{Three kinds of quantum combs involving the parallel, sequential and indefinite causal order. The alphabets $P,I_j,O_j$, and $F$ label the corresponding Hilbert spaces $\mathcal{H}_P, \mathcal{H}_{I_j},\mathcal{H}_{O_j}$, and $\mathcal{H}_F$, respectively.}
    \label{fig:comb_framework}
\end{figure}
For instance, in the case of $k=2$ slots, sequential quantum comb consist of two encoding channels with Choi operators $E_1, E_2$ and one decoder channel with Choi operator $D$. If we plug in two input channels with Choi operators $C_1$ and $C_2$, the output channel $C_{\mathrm{out}}$ is given by the composition 
\begin{equation*}
    C_{\mathrm{out}} = D * C_2 * E_2 * C_1 * E_1.
\end{equation*}
Formally, we can define sequential quantum comb as follows.
\medskip
\begin{definition}
A linear operator $S \in \mathsf{L}\left(\mathcal{H}_P \bigotimes_{i=1}^{k} (\mathcal{H}_{I_{i}} \otimes \mathcal{H}_{O_{i}}) \otimes \mathcal{H}_F\right)$ is a $k$-slot sequential quantum comb if there exist a linear space $\mathcal{H}_{\mathrm{aux}}$, a quantum channel 
$
\mathcal{E}_1 : \mathsf{L}(\mathcal{H}_P) \to \mathsf{L}(\mathcal{H}_{\mathrm{aux}} \otimes \mathcal{H}_{I_1}),
$
a set of quantum channels 
$
\mathcal{E}_i : \mathsf{L}(\mathcal{H}_{\mathrm{aux}} \otimes \mathcal{H}_{O_{i-1}}) \to \mathsf{L}(\mathcal{H}_{\mathrm{aux}} \otimes \mathcal{H}_{I_i})
$
for $ i \in \{2,\dots,k\}$, and a quantum channel 
$
\mathcal{D}: \mathsf{L}(\mathcal{H}_{\mathrm{aux}} \otimes \mathcal{H}_{O_k}) \to \mathsf{L}(\mathcal{H}_F)
$
such that
\begin{equation*}
S = E_1 * E_2 * \cdots * E_k * D,
\end{equation*}
where $E_{i}$ is the Choi operator of $\mathcal{E}_i$ for $i \in \{1,\dots,k\}$ and $D$ is the Choi operator of $\mathcal{D}$.
\end{definition}

Sequential quantum comb can also be characterised in terms of linear and positive semidefinite constraints. We state as follows: 
a linear operator \\
$S \in \mathsf{L}\left(\mathcal{H}_P \bigotimes_{i=1}^{k} (\mathcal{H}_{I_{i}} \otimes \mathcal{H}_{O_{i}}) \otimes \mathcal{H}_F\right)$ represents a sequential quantum comb with $k$-slots if and only if \cite{Chiribella2008quantum,Theo-frame}
\begin{align}
S &\ge 0, \nonumber\\
\tr_F(S) &= \tr_{O_k F}(S) \otimes \frac{I_{O_k}}{d_{O_k}}, \nonumber\\
\tr_{I_k O_k F}(S) &= \tr_{O_{k-1} I_k O_k F}(S)\otimes \frac{I_{O_{k-1}}}{d_{O_{k-1}}}, \nonumber\\
&\;\;\vdots \nonumber\\
\tr_{I_1 O_1 \cdots I_k O_k F}(S) &= \tr_{P I_1 O_1 \cdots I_k O_k F}(S)\otimes \frac{I_P}{d_P}, \nonumber\\
\tr(S) &= d_P d_{\boldsymbol{O}} .
\label{eq:sequential_constraints}
\end{align}

Parallel quantum comb can be characterised by a single encoder and a single decoder channel. More precisely, we can give the definition.

\begin{definition}
A linear operator $S \in \mathsf{L}(\mathcal{H}_P \otimes \mathcal{H}_{\boldsymbol{I}} \otimes \mathcal{H}_{\boldsymbol{O}} \otimes \mathcal{H}_F)$ is a $k$-slot parallel quantum comb if there exist a linear space $\mathcal{H}_{\mathrm{aux}}$, a quantum channel $\mathcal{E}: \mathsf{L}(\mathcal{H}_P)\to \mathsf{L}(\mathcal{H}_{\mathrm{aux}}\otimes \mathcal{H}_{\boldsymbol{I}})$, and $\mathcal{D}: \mathsf{L}(\mathcal{H}_{\mathrm{aux}}\otimes \mathcal{H}_{\boldsymbol{O}})\to \mathsf{L}(\mathcal{H}_F)$ with Choi operators $E$ and $D$ such that $S = E \ast D$.
\end{definition}

Similarly, it can be shown that a linear operator $S \in \mathsf{L}(\mathcal{H}_P \otimes \mathcal{H}_{\boldsymbol{I}} \otimes \mathcal{H}_{\boldsymbol{O}} \otimes \mathcal{H}_F)$ is a $k$-slot parallel quantum comb if and only if
\begin{equation*}
\begin{aligned}
S &\ge 0 \\
\operatorname{Tr}_F(S) &= \operatorname{Tr}_{\boldsymbol{O}F}(S)\otimes \frac{I_{\boldsymbol{O}}}{d_{\boldsymbol{O}}} \\
\operatorname{Tr}_{\boldsymbol{IO}F}(S) &= \operatorname{Tr}_{P\boldsymbol{IO}F}(S) \otimes \frac{I_P}{d_P} \\
\operatorname{Tr}(S) &= d_P d_{\boldsymbol{O}}. \label{eq:parallel-constraints}
\end{aligned}
\end{equation*}

When transforming quantum operations, parallel implementations are often desirable due to their simpler structure, as they can be realised by a single encoder and a single decoder channel. Also, parallel quantum comb can be realised by a quantum circuit with short depth (encoder, input channels, decoder) while a sequential use of the input operations may result in a long depth, and consequently, in a longer time to finish the whole transformation.

\section{Qubit shadow unitary inversion}

As we stated earlier, in {qubit case} we may restrict our analysis to the case $O = Z$ without loss of generality. More precisely, if the equation \eqref{shadow-inversion} holds for $O = Z$, one can show that it is equivalent to
\begin{equation*}
\operatorname{Tr}\!\bigl[\mathcal{N}_{U}(\rho)\proj{i} \bigr]
    = \operatorname{Tr}\!\bigl[U^{\dagger} \rho U \proj{i} \bigr], \,\, \forall i\in\{0,1\}.
\end{equation*}
For any qubit observable $O$, there exists unitary $V$ such that $O = V \Sigma V^{\dagger}$ where $\Sigma$ is real diagonal. Hence one can simply append $V$ at the output stage of the circuit, thereby reducing the problem to the $Z$-observable scenario.

\subsection{A circuit implementation}\label{proof-of-thm}

Here we present the circuit construction that reverses any qubit unitary $U$ under observable $Z$ by querying $3$ times of $U$, thereby proving Theorem \ref{main-thm} {which we restate here for clarity:}
\maintheorem*
\noindent\textit{Proof sketch of Theorem~\ref{main-thm}.}
Without loss of generality, we assume that $U$ is special, i.e. $\det U=1$. For any unknown $2$-dimensional unitary $U$ and any unknown input state $\ket\psi$, there exist fixed quantum circuits $V_0,V_3\in \opr{U}(8)$, $V_1,V_2\in\opr{U}(16)$ satisfying
\begin{align}
    \ket{\Psi_{\opr{I}}}&:= (I\otimes I\otimes U)\cdot V_0\cdot(\ket{0}\otimes\ket{0}\otimes\ket{\psi})=\frac12\sum_{j=0}^3\ket{j}\otimes UP_j\ket{\psi},\label{eq:Psi1}\\
    \ket{\Psi_{\opr{II}}}&:= (I\otimes I\otimes I\otimes U)\cdot V_1\cdot(\ket{0}\otimes\ket{\Psi_{\opr{I}}})\nonumber\\
    &\begin{aligned}
        =\frac{1}{2\sqrt{3}}\big(&\ket{v_{01}}\otimes (UXU^\dag-X)+i\ket{v_{23}}\otimes (UXU^\dag+X)+\\
        &\ket{v_{02}}\otimes (UYU^\dag- Y)-i\ket{v_{13}}\otimes (UYU^\dag+Y)+\\
        &\ket{v_{03}}\otimes (UZU^\dag- Z)+i\ket{v_{12}}\otimes (UZU^\dag+Z)\big)\ket{\psi},
    \end{aligned}\label{eq:Psi2}\\
\ket{0}\otimes\ket{\Psi_{\opr{III}}}&:= (I\otimes I\otimes I\otimes U)\cdot V_2\cdot\ket{\Psi_{\opr{II}}}\label{eq:Psi3}\\
    &\begin{aligned}
        =\frac{1}{2\sqrt{3}}\ket{0}\otimes\big(
            &\ket{0}\otimes (2UZU+UZU^\dag )+
              \ket{1}\otimes (2iUYU-UZU^\dag X )+\\
            &\ket{2}\otimes (-2iUXU-UZU^\dag Y)+
              \ket{3}\otimes (2UU-UZU^\dag Z )\big)U^\dag\ket{\psi},
    \end{aligned}\nonumber\\
    \ket{\Psi_{\opr{IV}}}&:=  V_3\cdot\ket{\Psi_{\opr{III}}}\nonumber\\
    &\begin{aligned}
        =\frac{1}{2\sqrt{3}}\big(
        &\ket{0}\otimes2I+
        \ket{1}\otimes(I\opr{Tr}[U^\dag YUY]-iZ\opr{Tr}[U^\dag YUX])+\\
        &\ket{2}\otimes(I\opr{Tr}[U^\dag XUX]+iZ\opr{Tr}[U^\dag XUY])+\\
        &\ket{3}\otimes(iI\opr{Tr}[U^\dag ZUY]+
        Z\opr{Tr}[U^\dag ZUX])\big)U^\dag\ket{\psi},
    \end{aligned}\label{eq:Psi4}
\end{align}
where $P_0:= I,P_1:= X,P_2:= Y,P_3:= Z$,
\begin{equation*}
    \begin{array}{ll}
        \ket{v_{01}} = \ket{000}, &
        \ket{v_{02}} = \ket{001}, \\[2mm]
        \ket{v_{03}} = \ket{010}, &
        \ket{v_{12}} = \tfrac{\sqrt{3}}{2}\ket{110}
        + \tfrac{i}{2}\ket{010}, \\[2mm]
        \ket{v_{13}} = \tfrac{\sqrt{3}}{2}\ket{101}
        - \tfrac{i}{2}\ket{001}, &
        \ket{v_{23}} = \tfrac{\sqrt{3}}{2}\ket{100}
        + \tfrac{i}{2}\ket{000}.
    \end{array}
\end{equation*}
Hence, after tracing the first three qubits in $\ket{\Psi_{\opr{IV}}}$, it derives a quantum circuit $\mathcal N_U$ satisfying  
\begin{equation}\label{eq:Psi4_tr}
    \forall\rho\in\mathsf{D}(\mathbb{C}^2),\ \tr\!\bigl[\mathcal{N}_{U}(\rho)\, Z \bigr]
    = \tr\!\bigl[U^{\dagger} \rho U Z \bigr].
\end{equation}

\begin{proof}[Proof of Theorem~\ref{main-thm}]
The circuit diagram is given by Fig. \ref{fig:decompose_comb}, while its correctness is shown by following calculation. We remark that, since the state is input to the last qubit of our circuit, in our calculation we track the evolution operator which, with a input state $\ket{\psi}$, acts on $I\otimes I\otimes I\otimes \ket{\psi}$ to get the output state.
\begin{figure*}
        \centering
        \begin{quantikz}[transparent]
            \lstick{$\ket{0}$} & {} & {} & \gate[4]{V_1} & {} & \gate[4]{V_2} & {} & {} & \rstick{$\ket{0}$} \\
            \lstick{$\ket{0}$} & \gate[3]{V_0} & {} & {} & {} & {} & {} & \gate[3]{V_{3}} & \trash{\text{trace}} \\
            \lstick{$\ket{0}$} & {} & {} & {} & {} & {} & {} & {} & \trash{\text{trace}} \\
            \lstick{} & {} & \gate[1]{U} & {} & \gate[1]{U} & {} & \gate[1]{U} & {} & \rstick{}
        \end{quantikz}
        \caption{The circuit configuration of decomposed quantum comb for inversing unknown single-qubit unitary regarding the observable $Z$. Three ancilla qubits are employed (initiated to $\ket{0}$); one acts as a catalytic ancilla that is restored to $\ket0$ (\emph{i.e.}, not consumed) by the end of the circuit.}\label{fig:decompose_comb}
    \end{figure*}
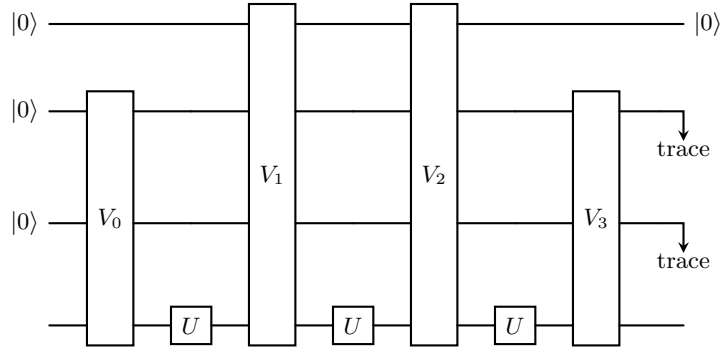

The circuit requires 3 ancilla qubits, which are initiated to be $\ket{000}$.

\noindent\textit{Proof of \eqref{eq:Psi1}.}
The first gate $V_0$ acts two Hadamard gates on the second and the third qubit, followed by a controlled Pauli gate on the last qubit, i.e.
\begin{equation}
    V_0=\sum_{j=0}^3 \proj{j}H^{\otimes2}\otimes P_j,
\end{equation}
which implies
\begin{equation}
    \begin{aligned}
        \ket{\Psi_{\opr{I}}}&:= (I\otimes I\otimes U)\cdot V_0\cdot(\ket{0}\otimes\ket{0}\otimes\ket{\psi})\\
        &=\sum_{j=0}^3 \proj{j}H^{\otimes2}\ket{00}\otimes UP_j\ket{\psi}\\
        &=\frac12\sum_{j=0}^3\ket{j}\otimes UP_j\ket{\psi}.
    \end{aligned}
\end{equation}
\noindent\textit{Proof of \eqref{eq:Psi2}.}
The second gate $V_1$ is defined after vectors $\ket{v_{jk}}$ by
\begin{equation*}
    V_1 = \frac{1}{\sqrt3}\sum_{j\ne k=0}^3\ket{v_{jk}}\!\bra{k}\otimes P_j.
\end{equation*}
where
\begin{equation*}
    \ket{v_{kj}}=-\ket{v_{jk}}\text{ for }0\le j<k\le3,
\end{equation*}
which implies
\begin{equation}
    \begin{aligned}
        \ket{\Psi_{\opr{II}}}
        &:= (I\otimes I\otimes I\otimes U)\cdot V_1\cdot(\ket{0}\otimes\ket{\Psi_{\opr{I}}})\\
        &=\frac{1}{2\sqrt{3}}\sum_{j\ne k=0}^3\ket{v_{jk}}\!\bra{k}\otimes UP_j\cdot\sum_{l=0}^3\ket{l}\otimes UP_l\ket{\psi}\\
        &=\frac{1}{2\sqrt{3}}\sum_{0\le j< k\le3}\ket{v_{jk}}\otimes (UP_jUP_k-UP_kUP_j)\ket{\psi}\\
        &\begin{aligned}
            =\frac{1}{2\sqrt{3}}\big(
            &\ket{v_{01}}\otimes (UXU^\dag-X)+i\ket{v_{23}}\otimes (UXU^\dag+X)+\\
            &\ket{v_{02}}\otimes (UYU^\dag- Y)-i\ket{v_{13}}\otimes (UYU^\dag+Y)+\\
            &\ket{v_{03}}\otimes (UZU^\dag- Z)+i\ket{v_{12}}\otimes (UZU^\dag+Z)\big)\ket{\psi}
        \end{aligned}
    \end{aligned}
\end{equation}
\noindent\textit{Proof of \eqref{eq:Psi3}.}
The third gate $V_2$ is composed into two controlled Pauli gates and a 2-qubit base-change gate $G$:
\begin{equation*}
    V_2 = \left(I\otimes\sum_{j=0}^3\proj{j}\otimes ZP_j\right)
    \cdot\left(G\otimes I\right)
    \cdot\left(I\otimes\sum_{j=0}^3\proj{j}\otimes P_{j+1\opr{mod}4})\right)
\end{equation*}
where $G\in\opr{U}(8)$ satisfying
\begin{equation*}
    \begin{aligned}
        &G\left(-\ket{v_{01}}+i\ket{v_{23}}-\ket{v_{02}}-i\ket{v_{13}}-\ket{v_{03}}+i\ket{v_{12}}\right)=\ket{0}\otimes\frac32(1,1,1,1)^{\intercal},\\
        &G(\ket{v_{01}}+i\ket{v_{23}})=\ket{0}\otimes\frac12(1,1,-1,-1)^{\intercal},\\
        &G(\ket{v_{02}}-i\ket{v_{13}})=\ket{0}\otimes\frac12(1,-1,1,-1)^{\intercal},\\
        &G(\ket{v_{03}}+i\ket{v_{12}})=\ket{0}\otimes\frac12(1,-1,-1,1)^{\intercal}.
    \end{aligned}
\end{equation*}
Then \eqref{eq:Psi3} could be checked by
\begin{equation*}
    \begin{aligned}
        \ket{\Psi_{\opr{II}}}\xrightarrow{\sum_j\proj{j}\otimes P_{j+1}}
        &\frac{1}{2\sqrt{3}}\big(\ket{v_{01}}\otimes (XUXU^\dag-I)+i\ket{v_{23}}\otimes (XUXU^\dag+I)+\\
        &\qquad\ \ket{v_{02}}\otimes (YUYU^\dag- I)-i\ket{v_{13}}\otimes (YUYU^\dag+I)+\\
        &\qquad\ \ket{v_{03}}\otimes (ZUZU^\dag- I)+i\ket{v_{12}}\otimes (ZUZU^\dag+I)\big)\ket{\psi}\\
        =&
        \frac{1}{2\sqrt{3}}{\bigl(}\left(-\ket{v_{01}}+i\ket{v_{23}}-\ket{v_{02}}-i\ket{v_{13}}-\ket{v_{03}}+i\ket{v_{12}}\right)\otimes I\\
        &\qquad+(\ket{v_{01}}+i\ket{v_{23}})\otimes XUXU^\dag\\
        &\qquad
        +
        (\ket{v_{02}}-i\ket{v_{13}})\otimes YUYU^\dag\\
        &\qquad +(\ket{v_{03}}+i\ket{v_{12}})\otimes ZUZU^\dag{\bigr)}\ket{\psi}\\
        \xrightarrow{G}
        &\ket{0}\otimes\frac{1}{4\sqrt{3}}\left(3(1,1,1,1)^{\intercal}\otimes U+(1,1,-1,-1)^{\intercal}\otimes XUX+\right.\\
        &\qquad\qquad\quad\left.(1,-1,1,-1)^{\intercal}\otimes YUY+(1,-1,-1,1)^{\intercal}\otimes ZUZ\right)U^\dag\ket{\psi}\\
        =
        &\ket{0}\otimes\frac{1}{4\sqrt{3}}\left(
        \ket{0}\otimes (4U+2U^\dag )+
        \ket{1}\otimes (4U-2XU^\dag X )+\right.\\
        &\qquad\qquad\quad\left.\ket{2}\otimes (4U-2YU^\dag Y)+
        \ket{3}\otimes (4U-2ZU^\dag Z )\right)U^\dag\ket{\psi}\\
        \xrightarrow{\sum_j\proj{j}\otimes UZP_{j}}&
        \ket{0}\otimes\frac{1}{2\sqrt{3}}\left(
        \ket{0}\otimes (2UZU+UZU^\dag )+
        \ket{1}\otimes (2iUYU-UZU^\dag X )+\right.\\
        &\qquad\qquad\quad\left.
        \ket{2}\otimes (-2iUXU-UZU^\dag Y)+
        \ket{3}\otimes (2UU-UZU^\dag Z )\right)U^\dag\ket{\psi}.
    \end{aligned}
\end{equation*}
It is noted that the first ancilla qubit will be at state $\ket{0}$ after operation $G$, and those vectors $\ket{v_{jk}}$ are designed to deduce the dimension of ancilla system into $4$ here and to be orthogonal to each other.
\noindent\textit{Proof of \eqref{eq:Psi4}.}
The fourth gate $V_3$ is composed into two controlled Pauli gates and two Hadamard gates:
\begin{equation}
    V_3=(CCX)\cdot(H^{\otimes2}\otimes I)\cdot(\proj{0}\otimes Z
            -i\proj{1}\otimes Y+i\proj{2}\otimes X
        -\proj{3}\otimes I)
\end{equation}
Then the circuit output reads
\begin{align}
    \ket{\Psi_{\opr{III}}}
    \xrightarrow[+i\proj{2}\otimes X
    -\proj{3}\otimes I]{\proj{0}\otimes Z
        -i\proj{1}\otimes Y}
    &\frac{1}{2\sqrt{3}}\left(
    \ket{0}\otimes (2ZUZU+ZUZU^\dag )+\right.\nonumber\\
    &
    \qquad\ket{1}\otimes (2YUYU+iYUZU^\dag X )+\nonumber\\
    &\qquad
    \ket{2}\otimes (2XUXU-iXUZU^\dag Y)+\nonumber\\
    &\qquad\left.
    \ket{3}\otimes (-2U^2+UZU^\dag Z )\right)U^\dag\ket{\psi}\nonumber\\
    \xrightarrow{H^{\otimes2}}
    &\frac{1}{4\sqrt{3}}\left(
    (1,1,1,1)^{\intercal}\otimes (2ZUZU+ZUZU^\dag )+\right.\nonumber\\
    &\qquad\left.
    (1,-1,1,-1)^{\intercal}\otimes (2YUYU+iYUZU^\dag X )+\right.\nonumber\\
    &\qquad\left.
    (1,1,-1,-1)^{\intercal}\otimes (2XUXU-iXUZU^\dag Y)+\right.\nonumber\\
    &\qquad\left.
    (1,-1,-1,1)^{\intercal}\otimes (-2UU+UZU^\dag Z )\right)U^\dag\ket{\psi}\nonumber\\
    =&\frac{1}{2\sqrt{3}}\left(
    \ket{0}\otimes2I+
    \ket{1}\otimes(I\opr{Tr}[U^\dag YUY]-iZ\opr{Tr}[U^\dag YUX])+\right.\nonumber\\
    &\qquad\left.
    \ket{2}\otimes(I\opr{Tr}[U^\dag XUX]+iZ\opr{Tr}[U^\dag XUY])+\right.\nonumber\\
    &\qquad\left.
    \ket{3}\otimes(-iY\opr{Tr}[U^\dag ZUX]+iX\opr{Tr}[U^\dag ZUY])\right)U^\dag\ket{\psi}\nonumber\\
    \xrightarrow{CCX}&\frac{1}{2\sqrt{3}}\left(
    \ket{0}\otimes2I+
    \ket{1}\otimes(I\opr{Tr}[U^\dag YUY]-iZ\opr{Tr}[U^\dag YUX])+\right.\nonumber\\
    &\qquad\left.
    \ket{2}\otimes(I\opr{Tr}[U^\dag XUX]+iZ\opr{Tr}[U^\dag XUY])+\right.\label{eq:1}\\
    &\qquad\left.
    \ket{3}\otimes(iI\opr{Tr}[U^\dag ZUY]+
    Z\opr{Tr}[U^\dag ZUX])\right)U^\dag\ket{\psi}.\nonumber
\end{align}

\noindent\textit{Proof of \eqref{eq:Psi4_tr}.}
After tracing all ancilla qubits, we obtain
\begin{align*}
    &\opr{Tr}[\mathcal N_U(\rho)Z]\\
    =&\frac{1}{12}\opr{Tr}\left[U^\dag\rho U\big(
    2I\cdot Z\cdot 2I+\right.\\
    &\qquad\left.(I\opr{Tr}[U^\dag YUY]-iZ\opr{Tr}[U^\dag YUX])^\dag Z(I\opr{Tr}[U^\dag YUY]-iZ\opr{Tr}[U^\dag YUX])+\right.\\
    &\qquad\left.
    (I\opr{Tr}[U^\dag XUX]+iZ\opr{Tr}[U^\dag XUY])^\dag Z(I\opr{Tr}[U^\dag XUX]+iZ\opr{Tr}[U^\dag XUY])+\right.\\
    &\qquad\left.(iI\opr{Tr}[U^\dag ZUY]-Z\opr{Tr}[U^\dag ZUX])^\dag Z(iI\opr{Tr}[U^\dag ZUY]-Z\opr{Tr}[U^\dag ZUX])
    \big)\right]\\
    =&\frac{1}{12}\opr{Tr}[U^\dag\rho U\big(
    4Z+(\opr{Tr}[U^\dag YUY]^2+\opr{Tr}[U^\dag YUX]^2)Z+\\
    &\qquad(\opr{Tr}[U^\dag XUX]^2+\opr{Tr}[U^\dag XUY]^2)Z+(\opr{Tr}[U^\dag ZUY]^2+\opr{Tr}[U^\dag ZUX]^2)Z)
    \big)]\\
    =&\opr{Tr}[U^\dag\rho UZ]\cdot\frac{1}{12}\left(
    4+(\opr{Tr}[ XUYU^\dag]^2+\opr{Tr}[ YUYU^\dag]^2+\opr{Tr}[ ZUYU^\dag]^2)+\right.\\
    &\qquad\left.(\opr{Tr}[ XUXU^\dag]^2+\opr{Tr}[ YUXU^\dag]^2+\opr{Tr}[ ZUXU^\dag]^2)
    \right)\\
    =&\opr{Tr}[U^\dag\rho UZ]\cdot\frac{1}{12}(
    4+4+4)\\
    =&\opr{Tr}[U^\dag\rho UZ],
\end{align*}
where any density matrix $\rho$ could be regarded as a linear combinations of pure states $\proj{\psi}$.
\end{proof}
    
In the end of this section, we remark that our construction (Fig. \ref{fig:decompose_comb}) also realize a probabilistic unitary inversion with a fixed success probability: notice that in Eq.~\eqref{eq:1}, the first term (the $\ket{0}$ term) only contains $U^\dag$ with a constant coefficient. Therefore whenever the computational-basis measurement on the second and third qubit outputs $\ket{00}$, we implement a $U^\dag$ on the last qubit.

\subsection{Necessary and sufficient condition}

From the calculation in the proof of Theorem \ref{main-thm} one can also conclude that our circuit in fact acts like
\begin{equation*}
    \mathcal{N}_{U}(\rho) = p(U) U^{\dagger} \rho U + q(U)ZU^{\dagger}  \rho UZ
\end{equation*}
for some function $p(U) + q(U) = 1$. Then it is natural to ask, should all shadow qubit-unitary inversion admit this form?
{As announced in Section~\ref{prob-formu}, Proposition~\ref{form-des} determines the structure of $t$-query shadow inversion of $2$-dimensional unitary under Pauli-$Z$.}

\mainprop*

\begin{proof}
Recall that we have derived that the dual map of $\mathcal{N}_{U}$ satisfies the following equation
\begin{equation*}
    \mathcal{N}_{U}^{\dagger}(Z) = UZU^{\dagger}.
\end{equation*}
Then we define a new quantum channel $\mathcal{C}_{U}$:
\begin{equation*}
    \mathcal{C}_{U}(\sigma) := \mathcal{N}_{U}(U \sigma U^{\dagger})
\end{equation*}
and we have the following equation
\begin{equation*}
    \tr(\mathcal{N}_{U}(\rho)Z) = \tr(\mathcal{C}_{U}(U^{\dagger}\rho U)Z) = \tr(U^{\dagger}\rho UZ), \quad \forall \rho \in \mathsf{D}(
    \mathbb{C}^2).
\end{equation*}
Then denote $\sigma = U^{\dagger}\rho U$ and we can get
\begin{equation*}
    \tr(\mathcal{C}_{U}(\sigma)Z) = \tr(\sigma Z), \quad \forall \sigma \in \mathsf{D}(
    \mathbb{C}^2).
\end{equation*}
This is equivalent to 
\begin{equation*}
    \mathcal{C}_{U}^{\dagger}(Z) = Z
\end{equation*}
where $\mathcal{C}_{U}^{\dagger}$ is the dual map of $\mathcal{C}_{U}$. Now we consider the Kraus decomposition
\begin{equation*}
    \mathcal{C}_{U}(\sigma) = \sum_{k} A_{k,u} \sigma A_{k,u}^{\dagger}
\end{equation*}
with the condition
\begin{equation*}
    \sum_{k} A_{k,u}^{\dagger} A_{k,u} = I.
\end{equation*}
Except for this, we also have the following equation:
\begin{equation*}
    \mathcal{C}_{U}^{\dagger}(Z) = \sum_{k} A_{k,u}^{\dagger}ZA_{k,u} = Z.
\end{equation*}
In summary, we can get the Kraus decomposition of $\mathcal{N}_{U}$:
\begin{equation*}
    N_U(\rho) = \sum_k A_{k,u} U^\dagger \rho U A_{k,u}^\dagger,
\end{equation*}
where the Kraus operators $\{A_{k,u}\}_k$ satisfy:
\begin{equation*}
\begin{aligned}
    \sum_k A_{k,u}^\dagger A_{k,u} &= I, \\
    \sum_k A_{k,u}^\dagger Z A_{k,u} &= Z. 
\end{aligned}
\end{equation*}
\noindent
Now we will analyze the structure of $A_{k,u}$, suppose 
\begin{equation*}
A_{k,u} = 
\begin{pmatrix}
    a_{k,u} & b_{k,u} \\
    c_{k,u} & d_{k,u}
\end{pmatrix}.
\end{equation*}
Then
$
\sum_{k} A_{k,u}^{\dagger} A_{k,u} = I
$
is equivalent to
\begin{equation*}
\begin{cases}
    \text{(1)}~ \displaystyle \sum_{k} \left( |a_{k,u}|^{2} + |c_{k,u}|^{2} \right) = 1, \\[6pt]
    \text{(2)}~ \displaystyle \sum_{k} \left( |b_{k,u}|^{2} + |d_{k,u}|^{2} \right) = 1, \\[6pt]
    \text{(3)}~ \displaystyle \sum_{k} \left( a_{k,u}^{*} b_{k,u} + c_{k,u}^{*} d_{k,u} \right) = 0, \\[6pt]
    \text{(4)}~ \displaystyle \sum_{k} \left( b_{k,u}^{*} a_{k,u} + d_{k,u}^{*} c_{k,u} \right) = 0.
\end{cases}
\end{equation*}
\medskip
\noindent
Similarly, 
$
\sum_{k} A_{k,u}^{\dagger} Z A_{k,u} = Z
$
is equivalent to:
\begin{equation*}
\begin{cases}
    \text{(5)}~ \displaystyle \sum_{k} \left( |a_{k,u}|^{2} - |c_{k,u}|^{2} \right) = 1, \\[6pt]
    \text{(6)}~ \displaystyle \sum_{k} \left( |b_{k,u}|^{2} - |d_{k,u}|^{2} \right) = -1, \\[6pt]
    \text{(7)}~ \displaystyle \sum_{k} \left( a_{k,u}^{*} b_{k,u} - c_{k,u}^{*} d_{k,u} \right) = 0, \\[6pt]
    \text{(8)}~ \displaystyle \sum_{k} \left( b_{k,u}^{*} a_{k,u} - d_{k,u}^{*} c_{k,u} \right) = 0.
\end{cases}
\end{equation*}

\medskip
\noindent
From these conditions:
\begin{equation*}
\begin{aligned}
    (1) + (5) &\Rightarrow \sum_{k} |a_{k,u}|^{2} = 1, \\[3pt]
    (1) - (5) &\Rightarrow c_{k,u} = 0 \quad \text{for all } k, \\[3pt]
    (2) + (6) &\Rightarrow b_{k,u} = 0 \quad \text{for all } k, \\[3pt]
    (2) - (6) &\Rightarrow \sum_{k} |d_{k,u}|^{2} = 1.
\end{aligned}
\end{equation*}
Hence we have the following structure
\begin{equation*}
A_{k,u} = 
\begin{pmatrix}
    a_{k,u} & 0 \\
    0 & d_{k,u}
\end{pmatrix}
= 
\frac{a_{k,u} + d_{k,u}}{2} \, I +
\frac{a_{k,u} - d_{k,u}}{2} \, Z,
\end{equation*}
with
\begin{equation}\label{ad_ku}
\sum_{k} |a_{k,u}|^{2} = 
\sum_{k} |d_{k,u}|^{2} = 1.
\end{equation}
If we denote
\begin{equation*}
\alpha_{k,u} = \frac{a_{k,u} + d_{k,u}}{2}, 
\qquad
\beta_{k,u} = \frac{a_{k,u} - d_{k,u}}{2}
\end{equation*}
then from \eqref{ad_ku} we have
\begin{equation*}
\sum_k \left( |\alpha_{k,u} + \beta_{k,u}|^2 \right)
= \sum_k \left( |\alpha_{k,u} - \beta_{k,u}|^2 \right) = 1
\end{equation*}
which is equivalent to
\begin{equation*}
\begin{cases}
    \displaystyle \sum_k (|\alpha_{k,u}|^2 + |\beta_{k,u}|^2) = 1, \\[6pt]
    \displaystyle \sum_k \Re(\alpha_{k,u} \beta_{k,u}^*) = 0.
\end{cases}
\end{equation*}
Therefore, we can get the form of $\mathcal{N}_{U}$
\begin{equation*}
\mathcal{N}_U(\rho)
= \sum_k (\alpha_{k,u} I + \beta_{k,u} Z)
\, U^{\dagger} \rho U \,
(\alpha_{k,u}^* I + \beta_{k,u}^* Z).
\end{equation*}
Expanding gives
\begin{equation*}
\begin{aligned}
    \mathcal{N}_U(\rho)
    &= \sum_k |\alpha_{k,u}|^2 U\rho U^{\dagger}
    + \sum_k \alpha_{k,u} \beta_{k,u}^* \, U^{\dagger} \rho UZ \\[3pt]
    &\quad + \sum_k \beta_{k,u} \alpha_{k,u}^* ZU^{\dagger}\rho U
    + \sum_k |\beta_{k,u}|^2 Z U^{\dagger}\rho UZ.
\end{aligned}
\end{equation*}
Moreover, denote
\begin{equation*}
p(U) = \sum_k |\alpha_{k,u}|^2, \qquad
q(U) = \sum_k |\beta_{k,u}|^2, \qquad
r(U) = \sum_k \alpha_{k,u} \beta_{k,u}^*.
\end{equation*}
Then we can obtain the conclusion that
\begin{equation*}
\mathcal{N}_U(\rho)
= p(U) \, U^{\dagger}\rho U
+ r(U) \, U^{\dagger}\rho UZ
+ r(U)^{*} ZU^{\dagger}\rho U
+ q(U) \, ZU^{\dagger}\rho UZ
\end{equation*}
with the conditions
\begin{equation*}
\begin{cases}
    p(U),\, q(U) \ge 0,\\[4pt]
    p(U) + q(U) = 1,\\[4pt]
    \operatorname{Re}(r(U)) = 0 \ \Longleftrightarrow\ r(U) + r(U)^{*} = 0,\\[4pt]
    |r(U)|^2 \le p(U)\,q(U).
\end{cases}
\end{equation*}
The last inequality comes from the Cauchy–Schwarz inequality and hence complete the proof.
\end{proof}

\begin{corollary}\label{cor-eqlb}
    For any $U \in \opr{U}(2)$, the lower bound of the number of queries to achieve $\mathcal{N}_{U}$ in \eqref{eq:form-of-CPTP} is equivalent to the lower bound of the number of queries to achieve the specific CPTP map 
    \begin{equation*}
    \mathcal{M}_{U}(\rho) = \frac{1}{2}U^{\dagger}\rho U + \frac{1}{2} ZU^{\dagger} \rho UZ.
    \end{equation*}
    That is, $p(U) \equiv 1/2$ and $r(U) \equiv 0$ in \eqref{eq:form-of-CPTP}.
\end{corollary}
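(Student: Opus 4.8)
The plan is to prove the stated equivalence by establishing $t^\star = t_{\mathcal M}$, where $t^\star$ denotes the least number of queries among all circuits that are a shadow inversion of $2$-dimensional unitaries under $Z$ — equivalently, by Proposition~\ref{form-des}, the least $t$ for which some $t$-slot quantum comb realizes \emph{a} channel of the form \eqref{eq:form-of-CPTP} — and $t_{\mathcal M}$ denotes the least number of queries needed to realize the fixed channel $\mathcal M_U(\rho)=\tfrac12 U^\dagger\rho U+\tfrac12 ZU^\dagger\rho UZ$ exactly for every $U\in\opr{U}(2)$. One of the two inequalities is immediate: $\mathcal M_U$ is itself of the form \eqref{eq:form-of-CPTP}, namely the instance $p(U)\equiv\tfrac12$, $r(U)\equiv0$, which satisfies all three constraints of Proposition~\ref{form-des} ($0\le\tfrac12\le1$, $\operatorname{Re}(0)=0$, and $|0|^2=0\le\tfrac12\cdot\tfrac12$); hence, by Proposition~\ref{form-des}, any $t$-query circuit realizing $\mathcal M_U$ is in particular a $t$-query shadow inversion under $Z$, giving $t^\star\le t_{\mathcal M}$.

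For the reverse inequality $t_{\mathcal M}\le t^\star$, the idea is to $Z$-twirl an optimal protocol \emph{at its output}, so that no additional query to $U$ is consumed. Given a $t$-slot comb whose output channel is $\mathcal N_U(\rho)=p(U)U^\dagger\rho U+(1-p(U))ZU^\dagger\rho UZ+r(U)(U^\dagger\rho UZ-ZU^\dagger\rho U)$, I would post-compose it with the fixed mixed-unitary channel $\sigma\mapsto\tfrac12(\sigma+Z\sigma Z)$ (realized, say, by a fair coin that conjugates the output by $Z$ on heads, or coherently via a $\ket{+}$ ancilla, a controlled-$Z$ from the ancilla onto the output, and discarding the ancilla). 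A short computation — using that conjugation by $Z$ interchanges the weights $p(U)\leftrightarrow 1-p(U)$ in \eqref{eq:form-of-CPTP} and flips the sign of the $r(U)$ term — shows that the resulting channel equals $\mathcal M_U(\rho)=\tfrac12 U^\dagger\rho U+\tfrac12 ZU^\dagger\rho UZ$ \emph{irrespective of the functions $p$ and $r$}: the $r(U)$ contributions cancel and the two diagonal weights average to $\tfrac12$. Since this post-processing stage acts only on the output register and is independent of $U$, it leaves the $t$ uses of $U$ untouched, so the composite is again a $t$-slot comb realizing $\mathcal M_U$; therefore $t_{\mathcal M}\le t^\star$. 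Combining the two inequalities yields $t^\star=t_{\mathcal M}$.

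The only nontrivial content is the cancellation in the $Z$-twirl — a routine $2\times2$ computation — together with the bookkeeping that appending the ancilla/controlled-$Z$ stage preserves the quantum-comb structure and keeps the query count at \emph{exactly} $t$ as required by the definition; this is clear since the stage commutes past all slots and touches only the final output space. I do not anticipate a genuine obstacle: the corollary is a reduction lemma asserting that the query complexities of the two tasks coincide, so a lower bound proved for the concrete map $\mathcal M_U$ transfers verbatim to every shadow inversion under $Z$ — in particular, establishing that $\mathcal M_U$ admits no $2$-query realization would establish the optimality of the $3$-query construction of Theorem~\ref{main-thm}.
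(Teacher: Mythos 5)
Your proposal is correct and takes essentially the same approach as the paper: the paper forms the convex mixture of combs $C=\tfrac12(C_1+C_2)$ with $C_2$ obtained by appending $Z$ at the output, which is exactly your $Z$-twirl channel $\sigma\mapsto\tfrac12(\sigma+Z\sigma Z)$ applied to the output register, and both exploit the same cancellation $\tfrac12(\mathcal N_U+Z\mathcal N_U Z)=\mathcal M_U$. You are slightly more explicit about the easy direction and about the physical realization of the post-processing, but the core mechanism is identical.
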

\begin{proof}
    If we can achieve $\mathcal{N}_{U}$ by $t$ queries to $U$, then we can also achieve 
    \begin{equation*}
        Z\mathcal{N}_{U}(\rho)Z = p(U) ZU^\dagger \rho UZ + (1 - p(U))  U^\dagger \rho U  -r(U) (U^\dagger \rho U Z - Z U^\dagger \rho U)
    \end{equation*}
    by $t$ queries to $U$ with appending $Z$ at the output stage of the circuit. Using the language of quantum comb, there exist $C_{1}$ and $C_{2}$ which are Choi operators of quantum comb and satisfy
    \begin{align*}
        C_{1} \ast |U\rangle\!\rangle\!\langle\!\langle U|^{\otimes t} &= \mathcal{N}_{U}\\
        C_{2} \ast |U\rangle\!\rangle\!\langle\!\langle U|^{\otimes t} &= Z\mathcal{N}_{U}Z.
    \end{align*}
    Next we construct 
    \begin{equation*}
        C = \frac{1}{2}(C_1 + C_2)
    \end{equation*}
    which satisfies that
    \begin{equation*}
        C \ast |U\rangle\!\rangle\!\langle\!\langle U|^{\otimes t} = \frac{1}{2}\mathcal{N}_{U} + \frac{1}{2} Z\mathcal{N}_{U}Z = \mathcal{M}_{U}.
    \end{equation*}
    Moreover, we know that the convex combination of Choi operators of quantum comb is also a Choi operator of some quantum comb. Hence we can achieve $\mathcal{M}_{U}$ by $t$ queries to $U$ which complete the proof.
\end{proof}
\begin{remark}
    Our numerical results in Table~\ref{tab:seq-vs-paral} suggest that the lower bound of the number of queries to achieve $\mathcal{M}_{U}$ for any $U \in \opr{U}(2)$ is $3$. This indicates an interesting phenomenon that implementing each CPTP map $U^{\dagger}(\cdot)U$ and $ZU^{\dagger}(\cdot)UZ$ deterministically and exactly requires $4$ queries \cite{Rever-unknown}, but their equal-probability mixture can be realized with only $3$ queries. 
\end{remark}

\section{A General Lower Bound on Query Complexity}
\label{sec:lb}
In this section, we will give a lower bound of the number of queries to achieve the shadow inversion of $d$-dimensional unitaries under $O$ where $d \in \mathbb{N^+}$ and $O$ is any fixed $d$-dimensional Hermitian observable. In order to analyze the robustness of the lower bound, we need to introduce the following definition.
\begin{definition}
    For any $d,t \in \mathbb{N}^+$ and $\epsilon > 0$, let $O$ be a $d$-dimensional observable. A quantum circuit $\mathcal{N}$ is said to be a \emph{$t$-query shadow inversion of $d$-dimensional unitaries under $O$ with error $\varepsilon$}, if we have
\begin{equation}\label{shadow-inversion-error}
\sup_{U \in U(d)}||\mathcal{N}_{U}^{\dagger}(O) - UOU^{\dagger}||_{1} \leq \varepsilon 
\end{equation}
where $\mathcal{N}_{U}$ is the output channel after $\mathcal{N}$ query $U$ exactly $t$ times and $\mathcal{N}_{U}^{\dagger}$ is the dual map of $\mathcal{N}_{U}$ in the sense of Hlibert-Schmidt inner product.
\end{definition}
\begin{remark}
    One can also choose a natural definition as
    \begin{equation*}
        \sup_{U \in U(d), \rho \in \mathsf{D}(d)}\Big|\tr\big[\bigl(U^{\dagger} \rho U - \mathcal{N}_{U}(\rho)\bigr)O\big]\Big| \leq \varepsilon
    \end{equation*}
    which is equivalent to 
    \begin{equation}\label{natural-def}
        \sup_{U \in U(d)}||\mathcal{N}_{U}^{\dagger}(O) - UOU^{\dagger}||_{\infty} \leq \varepsilon. 
    \end{equation}
    Due to the fact that $||\cdot||_{\infty} \leq ||\cdot||_{1}$, we can induce \eqref{natural-def} from \eqref{shadow-inversion-error}. 
\end{remark}

We will give a general lower bound for the shadow inversion of $d$-dimensional unitaries under $O$ with error $\varepsilon$ first, then we analyze the situation $\varepsilon = 0$ corresponding to Theorem~\ref{lb-thm}. 
\begin{proposition}\label{lb-err}
     For any $d,t \in \mathbb{N+}$ and $\varepsilon > 0$, if $\mathcal{N}$ is a $t$-query shadow inversion of $d$-dimensional unitaries under some fixed non-trivial $d$-dimensional Hermitian observable $O \not\propto I_{d}$ with error $\varepsilon$, then we have the following lower-bound
    \begin{equation}\label{lb-query-err}
        t \geq d \cdot \sup_{\lambda \in \mathbb{R}} \Bigg\{ \max\bigg\{\sup_{k \geq 1} \frac{||O_{\lambda}||_{k}^{k} - \epsilon ||O_{\lambda}||_{\infty}^{k-1}}{||O_{\lambda}||_{k-1}^{k-1}||O_{\lambda}||_{1}}, \max_{1 \leq l \leq r(\lambda)} \frac{\sum_{i=1}^{l}\sigma_{i}^{\lambda} - \epsilon}{l ||O_{\lambda}||_{1}}\bigg\} \Bigg\} - 1
    \end{equation}
     where $||\cdot||_{p}$ is the Schatten-$p$ norm for $p \in [1,\infty]$ and $O_{\lambda} = O - \lambda I_{d}$ for any $\lambda \in \mathbb{R}$. Moreover, $r(\lambda)$ is the rank of $O_{\lambda}$ and $\sigma_{1}^{\lambda} \geq \sigma_{2}^{\lambda} \geq \cdots \geq \sigma_{r(\lambda)}^{\lambda}> 0$ are positive singular values of $O_{\lambda}$. We also set $||O_{\lambda}||_{0}:= \sum_{i=1}^{r(\lambda)} (\sigma_{i}^{\lambda})^0 = r(\lambda)$.
\end{proposition}
Before we give the proof, we need to introduce the following result \cite[Theorem~3.7]{Chen2025tight} which shows that when $U$ is randomly sampled from the Haar measure and the number of queries $n$ is not too large, the overall channel must be very depolarizing (i.e., has low unitarity \cite{Unitarity-estimation}):
\begin{lemma}\label{str-lem}
    Let $d,n \in \mathbb{N}^+$ and $\mathcal{H}_i \cong \mathbb{C}^{d}, i=0,1,\cdots,2n+2$. Then for any $n$-slot sequential quantum comb $X$ on $(\mathcal{H}_{0},\mathcal{H}_{1},\cdots,\mathcal{H}_{2n+1})$, we have
    \begin{equation}\label{str-eq}
        X \ast \int_{U(d)} C_{U} d\mu_{H}(U) \preceq \frac{n+1}{d} \cdot I_{\mathcal{H}_0} \otimes I_{\mathcal{H}_{2n+2}}
    \end{equation}
    where $C_{U} \in \mathsf{L}(\mathcal{H}_{1} \otimes \mathcal{H}_{2} \otimes \cdots \otimes \mathcal{H}_{2n+2})$ is the Choi operator defined by 
    \[
    C_{U} := |U\rangle\rangle \langle\langle U|^{\otimes n+1}
    \]
    for any $U \in U(d)$ and $\mu_{H}$ denotes the Haar measure on $U(d)$.
\end{lemma}

\noindent Moreover, we also need the following lemma which will be useful in the proof of Proposition~\ref{lb-err}.

\begin{lemma}\label{link-eq}
Let $\mathcal{H}\cong\mathbb{C}^d$ with a fixed number $d \in \mathbb{N}^+$ and for a linear map $\mathcal{E}:\mathcal{L}(\mathcal{H})\to\mathcal{L}(\mathcal{H})$, denote $J_{\mathcal{E}}$ its Choi operator. Then, for all $A,B\in\mathcal{L}(\mathcal{H})$,
\begin{equation}\label{eq:link-EA}
\operatorname{Tr}\!\big[J_{\mathcal{E}}(A^{\mathsf T}\!\otimes B)\big] \;=\; \operatorname{Tr}\!\big[\mathcal{E}(A)\,B\big].
\end{equation}
Moreover, if $\mathcal{E}$ is Hermiticity-preserving, let $\mathcal{E}^\dagger$ denote the Hilbert-Schmidt adjoint of $\mathcal{E}$, defined by
\[
\operatorname{Tr}\!\big[X^\dagger \mathcal{E}(Y)\big] \;=\; \operatorname{Tr}\!\big[(\mathcal{E}^\dagger(X))^\dagger Y\big]\quad\text{for all }X,Y.
\]
Then \eqref{eq:link-EA} is equivalently written as
\begin{equation}\label{eq:link-adjoint}
\operatorname{Tr}\!\big[J_{\mathcal{E}}(A^{\mathsf T}\!\otimes B)\big] \;=\; \operatorname{Tr}\!\big[A\,\mathcal{E}^\dagger(B)\big].
\end{equation}
\end{lemma}

\begin{proof}
We first show \eqref{eq:link-EA} directly from the definitions. Denote $E_{ij}:=\lvert i\rangle\langle j\rvert$ for the matrix units. Using the Choi expansion and the multiplicativity of the trace over tensor products,
\[
\begin{aligned}
\operatorname{Tr}\!\big[J_{\mathcal{E}}(A^{\mathsf T}\!\otimes B)\big]
&= \sum_{i,j=1}^d \operatorname{Tr}\!\big[(E_{ij}\otimes \mathcal{E}(E_{ij}))(A^{\mathsf T}\!\otimes B)\big] \\
&= \sum_{i,j=1}^d \operatorname{Tr}(E_{ij}A^{\mathsf T})\,\operatorname{Tr}(\mathcal{E}(E_{ij})B).
\end{aligned}
\]
By calculation, $\operatorname{Tr}(E_{ij}A^{\mathsf T})=A_{ij}$ and $\mathcal{E}(A)=\sum_{i,j} A_{ij}\,\mathcal{E}(E_{ij})$.
Hence
\[
\operatorname{Tr}\!\big[J_{\mathcal{E}}(A^{\mathsf T}\!\otimes B)\big]
= \sum_{i,j = 1}^{d} A_{ij}\,\operatorname{Tr}(\mathcal{E}(E_{ij})B)
= \operatorname{Tr}\!\Big[\Big(\sum_{i,j = 1}^{d} A_{ij}\,\mathcal{E}(E_{ij})\Big) B\Big]
= \operatorname{Tr}\!\big[\mathcal{E}(A)\,B\big]
\]
which proves \eqref{eq:link-EA}. To obtain \eqref{eq:link-adjoint} under the extra assumption that $\mathcal{E}$ is Hermiticity-preserving, we use the Hilbert-Schmidt adjointness together with the cyclicity of the trace. Specifically, for any $A,B \in \mathcal{L}(\mathcal{H})$
\[
\operatorname{Tr}\!\big[\mathcal{E}(A)\,B\big]
= \operatorname{Tr}\!\big[B\,\mathcal{E}(A)\big]
= \operatorname{Tr}\!\big[(\mathcal{E}^\dagger(B^\dagger))^\dagger A\big]
= \operatorname{Tr}\!\big[A\,\mathcal{E}^\dagger(B)\big],
\]
where the last equality uses that $\mathcal{E}^\dagger$ is Hermiticity-preserving which can be obtained from $\mathcal{E}$ is Hermiticity-preserving and hence $(\mathcal{E}^\dagger(B^\dagger))^\dagger=\mathcal{E}^\dagger(B)$.
Combining this with \eqref{eq:link-EA} yields \eqref{eq:link-adjoint}.
\end{proof}

\begin{lemma}\label{key-lem}
    Let $d \in \mathbb{N}^{+}$ and $\mathcal{H}\cong\mathbb{C}^d$, assume that $\mathcal{E}:\mathcal{L}(\mathcal{H})\to\mathcal{L}(\mathcal{H})$ is a Hermiticity-preserving linear map. If for a fixed $d$-dimensional Hermitian operator $O$, we have
    \begin{equation*}
        ||\mathcal{E}^{\dagger}(O) - O||_{1} \leq \epsilon
    \end{equation*}
    for some $\epsilon > 0$ and the Choi operator of $\mathcal{E}$ satisfies 
    \begin{equation*}
        J_{E} \preceq \alpha \cdot  I_{\mathcal{H}} \otimes I_{\mathcal{H}}
    \end{equation*}
    for some $\alpha > 0$, then we will get the following inequality
    \begin{equation*}
        \alpha \geq \max\Bigg\{\sup_{k \geq 1} \frac{||O||_{k}^{k} - \epsilon ||O||_{\infty}^{k-1}}{||O||_{k-1}^{k-1}||O||_{1}}, \max_{1 \leq l \leq r} \frac{\sum_{i=1}^{l}\sigma_{i} - \epsilon}{l ||O||_{1}}\Bigg\}
    \end{equation*}
    where $||\cdot||_{p}$ is the Schatten-$p$ norm for $p \in [1,\infty]$, $r$ is the rank of $O$ and $\sigma_{1} \geq \sigma_{2} \geq \cdots \geq \sigma_{r}> 0$ are positive singular values of $O$. We also set $||O||_{0}:= \sum_{i=1}^{r} \sigma_{i}^0 = r$.
\end{lemma}

\begin{proof}
    First we take $B = O$ in \eqref{eq:link-adjoint} and get
    \begin{equation*}
        \tr[J_{E}(A^{\intercal} \otimes O)] = \tr[A\mathcal{E}^{\dagger}(O)]
    \end{equation*}
    for all $A \in \mathsf{L}(\mathcal{H})$. Then we have
    \begin{equation*}
        |\tr[J_{E}(A^{\intercal} \otimes O)]| \leq \alpha ||A^{\intercal} \otimes O||_{1} = \alpha ||A||_{1}||O||_{1}.
    \end{equation*}
    On the other side, we denote $\Delta = \mathcal{E}^{\dagger}(O) - O$ and we can get
    \begin{equation*}
    \begin{aligned}
       \big|\tr[A\mathcal{E}^{\dagger}(O)]\big| &= \big|\tr(AO)+\tr(A\Delta)\big|\\
       &\geq \big|\tr(AO)\big| - \big|\tr(A\Delta)\big|\\
       &\geq \big|\tr(AO)\big| - ||A||_{\infty}||\Delta||_{1}
    \end{aligned}
    \end{equation*}
    where we use the Hölder's inequality on the last step. Therefore, we can get
    \begin{equation}\label{eq:master}
    \alpha \geq \frac{\big|\tr(AO)\big| - \varepsilon ||A||_{\infty}}{||A||_{1}||O||_{1}}
    \end{equation}
    for any non-zero $A \in \mathsf{L}(\mathcal{H})$. Next, we will choose $A$ to estimate the lower bound. Suppose we have the following spectral decomposition for $O$:
    \begin{equation*}
        O = \sum_{i=1}^{r} \lambda_{i} |v_{i}\rangle \langle v_{i}|
    \end{equation*}
    with $|\lambda_{i}| = \sigma_{i}$. Then we denote
    \begin{equation*}
        |O| := \sum_{i=1}^{r}|\lambda_{i}| |v_{i}\rangle\langle v_{i}|
    \end{equation*}
    and
    \begin{equation*}
        \operatorname{sign}(O) := \sum_{i = 1}^{r} \operatorname{sign}(\lambda_{i})|v_{i}\rangle \langle v_i|
    \end{equation*}
    where
    \begin{equation*}
        \operatorname{sign}(\lambda) := 
        \begin{cases}
            +1, & \lambda > 0, \\
            0,  & \lambda = 0, \\
            -1, & \lambda < 0.
        \end{cases}
    \end{equation*}
    Then we take $A_{k} = |O|^{k-1} \operatorname{sign}(O), \, k \geq 1$ where we set $|O|^{0} = I_{d}$. We can compute
    \begin{equation*}
        \tr(A_{k}O) = \tr(|O|^{k}) = \sum_{i = 1}^{r} \sigma_{i}^{k} = ||O||_{k}^{k}.
    \end{equation*}
    Moreover, we have
    \begin{equation*}
        ||A_k||_{1} = \tr(|O|^{k-1}) = \sum_{i=1}^{r}\sigma_{i}^{k-1} = ||O||_{k-1}^{k-1}
    \end{equation*}
    and
    \begin{equation*}
        ||A_{k}||_{\infty} = ||O||_{\infty}^{k-1}.
    \end{equation*}
    Hence from \eqref{eq:master} we can get for any $k \geq 1$, we have
    \begin{equation}\label{eq:1-lb}
        \alpha \geq \frac{||O||_{k}^{k} - \epsilon ||O||_{\infty}^{k-1}}{||O||_{k-1}^{k-1}||O||_{1}}.
    \end{equation}
    On the other hand, For $l \in\{1,\dots,r\}$, we define
    \begin{equation*}
    A^{(\ell)} := \sum_{i=1}^{\ell} \operatorname{sign}(\lambda_i)\,\lvert v_i\rangle\!\langle v_i\rvert.
    \end{equation*}
    Then
    \begin{equation*}
    \tr\big(A^{(l)} O\big) = \sum_{i=1}^{l} |\lambda_i|, \qquad
    ||A^{(l)}||_{1} = l, \qquad
    ||A^{(l)}||_{\infty} = 1.
    \end{equation*}
    Applying \eqref{eq:master} with $A^{(l)}$ gives that for any $1 \leq l \leq r$, we have
    \begin{equation}\label{eq:2-lb}
        \alpha \geq \frac{\sum_{i=1}^{l}\sigma_{i} - \epsilon}{l ||O||_{1}}.
    \end{equation}
    Combine the equations \eqref{eq:1-lb} and \eqref{eq:2-lb}, we complete the proof.
\end{proof}
Now we will give the proof of Proposition~\ref{lb-err}.2
\begin{proof}[Proof of Proposition~\ref{lb-err}]
    Denote $\mathcal{N}_{U}$ the channel that $\mathcal{N}$ queries $U$ for $t$ times and $\mathcal{M}_{U} := \mathcal{N}_U\circ U$, then the Choi operator of $\mathcal{M}_{U}$ is
    \begin{equation*}
    \begin{aligned}
        J_{\mathcal{M}_{U}} = J_{\mathcal{N}_{U}} \ast J_{U} &= (N \ast |U\rangle\rangle \langle\langle U|^{\otimes t}) \ast  |U\rangle\rangle \langle\langle U|\\
        &= N \ast |U\rangle\rangle \langle\langle U|^{\otimes t+1}
    \end{aligned}
    \end{equation*}
    where $N$ is the Choi operator of $\mathcal{N}$. From the equation \eqref{shadow-inversion-error} we know that the dual map of $\mathcal{N}_{U}$ satisfies that 
    \begin{equation}\label{eq:eps-1}
    \sup_{U \in U(d)}||\mathcal{N}_{U}^{\dagger}(O) - UOU^{\dagger}||_{1} \leq \varepsilon.
    \end{equation}
    Note that
    \begin{align*}
        \mathcal{M}_{U}^{\dagger}(O) - O &= U^{\dagger} \mathcal{N}_{U}^{\dagger}(O)U - O\\
        &= U^{\dagger} \mathcal{N}_{U}^{\dagger}(O)U - U^{\dagger}(UOU^{\dagger})U\\
        &= U^{\dagger}(\mathcal{N}_{U}^{\dagger}(O) - UOU^{\dagger})U
    \end{align*}
    and hence the equation \eqref{eq:eps-1} is equivalent to
    \begin{equation*}
        \sup_{U \in U(d)}||\mathcal{M}_{U}^{\dagger}(O) - O||_{1} \leq \varepsilon.
    \end{equation*} 
    Then we consider the following map
    \begin{equation*}
        \mathcal{M}(A) := \int_{U(d)} \mathcal{M}_{U}(A) d\mu_{H}(U) 
    \end{equation*}
    where $\mu_{H}$ is the Haar measure on $U(d)$. Then we can verify that
    \begin{align*}
        ||\mathcal{M}^{\dagger}(O) - O||_{1} &= \Bigg\|\int_{U(d)} \left(\mathcal{M}_{U}^{\dagger}(O)-O\right) d\mu_{H}(U)\Bigg\|_{1}\\
        &\leq \int_{U(d)} \Big\|\mathcal{M}_{U}^{\dagger}(O)-O \Big\|_{1} d\mu_{H}(U)\\
        &\leq \int_{U(d)} \varepsilon d\mu_{H}(U) = \varepsilon.
    \end{align*}
    Moreover, the Choi operator of $\mathcal{M}$ is
    \begin{equation*}
        J_{\mathcal{M}} = \int_{U(d)} J_{\mathcal{M}_{U}} d\mu_{H}(U) = N \ast \int_{U(d)} |U\rangle\rangle\langle\langle U|^{\otimes t+1} d\mu_{H}(U).
    \end{equation*}
    By Lemma~\ref{str-lem} we can get
    \begin{equation*}
        J_{\mathcal{M}} \preceq \frac{t+1}{d} \cdot I_{\mathcal{H}_0} \otimes I_{\mathcal{H}_{2n+2}}.
    \end{equation*}
     Therefore, by Lemma~\ref{key-lem} where we use $\alpha = (t+1)/d$ we can get
     \begin{equation}\label{eq:before-lambda}
         t \geq d \cdot \max\Bigg\{\sup_{k \geq 1} \frac{||O||_{k}^{k} - \epsilon ||O||_{\infty}^{k-1}}{||O||_{k-1}^{k-1}||O||_{1}}, \max_{1 \leq l \leq r} \frac{\sum_{i=1}^{l}\sigma_{i} - \epsilon}{l ||O||_{1}}\Bigg\} - 1
     \end{equation}
     where $\sigma_{1} \geq \sigma_{2} \geq \cdots \geq \sigma_{r}>0$ are positive singular values of $O$ and $r$ is the rank of $O$. Moreover, $\mathcal{M}^{\dagger}$ is a unital map since that $\mathcal{M}_{U}^{\dagger}$ is unital for any $U \in U(d)$. Then for any $\lambda \in \mathbb{R}$, we have
     \begin{equation*}
         \mathcal{M}^{\dagger}(O - \lambda I_{d}) = O - \lambda I_{d}.
     \end{equation*}
     Hence we can apply $O_{\lambda} = O - \lambda I_{d}$ to \eqref{eq:before-lambda} for any $\lambda \in \mathbb{R}$ and we can get 
     \begin{equation*}
        t \geq d \cdot \sup_{\lambda \in \mathbb{R}} \Bigg\{ \max\bigg\{\sup_{k \geq 1} \frac{||O_{\lambda}||_{k}^{k} - \epsilon ||O_{\lambda}||_{\infty}^{k-1}}{||O_{\lambda}||_{k-1}^{k-1}||O_{\lambda}||_{1}}, \max_{1 \leq l \leq r(\lambda)} \frac{\sum_{i=1}^{l}\sigma_{i}^{\lambda} - \epsilon}{l ||O_{\lambda}||_{1}}\bigg\} \Bigg\} - 1
    \end{equation*}
    which completes the proof.
    \end{proof}
    Now we will restate and give the proof of Theorem~\ref{lb-thm} based on the result of Proposition~\ref{lb-err}.
    \lbthm*
    \begin{proof}[Proof of Theorem~\ref{lb-thm}]
    Take $\varepsilon = 0$ in equation \eqref{lb-query-err} we can get
    \begin{equation*}
         t \geq d \cdot \sup_{\lambda \in \mathbb{R}} \Bigg\{ \max\bigg\{\sup_{k \geq 1} \frac{||O_{\lambda}||_{k}^{k}}{||O_{\lambda}||_{k-1}^{k-1}||O_{\lambda}||_{1}}, \max_{1 \leq l \leq r(\lambda)} \frac{\sum_{i=1}^{l}\sigma_{i}^{\lambda}}{l ||O_{\lambda}||_{1}}\bigg\} \Bigg\} - 1
    \end{equation*}
     where $\sigma_{1}^{\lambda} \geq \sigma_{2}^{\lambda} \geq \cdots \geq \sigma_{r}^{\lambda}>0$ are positive singular values of $O_{\lambda}$ and $r(\lambda)$ is the rank of $O_{\lambda}$. Then we denote
     \begin{equation*}
         \mu_{l}^{\lambda} = \frac{1}{l} \sum_{i = 1}^{l}\sigma_{i}^{\lambda}, \, l = 1,\cdots,r(\lambda).
     \end{equation*}
     Note that we have
     \begin{equation*}
         (l+1)\mu_{l+1}^{\lambda} = \sum_{i=1}^{l}\sigma_{i}^{\lambda} + \sigma_{l+1}^{\lambda} \leq \sum_{i=1}^{l}\sigma_{i}^{\lambda} + \sigma_{l}^{\lambda} \leq l\mu_{l}^{\lambda} + \mu_{l}^{\lambda} = (l+1)\mu_{l}^{\lambda}
     \end{equation*}
     which means that $\mu_{l+1}^{\lambda} \leq \mu_{l}^{\lambda}$. Hence
     \begin{equation*}
         \max_{1 \leq l \leq r(\lambda)} \mu_{l}^{\lambda} = \mu_{1}^{\lambda} = \sigma_{1}^{\lambda} = ||O_{\lambda}||_{\infty}
     \end{equation*}
     and then we will get
     \begin{equation}\label{eq:mono-1}
         \max_{1 \leq l \leq r(\lambda)} \frac{\sum_{i=1}^{l}\sigma_{i}^{\lambda}}{l ||O_{\lambda}||_{1}} = \frac{||O_{\lambda}||_{\infty}}{||O_{\lambda}||_{1}}.
     \end{equation}
     Moreover, we define
     \begin{equation*}
         S_{s}^{\lambda} = \sum_{i=1}^{r(\lambda)} (\sigma_{i}^{\lambda})^{s}, \, s>0
     \end{equation*}
     and then we have
     \begin{equation*}
         f_k^{\lambda}:= \frac{||O_{\lambda}||_{k}^{k}}{||O_{\lambda}||_{k-1}^{k-1}||O_{\lambda}||_{1}} = \frac{S_k^{\lambda}}{S_{k-1}^{\lambda}S_1^{\lambda}}.
     \end{equation*}
     We define the function
     \begin{equation*}
         \phi^{\lambda}(s) := \ln\left(\sum_{i=1}^{r(\lambda)} (\sigma_{i}^{\lambda})^{s}\right), \, s>0.
     \end{equation*}
     We can compute that 
     \begin{equation*}
         (S_{s}^{\lambda})' = \sum_{i=1}^{r(\lambda)} \frac{d}{ds}(\sigma_{i}^{\lambda})^s = \sum_{i=1}^{r(\lambda)} (\sigma_{i}^{\lambda})^{s} \ln\sigma_i^{\lambda}
     \end{equation*}
     and hence
     \begin{equation*}
         (\phi^{\lambda})'(s) = \frac{(S_{s}^{\lambda})'}{S_s^{\lambda}} = \frac{\sum_{i=1}^{r(\lambda)} (\sigma_{i}^{\lambda})^{s}\ln \sigma_i^{\lambda}}{\sum_{j=1}^{r(\lambda)} (\sigma_{j}^{\lambda})^{s}}.
     \end{equation*}
     We define a probability measure related to $s$ on the set $\{1,2,\cdots,r(\lambda)\}$ which we denote as $p^{\lambda}(s)$:
     \begin{equation*}
         p_{i}^{\lambda}(s) := \frac{(\sigma_{i}^{\lambda})^{s}}{S_s^{\lambda}}, \quad p_{i}^{\lambda}(s) \geq 0, \quad \sum_{i=1}^{r(\lambda)}p_{i}^{\lambda}(s) = 1.
     \end{equation*}
     Then we have
     \begin{equation*}
         (\phi^{\lambda})'(s) = \sum_{i=1}^{r(\lambda)} p_{i}^{\lambda}(s) \ln \sigma_{i}^{\lambda} = \mathbb{E}_{p^{\lambda}(s)}[\ln \sigma^{\lambda}].
     \end{equation*}
     Moreover, we can get that
     \begin{equation*}
         (p_{i}^{\lambda})'(s) = \frac{(\sigma_{i}^{\lambda})^{s} \ln \sigma_{i}^{\lambda} \cdot S_s^{\lambda} - (\sigma_{i}^{\lambda})^{s} \cdot (S_{s}^{\lambda})'}{(S_{s}^{\lambda})^2} = p_{i}^{\lambda}(s)\bigl(\ln \sigma_{i}^{\lambda} - (\phi^{\lambda})'(s) \bigr).
     \end{equation*}
     Hence we can compute that
     \begin{equation*}
         \begin{aligned}
             (\phi^{\lambda})''(s) &= \frac{d}{ds}\bigl(\sum_{i=1}^{r(\lambda)}p_{i}^{\lambda}(s)\ln \sigma_i^{\lambda}\bigr) = \sum_{i=1}^{r(\lambda)} (p_{i}^{\lambda})'(s) \ln \sigma_i^{\lambda}\\
             &= \sum_{i=1}^{r(\lambda)} p_{i}^{\lambda}(s)\bigl(\ln \sigma_{i}^{\lambda} - (\phi^{\lambda})'(s) \bigr)\ln \sigma_i^{\lambda}\\
             &= \sum_{i=1}^{r(\lambda)} p_{i}^{\lambda}(s)(\ln \sigma_i^{\lambda})^2 - (\phi^{\lambda})'(s)\sum_{i=1}^{r(\lambda)}p_{i}^{\lambda}(s)\ln \sigma_i^{\lambda}\\
             &= \sum_{i=1}^{r(\lambda)} p_{i}^{\lambda}(s)(\ln \sigma_i^{\lambda})^2 - ((\phi^{\lambda})'(s))^2\\
             &= \mathbb{E}_{p^{\lambda}(s)}[(\ln \sigma^{\lambda})^2] - \bigl(\mathbb{E}_{p^{\lambda}(s)}[\ln \sigma^{\lambda}]\bigr)^2 = \operatorname{Var}_{p^{\lambda}(s)}(\ln \sigma^{\lambda}) \geq 0.
         \end{aligned}
     \end{equation*}
     Therefore, $\phi^{\lambda}(s)$ is convex and then for any integer $k \geq 2$, we have
     \begin{equation*}
         2\phi^{\lambda}(k) \leq \phi^{\lambda}(k-1) + \phi^{\lambda}(k+1)
     \end{equation*}
     which is equivalent to 
     \begin{equation*}
         \frac{S_{k+1}^{\lambda}}{S_{k}^{\lambda}} \geq \frac{S_k^{\lambda}}{S_{k-1}^{\lambda}}.
     \end{equation*}
     Hence we can derive that $f_k^{\lambda}$ is a non-decreasing sequence and
     \begin{equation*}
         \lim_{k \to \infty} f_k^{\lambda} = \lim_{k \to \infty} \frac{M^{\lambda}(\sigma_{1}^{\lambda})^k}{(M^{\lambda}(\sigma_{1}^{\lambda})^{k-1})S_1^{\lambda}} = \frac{\sigma_{1}^{\lambda}}{S_1^{\lambda}} = \frac{||O_{\lambda}||_{\infty}}{||O_{\lambda}||_{1}}
     \end{equation*}
     where $M^{\lambda}$ is the multiplicity of $\sigma_1^{\lambda}$. Hence we can get
     \begin{equation}\label{eq:mono-2}
         \sup_{k \geq 1} \frac{||O_{\lambda}||_{k}^{k}}{||O_{\lambda}||_{k-1}^{k-1}||O_{\lambda}||_{1}} = \frac{||O_{\lambda}|||_{\infty}}{||O_{\lambda}||_{1}}.
     \end{equation}
     From \eqref{eq:mono-1} and \eqref{eq:mono-2} we can obtain
     \begin{equation}\label{eq:guodu}
         t \geq d \cdot \sup_{\lambda \in \mathbb{R}}\frac{||O - \lambda I_{d}||_{\infty}}{||O - \lambda I_{d}||_{1}} - 1.
     \end{equation}
     Together with the following Lemma~\ref{lem-c-d}, we complete the proof.
     \end{proof}
     \begin{lemma}\label{lem-c-d}
         For any $d \in \mathbb{N}^{+}$, suppose $O \not\propto I_{d}$ is a fixed non-trivial $d$-dimensional Hermitian observable, then
         \begin{equation}\label{sup-max}
        \sup_{\lambda \in \mathbb{R}}\frac{||O - \lambda I_{d}||_{\infty}}{||O - \lambda I_{d}||_{1}} =  \max_{k = 1, \cdots, d}\Big\{ \frac{\max_{1 \leq i \leq d}|\lambda_{i} - \lambda_{k}|}{\sum_{i=1}^{d}|\lambda_{i} - \lambda_{k}|} \Big\}
         \end{equation}
         where $\lambda_{1} \geq \cdots \geq \lambda_{d}$ are the eigenvalues of $O$.
     \end{lemma}
     \begin{proof}
     Since $O$ is Hermitian, it is unitarily diagonalizable. As both $\|\cdot\|_{\infty}$ and $\|\cdot\|_{1}$ are unitarily invariant, we may, without loss of generality, assume that
\[
O = \operatorname{diag}(\lambda_1,\ldots,\lambda_d),
\]
with $\lambda_1 \ge \cdots \ge \lambda_d$. For each $\lambda \in \mathbb{R}$, the eigenvalues of $O - \lambda I_d$ are $\lambda_i - \lambda$, hence
\[
\|O - \lambda I_d\|_\infty = \max_{1 \le i \le d} |\lambda_i - \lambda|,
\qquad
\|O - \lambda I_d\|_1 = \sum_{i=1}^d |\lambda_i - \lambda|.
\]
If we define
\[
f(\lambda) := \frac{\|O - \lambda I_d\|_\infty}{\|O - \lambda I_d\|_1}
= \frac{\max_{1 \le i \le d} |\lambda_i - \lambda|}{\sum_{i=1}^d |\lambda_i - \lambda|},
\]
then the desired identity \eqref{sup-max} is equivalent to
\[
\sup_{\lambda \in \mathbb{R}} f(\lambda)
= \max_{k = 1,\ldots,d} f(\lambda_k).
\]
Observe first that $f$ is invariant under a simultaneous affine change of variables:
for any $a \in \mathbb{R}$ and any $c>0$, if we replace all eigenvalues $\lambda_i$ by
$\tilde{\lambda}_i := c \lambda_i + a$ and simultaneously replace $\lambda$ by
$\tilde{\lambda} := c \lambda + a$, then
\[
|\tilde{\lambda}_i - \tilde{\lambda}| = c|\lambda_i - \lambda|,
\qquad
\sum_i |\tilde{\lambda}_i - \tilde{\lambda}| = c \sum_i |\lambda_i - \lambda|,
\]
so the ratio $f(\lambda)$ is unchanged. Hence we may assume without loss of generality that
\[
\lambda_1 = 1,\qquad \lambda_d = -1.
\]
We now analyze $f$ on the real line.

\smallskip
\emph{1. Outside the spectral interval.}
For $\lambda \ge 1$ we have $\lambda_i \le 1 \le \lambda$ for all $i$, so
\[
|\lambda_i - \lambda| = \lambda - \lambda_i,\quad
\|O-\lambda I_d\|_\infty = \lambda - \lambda_d = \lambda+1,\quad
\|O-\lambda I_d\|_1 = d\lambda - \sum_{i=1}^d \lambda_i.
\]
Thus, we can get
\[
f(\lambda) = \frac{\lambda+1}{d\lambda - \sum_i \lambda_i}.
\]
A direct derivative computation shows
\[
f'(\lambda)
= \frac{d\lambda - \sum_i \lambda_i - d(\lambda+1)}{(d\lambda - \sum_i \lambda_i)^2}
= \frac{-d - \sum_i\lambda_i}{(d\lambda - \sum_i \lambda_i)^2} < 0,
\]
Hence $f$ is strictly decreasing on $[1,\infty)$, so
\[
\sup_{\lambda \ge 1} f(\lambda) = f(1) = f(\lambda_1).
\]
Similarly, for $\lambda \le -1$, one checks that $f$ is strictly increasing on $(-\infty,-1]$, so
\[
\sup_{\lambda \le -1} f(\lambda) = f(-1) = f(\lambda_d).
\]

\smallskip
\emph{2. Between the extreme eigenvalues.}
It remains to consider $\lambda \in [-1,1]$. For each $k = 1,\ldots,d-1$ and
$\lambda \in (\lambda_{k+1},\lambda_k)$, exactly $k$ eigenvalues satisfy
$\lambda_i \ge \lambda$ and $d-k$ satisfy $\lambda_i < \lambda$. Thus
\[
\sum_{i=1}^d |\lambda_i - \lambda|
= \sum_{i=1}^k (\lambda_i - \lambda) + \sum_{i=k+1}^d (\lambda - \lambda_i)
= A_k + B_k \lambda,
\]
for suitable real constants $A_k,B_k$ (depending only on $\lambda_1,\dots,\lambda_d$, not on $\lambda$).

On the other hand, for any $\lambda \in [-1,1]$, the largest distance to the set
$\{\lambda_1,\dots,\lambda_d\}\subset[-1,1]$ is always attained at one of the extreme points
$\lambda_1 = 1$ or $\lambda_d = -1$, so
\[
\max_{1 \le i \le d} |\lambda_i - \lambda|
= \max\{1-\lambda,\,1+\lambda\}
=
\begin{cases}
1 - \lambda, & \lambda \le 0,\\[2pt]
1 + \lambda, & \lambda \ge 0.
\end{cases}
\]
Therefore, on each interval $(\lambda_{k+1},\lambda_k)$ that does not contain $0$, both the numerator and the denominator of $f(\lambda)$ are affine functions of $\lambda$, so
\[
f(\lambda) = \frac{\alpha + \beta \lambda}{A_k + B_k \lambda}
\]
for suitable constants $\alpha,\beta$. The derivative of this fractional linear function has a fixed sign on the interval, hence $f$ is monotone on that interval. Consequently, $f$ cannot attain a local maximum in the interior of any such interval; its maximum on $(\lambda_{k+1},\lambda_k)$ is attained at one of the endpoints $\lambda_k$ or $\lambda_{k+1}$.

\smallskip
\emph{3. The interval containing $0$.}
There is at most one index $p\in\{1,\ldots,d-1\}$ such that
\[
\lambda_{p+1} \le 0 \le \lambda_p.
\]
If $0$ coincides with some eigenvalue $\lambda_k$, then $0$ is already among the points
$\{\lambda_1,\ldots,\lambda_d\}$ where we are taking the maximum, and there is nothing to prove. Otherwise, we have a strict inequality $\lambda_{p+1}<0<\lambda_p$, and we must check that $f(0)$ is not strictly larger than both $f(\lambda_{p+1})$ and $f(\lambda_p)$. On the interval $[\lambda_{p+1},\lambda_p]$, the sign pattern of $\lambda_i - \lambda$ is constant: for $i\le p$ we have $\lambda_i \ge \lambda$, and for $i>p$ we have $\lambda_i \le \lambda$. Hence
\[
\sum_{i=1}^d |\lambda_i - \lambda|
= \sum_{i\le p} (\lambda_i - \lambda) + \sum_{i>p} (\lambda - \lambda_i)
= A + B\lambda,
\]
where
\[
A := \sum_{i\le p}\lambda_i - \sum_{i>p}\lambda_i
= \sum_{i=1}^d |\lambda_i| > 0,\qquad
B := d - 2p.
\]
Moreover, as above,
\[
\max_{i}|\lambda_i - \lambda|
=
\begin{cases}
1 - \lambda, & \lambda \in [\lambda_{p+1},0],\\[2pt]
1 + \lambda, & \lambda \in [0,\lambda_p].
\end{cases}
\]
Thus
\[
f(\lambda)
=
\begin{cases}
\dfrac{1 - \lambda}{A + B\lambda}, & \lambda \in [\lambda_{p+1},0],\\[6pt]
\dfrac{1 + \lambda}{A + B\lambda}, & \lambda \in [0,\lambda_p].
\end{cases}
\]
Recall that
\[
A = \sum_{i=1}^d |\lambda_i| > 0, \qquad B = d - 2p \in \mathbb{Z},
\]
and, by construction,
\[
A + B\lambda = \sum_{i=1}^d |\lambda_i - \lambda| > 0
\quad\text{for all } \lambda \in [\lambda_{p+1},\lambda_p].
\]
We now show in detail that
\[
f(0) \le \max\{f(\lambda_{p+1}),\, f(\lambda_p)\}.
\]
To this end, suppose for contradiction that $f(0)$ is strictly larger than both
endpoint values, i.e.
\begin{equation}\label{eq:contr-assumption}
f(0) > f(\lambda_{p+1})
\quad\text{and}\quad
f(0) > f(\lambda_p).
\end{equation}
Using the explicit formulas,
\[
f(0) = \frac{1}{A}, \qquad
f(\lambda_{p+1}) = \frac{1-\lambda_{p+1}}{A + B\lambda_{p+1}}, \qquad
f(\lambda_p) = \frac{1+\lambda_p}{A + B\lambda_p},
\]
we can rewrite the inequalities in \eqref{eq:contr-assumption} as algebraic
conditions on $A$ and $B$.

\medskip
\noindent\textbf{Left endpoint.}
From $f(0) > f(\lambda_{p+1})$ we obtain
\[
\frac{1}{A}
>
\frac{1-\lambda_{p+1}}{A + B\lambda_{p+1}}.
\]
Both denominators are positive, since $A>0$ and
$A+B\lambda_{p+1} = \sum_i |\lambda_i-\lambda_{p+1}|>0$, so we may
cross-multiply without changing the inequality sign:
\[
A(1 - \lambda_{p+1}) < A + B\lambda_{p+1}.
\]
Expanding and cancelling $A$ yields
\[
-A\lambda_{p+1} < B\lambda_{p+1}
\quad\Longleftrightarrow\quad
(-A - B)\lambda_{p+1} < 0
\quad\Longleftrightarrow\quad
(A+B)\lambda_{p+1} > 0.
\]
Since $\lambda_{p+1} < 0$, this implies
\begin{equation}\label{eq:AplusB-neg}
A + B < 0.
\end{equation}

\medskip
\noindent\textbf{Right endpoint.}
Similarly, from $f(0) > f(\lambda_p)$ we get
\[
\frac{1}{A}
>
\frac{1+\lambda_p}{A + B\lambda_p},
\]
and again $A+B\lambda_p = \sum_i |\lambda_i-\lambda_p|>0$, so
\[
A(1 + \lambda_p) < A + B\lambda_p.
\]
Expanding and cancelling $A$ gives
\[
A\lambda_p < B\lambda_p
\quad\Longleftrightarrow\quad
(A-B)\lambda_p < 0.
\]
Because $\lambda_p > 0$, this inequality is equivalent to
\begin{equation}\label{eq:AminusB-neg}
A - B < 0.
\end{equation}

\medskip
\noindent\textbf{Deriving the contradiction.}
From \eqref{eq:AplusB-neg} we have $A+B<0$, hence
\[
B < -A < 0,
\]
so in particular $B$ is strictly negative. On the other hand,
\eqref{eq:AminusB-neg} says $A-B<0$, i.e.
\[
A < B.
\]
Combining these two inequalities yields
\[
A < B < 0,
\]
which contradicts $A = \sum_{i=1}^d |\lambda_i| > 0$. This contradiction shows that our assumption \eqref{eq:contr-assumption}
was false. Therefore $f(0)$ cannot be strictly larger than both
$f(\lambda_{p+1})$ and $f(\lambda_p)$, and we must have
\[
f(0) \le \max\{f(\lambda_{p+1}),\, f(\lambda_p)\}.
\]
In particular, also on the interval containing $0$, the maximum of $f$ is
attained at one of the eigenvalues $\lambda_{p+1}$ or $\lambda_p$. Collecting the above cases, we conclude that the global supremum of $f$ over $\mathbb{R}$ is achieved at some eigenvalue $\lambda_k$, $1 \le k \le d$. This proves that
\[
\sup_{\lambda \in \mathbb{R}} \frac{\|O - \lambda I_{d}\|_{\infty}}{\|O - \lambda I_{d}\|_{1}}
= \max_{k = 1, \ldots, d}
\left\{
\frac{\max_{1 \le i \le d}|\lambda_{i} - \lambda_{k}|}
{\sum_{i=1}^{d}|\lambda_{i} - \lambda_{k}|}
\right\},
\]
as claimed.
\end{proof}
Next, beyond the analytical lower bound, we can also formulate an SDP-type lower bound, which is theoretically at least as strong as the analytical one. 
\begin{corollary}
    Under the assumptions of Proposition~\ref{lb-err}, the lower bound for $t$ is the solution of the following SDP:
\begin{equation*}
  \begin{aligned}
    &\min_{J,T,\lambda,\beta}\ \beta \\[0.6ex]
    \text{subject to}\quad
    & J \in \mathsf{L}(\mathbb{C}^{d} \otimes \mathbb{C}^{d}),\quad
      T \in \mathsf{L}(\mathbb{C}^{d}),\quad
      \lambda,\beta \in \mathbb{R},\\
    & J \succeq 0,\quad T \succeq 0,\\
    & \operatorname{tr}_{2}(J) = I_{d},\\
    & J \preceq \frac{\beta + 1}{d}\, I_{d} \otimes I_{d},\\
    & \operatorname{tr}(T) \leq \epsilon,\\
    &
      \begin{pmatrix}
        T & \operatorname{tr}_{2}\!\big(J^{\intercal}(I_{d} \otimes O_{\lambda}^{\intercal})\big) - O_{\lambda}\\
        \operatorname{tr}_{2}\!\big(J^{\intercal}(I_{d} \otimes O_{\lambda}^{\intercal})\big) - O_{\lambda} & T
      \end{pmatrix} \succeq 0.
  \end{aligned}
\end{equation*}
\end{corollary}

\section{Comparison with the virtual comb approach}
\label{app:virtualcomb}

Here we compare the efficiency of our approach and the virtual comb approach \cite{mo2025parameterized} in estimating $\tr\!\bigl[U^{\dagger}\rho UO]$ for any qubit observable $O$. Without loss of generality we again take $O=Z$. 
In particular, we calculate the variance when estimating $\tr\!\left[U^{\dagger}\rho UZ\right]$ through these two methods.   

It is noted that $\tr\!\left[U^{\dagger}\rho UZ\right]$ could be considered as an expectation of the following binary distribution 
\begin{equation}
    B=\left\{\left(1,\tr\!\left[U^{\dagger}\rho U\ket{0}\!\bra{0}\right]\right),\left(-1,\tr\!\left[U^{\dagger}\rho U\ket{1}\!\bra{1}\right]\right)\right\},
\end{equation}
the expectation and the variance of which are
\begin{equation}
\begin{aligned}
    \mathbb E_{V\sim B}[V]&=\tr\!\left[U^{\dagger}\rho U\ket{0}\!\bra{0}\right]-\tr\!\left[U^{\dagger}\rho U\ket{1}\!\bra{1}\right]=\tr\!\left[U^{\dagger}\rho UZ\right],\\
    \mathbb D_{V\sim B}[V]&=\mathbb E_{V\sim B}\left[V^2\right]-\mathbb E_{V\sim B}[V]^2=1-\tr\!\left[U^{\dagger}\rho UZ\right]^2.
\end{aligned}
\end{equation}
As a result, the distribution $B$ is an unbiased estimation of $\tr\!\left[U^{\dagger}\rho UZ\right]$.

For our shadow inversion approach, we have the following:
\begin{proposition}
    The $3$-query shadow inversion derives an unbiased estimation of $\tr\!\left[U^{\dagger}\rho UZ\right]$, whose variance is $\frac{3}{N}\left(1-\tr\!\left[U^{\dagger}\rho UZ\right]^2\right)$ with totally $N$ uses of $U$.
\end{proposition}
\begin{proof}
Suppose $N$ is a multiple of $3$, and we could sample distribution $B$ for $N/3$ times. Hence the sample mean $\overline V$ is also an unbiased estimation of expectation $\tr\!\left[U^{\dagger}\rho UZ\right]$ with variance
\begin{equation}
    \mathbb D_{V_j\sim B}\left[\overline V\right]=\frac{3}{N}\mathbb D_{V\sim B}[V]=\frac{3}{N}\left(1-\tr\!\left[U^{\dagger}\rho UZ\right]^2\right).
\end{equation}
\end{proof}

On the other hand, for the virtual comb approach it holds that
\begin{proposition}
The variance of the estimation of $\tr\!\left[U^{\dagger}\rho UZ\right]$ based on $1$-slot virtual comb in \cite{mo2025parameterized} is at least $\frac{1}{N}\left(9-\tr\!\left[U^{\dagger}\rho UZ\right]^2\right)$ with totally $N$ uses of $U$.
\end{proposition}
\begin{proof}
By \cite{mo2025parameterized}, there exists a virtual comb 
\begin{equation}
    \cC=(1+\eta)\cC_+-\eta\cC_-
\end{equation}
satisfying
\begin{equation}
    \cC(U)(\rho)=U^{\dagger}\rho U,
\end{equation}
with both $\cC_+$ and $\cC_-$ 1-slot combs and  the minimal value of $\eta$ is exactly $1$. 
Denote the two binary distributions derived by the $Z$-measures of $\cC_+(U)(\rho)$ and $\cC_-(U)(\rho)$ as
\begin{equation}
    B_\pm=\left\{\left(1,\tr\!\left[\cC_\pm(U)(\rho)\ket{0}\!\bra{0}\right]\right),\left(-1,\tr\!\left[\cC_\pm(U)(\rho)\ket{1}\!\bra{1}\right]\right)\right\},
\end{equation}
and we have 
\begin{equation}
\begin{aligned}
    \mathbb E_{V\sim B_\pm}[V]&=\tr\!\left[\cC_\pm(U)(\rho)\ket{0}\!\bra{0}\right]-\tr\!\left[\cC_\pm(U)(\rho)\ket{1}\!\bra{1}\right]=\tr\!\left[\cC_\pm(U)(\rho)Z\right],\\
    \mathbb D_{V\sim B_\pm}[V]&=\mathbb E_{V\sim B_\pm}\left[V^2\right]-\mathbb E_{V\sim B_\pm}[V]^2=1-\tr\!\left[\cC_\pm(U)(\rho)Z\right]^2.
\end{aligned}
\end{equation}
In the case with minimal $\eta$, we find 
\begin{equation}
    \left|\tr\!\left[\cC_+(U)(\rho)Z\right]\right|=\left|\tr\!\left[\cC_-(U)(\rho)Z\right]\right|
\end{equation}
and moreover
\begin{equation}
    \tr\!\left[\cC_+(U)(\rho)Z\right]=-\tr\!\left[\cC_-(U)(\rho)Z\right]=\frac13\tr\!\left[\cC(U)(\rho)Z\right].
\end{equation}
If we use $x$ times of $U$ in estimating $\tr\!\left[\cC_+(U)(\rho)Z\right]$ and $N-x$ times of $U$ in estimating $\tr\!\left[\cC_-(U)(\rho)Z\right]$, we have $(1+\eta)\overline V-\eta \overline W$ is an unbiased estimation of $\tr\!\left[U^{\dagger}\rho UZ\right]$ where each $V_j\sim B_+$ and each $W_k\sim B_-$, whose variance is
\begin{align}
    &\mathbb D_{V_j\sim B_+,W_k\sim B_-}\left[2\overline V-\overline W\right]\\
    =&4\mathbb D_{V_j\sim B_+}\left[\overline V\right]+\mathbb D_{W_k\sim B_-}\left[\overline W\right]\\
    =&\frac{4}{x}\mathbb D_{V\sim B_+}\left[V\right]+\frac{1}{N-x}\mathbb D_{W\sim B_-}\left[W\right]\\
    =&\left(\frac{4}{x}+\frac{1}{N-x}\right)\left(1-\frac{1}{9}\tr\!\left[U^{\dagger}\rho UZ\right]^2\right)\\
    \ge&\frac{1}{N}\left(9-\tr\!\left[U^{\dagger}\rho UZ\right]^2\right),
\end{align}
where the last equality holds only if $x=2N/3$.
\end{proof}





\section{General SDP framework for shadow inversion}\label{sec:optimizatoin_problem}

In this section, we will give the formulation of the SDP model tailored for the general shadow inversion problem with any $d,t \in \mathbb{N}^{+}$ introduced in Section~\ref{prob-formu}. Specifically, we will show some crucial properties which allow us to reduced the size of variables in the SDP significantly.
\subsection{General SDP formulation and symmetry property}

As we have introduced in Section~\ref{prob-formu}, for $d,t \in \mathbb{N}^{+}$ the shadow inversion condition equals
\begin{equation*}
\mathcal{N}_{U}^{\dagger}(O) = UOU^{\dagger}
\end{equation*}
for $U \in \opr{U}(d)$ where $O$ is any fixed $d$-dimensional observable and $\mathcal{N}_{U}$ is the output channel of the quantum comb $C$ after $t$ queries to $U$. Formulate in the language of Choi operator and link product \cite{Chiribella2008quantum}
(see Appendix~\ref{sec:optimizatoin_problem} for details), it yields
\begin{equation*}
\begin{aligned}
    \mathcal{N}_{U}^{\dagger}(O) &= F(C \ast |U\rangle\!\rangle\!\langle\!\langle U|_{\boldsymbol{IO}}^{\otimes t})^{\intercal}F \ast O_{F}\\
    &= \tr_{F}\bigl[F(C \ast |U\rangle\!\rangle\!\langle\!\langle U|_{\boldsymbol{IO}}^{\otimes t})^{\intercal}F(O_{F}^{\intercal} \otimes I_{P})\bigr].
\end{aligned}
\end{equation*}
where $F$ is the switch operator $F(\ket{a}\otimes\ket{b})=\ket{b}\otimes\ket{a}$ and 
$C \in \mathsf{L}\left(\mathcal{H}_P \otimes \bigotimes_{i=1}^{\intercal} (\mathcal{H}_{I_i} \otimes \mathcal{H}_{O_i}) \otimes \mathcal{H}_F\right)$ is the Choi operator of a $t$-slots sequential quantum comb. 
\begin{remark}
    We will take sequential quantum comb as an example and one can also get the results for parallel quantum comb in a similar way. When dealing with the parallel situation, note that the sequential and parallel quantum comb have Choi operators $C$ of different system order. This means for sequential situation we have
\begin{equation*}
    C \ast |U\rangle\!\rangle\!\langle\!\langle U|^{\otimes t} = Tr_{\boldsymbol{IO}}\bigl[C^{T_{\boldsymbol{IO}}}(I_{P} \otimes |U\rangle\!\rangle\!\langle\!\langle U|^{\otimes t} \otimes I_{F})\bigr]
\end{equation*}
while for parallel situation we have
\begin{equation*}
    C \ast |U\rangle\!\rangle\!\langle\!\langle U|^{\otimes t} = Tr_{\boldsymbol{IO}}\bigl[C^{T_{\boldsymbol{IO}}}\left(I_{P} \otimes \Pi(|U\rangle\!\rangle\!\langle\!\langle U|^{\otimes t})\Pi \otimes I_{F}\right)\bigr]
\end{equation*}
where $\Pi$ is the permutation operator maps tensor factors from the ordering
\begin{equation*}
(I_1, O_1, \ldots, I_t, O_t)
\end{equation*}
to the ordering
\begin{equation*}
( I_1, \ldots, I_t, O_1, \ldots, O_t).
\end{equation*}
\end{remark}
\noindent Moreover, We also have the following equation
\begin{equation*}
    UOU^{\dagger} = |U\rangle\!\rangle\!\langle\!\langle U|_{FP} \ast O_F
    = \tr_{F}\bigl[|U\rangle\!\rangle\!\langle\!\langle U|_{FP}(O_{F}^{\intercal} \otimes I_{P})\bigr].
\end{equation*}
Here we regard $|U\rangle\!\rangle\!\langle\!\langle U| \in \mathsf{L}(\mathcal{H}_{F} \otimes \mathcal{H}_{P})$, that is, we take an input observable $O$ in $\mathcal{H}_{F}$ and obtain an output hermitian operator on $\mathcal{H}_P$. 
Next, we will give the following SDP model for general $t$-query shadow inversion of $d$-dimensional unitary under $O$ in the setting of sequential quantum combs:
\begin{equation}\label{formulation-SDP}
    \begin{aligned}
        \min_{C} \;\;
        & \int_{\opr{U}(d)}
        \Big\|
        \tr_{F}\!\Big[
        F\big(C \ast |U\rangle\!\rangle\!\langle\!\langle U|_{\boldsymbol{IO}}^{\otimes t}\big)^{\intercal}
        F (O_{F}^{\intercal} \otimes I_P)  \\
        &\qquad\qquad\quad
        - |U\rangle\!\rangle\!\langle\!\langle U|_{FP}(O_{F}^{\intercal} \otimes I_{P})
        \Big]
        \Big\|\,
        d\mu_{H}(U) \\[1em]
        \text{s.t.}\;\;
        & 0 \le C \in \mathsf{L}\left(\mathcal{H}_P \bigotimes_{i=1}^{k} (\mathcal{H}_{I_{i}} \otimes \mathcal{H}_{O_{i}}) \otimes \mathcal{H}_F\right), \\
        & \tr_{F}(C)
        = \tr_{O_t F}(C) \otimes \frac{I_{O_t}}{d_{O_t}},\\
        & \tr_{I_t O_t F}(C)
        = \tr_{O_{t-1} I_t O_t F}(C)\otimes \frac{I_{O_{t-1}}}{d_{O_{t-1}}},\\
        & \ \vdots \\
        & \tr_{I_1 O_1 \cdots I_t O_t F}(C)
        = \tr_{P I_1 O_1 \cdots I_t O_t F}(C)\otimes \frac{I_P}{d_P},\\
        & \tr(C) = d_{P}\, d_{\boldsymbol{O}}.
    \end{aligned}
\end{equation}
The norm $||\cdot||$ here can be chosen as any legal matrix norm and we choose the Frobenius norm in practice.
\begin{remark}
In the traditional setting about deterministic and exact inversion of unknown unitary operation, one can take the average channel fidelity between $C \ast |U\rangle\!\rangle\!\langle\!\langle U|_{\boldsymbol{IO}}^{\otimes t}$ and $|f(U)\rangle\!\rangle\!\langle\!\langle f(U)|_{PF}$ as the target function in SDP model. By this, the performance operator
\begin{equation*}
\Omega := \frac{1}{d^2} \int_{\opr{U}(d)} |f(U)\rangle\!\rangle\!\langle\!\langle f(U)|_{PF} \otimes |U^*\rangle\!\rangle\!\langle\!\langle U^*|_{\boldsymbol{IO}}^{\otimes t} d\mu_{H}(U)
\end{equation*} 
has been brought up in \cite{Per-ope} and the target function will be $\tr(C\Omega)$. Due to the good symmetry property of $\Omega$, the entire SDP can be considered on the basis $E_{ij}^{\mu}$ as mentioned in \cite{Rever-unknown}. However, fidelity-based measures are not suitable for the shadow information \cite{recover-shadow} which is one of the crucial difference between these two settings.
\end{remark}
    \noindent Meanwhile, the symmetry property of the Choi operator $C$ will also be quite different which will be dependent on the observable $O$. Therefore, we will discuss about this in the following paragraph.
\medskip
\begin{proposition}\label{sym-thm}
    For any $d,t \in \mathbb{N}^+$, if $C$ is a feasible solution of the SDP \eqref{formulation-SDP} for general $t$-query shadow inversion of $d$-dimensional unitary under some fixed $d$-dimensional observable $O$ in the setting of sequential quantum combs, then we can construct $\phi(C)$ that is also a feasible solution, where
    \begin{equation*}
        \phi(C) = \left(V_{P} \otimes (V_{I} \otimes V_{O})^{\otimes t} \otimes V_{F}\right) C \left(V_{P} \otimes (V_{I} \otimes V_{O})^{\otimes t} \otimes V_{F}\right)^{\dagger}
    \end{equation*}
    and $V_{P}, V_I, V_O, V_F$ satisfy the following constraints:
    \begin{align}
    &V_{P}, V_{I}, V_{O}, V_{F} \in \opr{U}(d), \nonumber \\
    &V_P = V_O, \label{p-equal-o}\\
    &V_{I}, V_{O} \in O'\cap\opr{U}(d). \label{i-co-f}
    \end{align}
    Here $O'$ is the commutant of the observable $O$:
    \begin{equation*}
    O' = \{U\,|\, UO = OU\}.
    \end{equation*}
\end{proposition}
\begin{proof}
For fixed $d$-dimensional observable $O$, we first denote that
\begin{equation}\label{fO-trans}
    \begin{aligned}
        f_O(C)
        = \int_{\opr{U}(d)}
        \Big\|
        \tr_{F}\!\Big[
        &\, F
        \Big(
        C \ast |U\rangle\!\rangle\!\langle\!\langle U|_{\boldsymbol{IO}}^{\otimes t}
        \Big)^{\intercal}
        F (O_{F}^{\intercal} \otimes I_P)  \\[0.6em]
        &\; -\, |U\rangle\!\rangle\!\langle\!\langle U|_{FP}
        (O_{F}^{\intercal} \otimes I_{P})
        \Big]
        \Big\|
        \, d\mu_{H}(U).
    \end{aligned}
\end{equation}
Then, due to the well-known fact
\begin{equation*}
    (A \otimes B)|U\rangle\!\rangle = |BUA^{\intercal}\rangle\!\rangle,
\end{equation*}
we will get the following equation:
\begin{equation*}
\begin{aligned}
    &\left(V_{P} \otimes (V_{I} \otimes V_{O})^{\otimes t} \otimes V_{F}\right) C \left(V_{P} \otimes (V_{I} \otimes V_{O})^{\otimes t} \otimes V_{F}\right)^{\dagger} \ast |U\rangle\!\rangle\!\langle\!\langle U|_{\boldsymbol{IO}}^{\otimes t}\\
    =& (V_P \otimes V_F)(C \ast |V_O^{\intercal} UV_I\rangle\!\rangle\!\langle\!\langle V_O^{\intercal} UV_I|_{\boldsymbol{IO}}^{\otimes t})(V_P^{\dagger} \otimes V_F^{\dagger}).
\end{aligned}
\end{equation*}
Hence we have
\begin{equation*}
    \begin{aligned}
        f_O(\phi(C))
        &= \int_{\opr{U}(d)}
        \Big\|
        \tr_{F}\!\Big[
        \, F (V_P^* \otimes V_F^*) 
        \Big(
        C \ast 
        \big|
        V_{O}^{\intercal} U V_{I}
        \big\rangle\!\big\rangle 
        \big\langle\!\big\langle
        V_{O}^{\intercal} U V_{I}
        \big|_{\boldsymbol{IO}}^{\otimes t}
        \Big)^{\intercal}
         \\[0.6em]
        &(V_P^{\intercal} \otimes V_F^{\intercal}) F 
        (O_{F}^{\intercal} \otimes I_{P})\; -\, 
        \big|
        U
        \big\rangle\!\big\rangle 
        \big\langle\!\big\langle
        U
        \big|_{FP} 
        (O_{F}^{\intercal} \otimes I_{P})
        \Big]
        \Big\|
        \, \mathrm{d}\mu_{H}(U) .
    \end{aligned}
\end{equation*}
Then by the condition \eqref{i-co-f}, we can obtain that
\begin{equation*}
\begin{aligned}
&|V_{O}^{\intercal} UV_I\rangle\!\rangle\!\langle\!\langle V_{O}^{\intercal} UV_I|_{FP}(O_F^{\intercal} \otimes I_P)\\
=&(V_I^{\intercal} \otimes V_O^{\intercal})|U\rangle\!\rangle\!\langle\!\langle U|_{FP}(V_I^* \otimes V_O^*)(O_F^{\intercal} \otimes I_P)\\
=&(V_I^{\intercal} \otimes V_O^{\intercal})|U\rangle\!\rangle\!\langle\!\langle U|_{FP}(O_F^{\intercal} \otimes I_P)(V_I^* \otimes V_O^*).
\end{aligned}
\end{equation*}
Moreover, again by \eqref{i-co-f}, we will derive that 
\begin{equation*}
\begin{split}
    f_O(\phi(C)) = &\int_{\opr{U}(d)} 
        \Big\|
            \tr_{F}\!\Big[
                (V_F^* \otimes V_P^*)F\big(C \ast |V_{O}^{\intercal} UV_I\rangle\!\rangle\!\langle\!\langle V_{O}^{\intercal} UV_I|_{\boldsymbol{IO}}^{\otimes t}\big)^{\intercal}F
                \\
                &(O_{F}^{\intercal} \otimes I_P)(V_F^{\intercal} \otimes V_P^{\intercal})- (V_I^* \otimes V_O^*)|V_O^{\intercal}UV_I\rangle\!\rangle\!\langle\!\langle V_O^{\intercal}UV_I|_{FP}\\
                &(O_{F}^{\intercal} \otimes I_{P})(V_I^{\intercal} \otimes V_O^{\intercal})\Big] 
        \Big\| \, d\mu_{H}(U).
\end{split}     
\end{equation*}
Next, due to the property of partial trace, we have
\begin{equation*}
\begin{aligned}
&\tr_{F}\!\Big[
                (V_F^* \otimes V_P^*)F\big(C \ast |V_{O}^{\intercal} UV_I\rangle\!\rangle\!\langle\!\langle V_{O}^{\intercal} UV_I|_{\boldsymbol{IO}}^{\otimes t}\big)^{\intercal}F
                (O_{F}^{\intercal} \otimes I_P)(V_F^{\intercal} \otimes V_P^{\intercal})
            \Big]\\
=& V_P^* \tr_{F}\!\Big[
                F\big(C \ast |V_{O}^{\intercal} UV_I\rangle\!\rangle\!\langle\!\langle V_{O}^{\intercal} UV_I|_{\boldsymbol{IO}}^{\otimes t}\big)^{\intercal}F
                (O_{F}^{\intercal} \otimes I_P)
            \Big] V_P^{\intercal}.
\end{aligned}
\end{equation*}
While in the same way we can get
\begin{equation*}
\begin{aligned}
&\tr_{F}\!\Big[
(V_I^* \otimes V_O^*)|V_O^{\intercal}UV_I\rangle\!\rangle\!\langle\!\langle V_O^{\intercal}UV_I|_{FP}(O_{F}^{\intercal} \otimes I_{P})(V_I^{\intercal} \otimes V_O^{\intercal})
\Big]\\
=& V_O^* \tr_F \!\Big[|V_O^{\intercal}UV_I\rangle\!\rangle\!\langle\!\langle V_O^{\intercal}UV_I|_{FP}(O_{F}^{\intercal} \otimes I_{P})\Big]V_O^{\intercal}.
\end{aligned}
\end{equation*}
Hence by making use of the condition \eqref{p-equal-o}, the property of Haar measure $\mu_H$ and the unitary invariance of the norm, we will get \begin{equation*}
    f_O(C) = f_O(\phi(C))
\end{equation*}
and then we complete the proof.
\end{proof}
Before we show the symmetry property of the Choi operator as a corollary of Theorem~\ref{sym-thm}, we need to analyze the structure of the commutant of the observable $O$ in the unitary group $\opr{U}(d)$.
\begin{lemma}\label{Haar-CO}
    For any $d \in \mathbb{N}^+$, let $O$ be a $d$-dimensional observable, the commutant of $O$ in the unitary group $\opr{U}(d)$
    \begin{equation*}
        O'\cap\opr{U}(d) \;=\;\{\,U\in \opr{U}(d)\colon UO = OU\}
    \end{equation*}
    is a closed subgroup of \(\opr{U}(d)\), and hence induces the Haar measure  $\mu_{H,O}$.
\end{lemma}

\begin{proof}
    We first check that $O'\cap\opr{U}(d)$ is a subgroup of $\opr{U}(d)$.
    \begin{itemize}
        \item The identity $I_d \in \opr{U}(d)$ clearly commutes with $O$, hence $I_d \in O'\cap\opr{U}(d)$.
        
        \item For any $U,V \in O'\cap\opr{U}(d)$, we will have
        \begin{equation*}
            (UV)O = U(VO) = U(OV) = (UO)V = (OU)V = O(UV)
        \end{equation*}
        which means that $UV \in O'\cap\opr{U}(d)$.
        
        \item For any $U \in O'\cap\opr{U}(d)$, we can get that
        \begin{equation*}
            U^{-1}O = U^{-1}OUU^{-1} = U^{-1}UOU^{-1} = OU^{-1}
        \end{equation*}
        which means that $U^{-1} \in O'\cap\opr{U}(d)$.
    \end{itemize}
    Next, let us consider the continuous map:
    \begin{equation*}
        \begin{aligned}
            f \colon \opr{U}(d) &\longrightarrow M_{d}(\mathbb{C}), \\
            U \;&\longmapsto\; UO - OU.
        \end{aligned}
    \end{equation*}
    Note that \(\{0\}\subset M_d(\mathbb{C})\) is closed and
    \begin{equation*}
        O'\cap\opr{U}(d) = f^{-1}(\{0\}),
    \end{equation*}
    it follows that \(O'\cap\opr{U}(d)\) is a closed subset of the compact group \(\opr{U}(d)\).  Then \(O'\cap\opr{U}(d)\) is itself compact (hence locally compact). Therefore, $O'\cap\opr{U}(d)$ carries the Haar measure which we will denote as $\mu_{H,O}$.
\end{proof}

Next, we establish the connection between the structure of $O'\cap\opr{U}(d)$ and unitary groups, a relation that enables us to apply Schur–Weyl duality theory in the analysis of the shadow inversion problem.

\begin{lemma}
    For any $d \in \mathbb{N}^+$, let $O$ be a $d$-dimensional observable, then the commutant of $O$ in the unitary group $\opr{U}(d)$ is isomorphic to a direct product of unitary groups:
    \begin{equation*}
        O'\cap\opr{U}(d) \;\cong\; \opr{U}(l_1) \times \opr{U}(l_2) \times \cdots \times \opr{U}(l_k),
    \end{equation*}
    where $l_j = \dim(E_j)$ and $E_j$ is the eigenspace of $O$ corresponding to the distinct eigenvalue $\lambda_j$.
\end{lemma}

\begin{proof}
    By the spectral theorem \cite{Nielson-chuang}, we can decompose $O$ as
    \[
    O = \sum_{j=1}^k \lambda_j P_j,
    \]
    where $\lambda_1,\ldots,\lambda_k$ are the distinct eigenvalues of $O$ and $P_j$ denotes the orthogonal projection onto the corresponding eigenspace
    \begin{equation*}
        E_j = \ker(O - \lambda_j I).
    \end{equation*}
    If $U \in O'\cap\opr{U}(d)$, then $UO = OU$ implies
    \begin{equation*}
        U \Big( \sum_{j=1}^k \lambda_j P_j \Big) 
        = \Big( \sum_{j=1}^k \lambda_j P_j \Big) U.
    \end{equation*}
    Using the fact that the projections $P_j$ are mutually orthogonal, this condition is equivalent to
    \begin{equation*}
        U P_j = P_j U \quad \text{for all } j=1,\dots,k.
    \end{equation*}
    Hence $U$ must map each eigenspace $E_j$ into itself. In other words,
    $U$ is block diagonal with respect to the direct sum decomposition
    \begin{equation*}
        \mathbb{C}^d = \bigoplus_{j=1}^k E_j.
    \end{equation*}
    Moreover, the action of $U$ on $E_j$ is an arbitrary unitary in $U(l_j)$, where $l_j = \dim(E_j)$. 
    Therefore,
    \begin{equation*}
        O'\cap\opr{U}(d) \;\cong\; U(l_1) \times U(l_2) \times \cdots \times U(l_k).
    \end{equation*}
\end{proof}
\begin{corollary}\label{cor-sym}
For any $d,t \in \mathbb{N}^+$, without loss of generality, in SDP \eqref{formulation-SDP} for general $t$-query shadow inversion of $d$-dimensional unitary under some fixed $d$-dimensional observable $O$ in the setting of sequential quantum combs we can assume
\begin{equation}\label{assump-sym}
    [C,U \otimes (V \otimes U)^{\otimes t} \otimes W] = 0. \quad \forall\,\, U \in \opr{U}(d), V,W \in O'\cap\opr{U}(d)
\end{equation}
where $O'$ is the commutant of the observable $O$.
\end{corollary}

\begin{proof}
We will keep use the notation $f_{O}(C)$ which has been introduced in \eqref{fO-trans}. Suppose that $C = C_{opt}$ achieves the minimum value of $f_O(C)$ in the SDP $\eqref{formulation-SDP}$, we can construct the following operator:
\begin{equation*}
    \begin{aligned}
        C_{opt}'
        = \int_{\opr{U}(d) \times O'\cap\opr{U}(d) \times O'\cap\opr{U}(d)}
        &\;
        \left(U' \otimes (V' \otimes U')^{\otimes t} \otimes W'\right) \,
        C_{opt} \,
        \left(U' \otimes (V' \otimes U')^{\otimes t} \otimes W'\right)^{\dagger} \\[0.6em]
        &\; \mathrm{d}\mu_{H}'(U' \times V' \times W')
    \end{aligned}
\end{equation*}
where $\mu_{H}'$ is the Haar measure on $\opr{U}(d) \times O'\cap\opr{U}(d) \times O'\cap\opr{U}(d)$ by Lemma~\ref{Haar-CO}. Denote
\begin{equation*}
    \phi_{U',V',W'}(C) = \left(U' \otimes (V' \otimes U')^{\otimes t} \otimes W'\right) C \left(U' \otimes (V' \otimes U')^{\otimes t} \otimes W'\right)^{\dagger}
\end{equation*}
and by Tonelli's Theorem and Proposition~\ref{sym-thm}, we can get
\begin{equation*}
\begin{aligned}
f_O(C_{opt}')
  &= \int_{\opr{U}(d)} \Big\| \int_{\opr{U}(d) \times O'\cap\opr{U}(d) \times O'\cap\opr{U}(d)} 
        \tr_{F}\!\Big[
            F\big(\phi_{U',V',W'}(C_{opt}) \ast |U\rangle\!\rangle\!\langle\!\langle U|_{\boldsymbol{IO}}^{\otimes t}\big)^{\intercal}F\\
  &(O_{F}^{\intercal} \otimes I_P) - |U\rangle\!\rangle\!\langle\!\langle U|_{FP}(O_{F}^{\intercal} \otimes I_{P})
        \Big]\, d\mu_{H}'(U' \times V' \times W')\Big\| \, d\mu_{H}(U) \\
  & \leq  \int_{\opr{U}(d)}  \int_{\opr{U}(d) \times O'\cap\opr{U}(d) \times O'\cap\opr{U}(d)} 
        \Big\| \tr_{F}\!\Big[
            F\big(\phi_{U',V',W'}(C_{opt}) \ast |U\rangle\!\rangle\!\langle\!\langle U|_{\boldsymbol{IO}}^{\otimes t}\big)^{\intercal}F \\
  &(O_{F}^{\intercal} \otimes I_P) - |U\rangle\!\rangle\!\langle\!\langle U|_{FP}(O_{F}^{\intercal} \otimes I_{P})
        \Big]\Big\|\, d\mu_{H}'(U' \times V' \times W') \, d\mu_{H}(U) \\
    & =  \int_{\opr{U}(d) \times O'\cap\opr{U}(d) \times O'\cap\opr{U}(d)} \int_{\opr{U}(d)}
            \Big\| \tr_{F}\!\Big[
                F\big(\phi_{U',V',W'}(C_{opt}) \ast |U\rangle\!\rangle\!\langle\!\langle U|_{\boldsymbol{IO}}^{\otimes t}\big)^{\intercal}F \\
      &(O_{F}^{\intercal} \otimes I_P) - |U\rangle\!\rangle\!\langle\!\langle U|_{FP}(O_{F}^{\intercal} \otimes I_{P})
            \Big]\Big\|\, d\mu_{H} (U)\, d\mu_{H}'(U' \times V' \times W') \\
    & = \int_{\opr{U}(d) \times O'\cap\opr{U}(d) \times O'\cap\opr{U}(d)} f_{O}(C_{opt}) d\mu_{H}'(U' \times V' \times W')\\
    & = f_{O}(C_{opt}).
\end{aligned}
\end{equation*}
Moreover, when $C = C_{opt}$ satisfies the sequential quantum comb conditions (See Eq.~\eqref{eq:sequential_constraints} in Appendix \ref{app:comb}) , $C_{opt}'$ also satisfies the conditions. Hence $C = C_{opt'}$ also achieves the minimum value of the SDP $\eqref{formulation-SDP}$.
\end{proof}
\begin{remark}
    For any $d,t \in \mathbb{N}^+$, if we consider the SDP model for general $t$-query shadow inversion of $d$-dimensional unitary under some fixed $d$-dimensional observable $O$ in the setting of parallel quantum combs, then the symmetry equation~\eqref{assump-sym} corresponding to the system order of parallel will be
    \begin{equation}\label{assump-sym-parallel}
        [C, U \otimes V^{\otimes t} \otimes U^{\otimes t} \otimes W] = 0. \quad \forall U \in \opr{U}(d), V,W \in O'\cap\opr{U}(d)
    \end{equation}
    where $O'$ is the commutant of the observable $O$.
\end{remark}
\subsection{The number of variables in the simplified SDP}
In this section, we will make use of Character theory to compute the number of variables in the simplified SDP. We take the symmetry equation \eqref{assump-sym} as example and it is equivalent to 
\begin{equation*}
    [P_{\pi}CP_{\pi}, U^{\otimes t+1} \otimes V^{\otimes t} \otimes W] = 0 \quad \forall U \in \opr{U}(d), V,W \in O'\cap\opr{U}(d).
\end{equation*}
Here $P_{\pi}$ is any permutation operator of $\pi \in S_{2t+2}$ that maps the tensor factors from the causal ordering
\begin{equation*}
(P, I_1, O_1, \ldots, I_t, O_t, F)
\end{equation*}
to the grouped ordering
\begin{equation*}
(P, O_1, \ldots, O_t, I_1, \ldots, I_t, F),
\end{equation*}
i.e., all output systems come before all input systems (while the relative order within each group is irrelevant). For example, we can take $\pi$ as follows
\begin{equation*}
    \pi : 
    \begin{cases}
        1 \;\mapsto\; 1, \\[6pt]
        2j \;\mapsto\; t+j+1, & \text{for } j = 1, \dots, t, \\[6pt]
        2j+1 \;\mapsto\; j+1, & \text{for } j = 1, \dots, t, \\[6pt]
        2t+2 \;\mapsto\; 2t+2.
    \end{cases}
\end{equation*}
\begin{remark}
Starting from here, We label the systems \(P, I_1, O_1, \ldots, I_t, O_t, F\) by the integers \(1, 2, \ldots, 2t+2\), respectively.
\end{remark}
\noindent Then by the Schur-Weyl duality and the tools of Young tableau introduced in Section~\ref{g-r-t}, we can construct a Schur unitary matrix for the representation
\begin{equation*}
\begin{aligned}
 \rho_{O,t} \colon G \times O'\cap\opr{U}(d) \times O'\cap\opr{U}(d) &\to \mathrm{GL}\left((\mathbb{C}^{d})^{\otimes (2t+2)}\right) \\
    (U,V,W) &\mapsto U^{\otimes t+1} \otimes V^{\otimes t} \otimes W
\end{aligned}
\end{equation*}
which we denote as $Q_{O,t}$. It means for any $U \in \opr{U}(d)$, $V,W \in O'\cap\opr{U}(d)$, 
\begin{equation*}
Q_{O,t}^{\dagger} \rho_{O,t}(U,V,W) Q_{O,t} \cong \bigoplus_{r \in I} m_{r} \,\rho_{O,t}^{r}
\end{equation*}
where $\rho_{O,t}^{r}$ is the irreducible representation of $\rho_{O,t}$ labeled by $r$ and $m_r$ is its multiplicity. Then the column vectors of $Q_{O,t}$ can be labeled by three parameters $r$, $a$ and $\alpha$. More precisely, if we consider the following direct sum decomposition:
\begin{equation*}
(\mathbb{C}^{d})^{\otimes (2t+2)} = \bigoplus_{r \in I} \big(V_{r} \otimes I_{m_{r}}\big),
\end{equation*}
where $V_r$ is the representation space for $\rho_{O,t}^{r}$ and then each column vector of $Q_{O,t}$ can be written as $|v_{r,a,\alpha} \rangle$ with $r \in I$, $a \in \{1,\cdots,\dim(V_{r})\}$ and $\alpha \in \{1,\cdots m_{r}\}$. Then the Choi operator $C$ can be written as:
\begin{equation*}
P_{\pi}CP_{\pi} = \sum_{r \in I} \sum_{\alpha,\beta =1}^{m_r} c_{\alpha,\beta}^{(r)}(\sum_{a= 1}^{\dim(V_r)}|v_{r,a,\alpha} \rangle \langle v_{r,a,\beta}|).
\end{equation*}
Hence $C$ is positive semi-definite is equivalent to for each $r \in I$,
\begin{equation*}
C^{(r)} = [c_{\alpha,\beta}^{(r)}]_{1 \leq \alpha, \beta \leq m_r } \geq 0.
\end{equation*}
Moreover, we will set up an analysis about the number of variables $c_{\alpha,\beta}^{(r)}$ in the simplified SDP.
\begin{proposition}\label{num-ub}
For any $d,t \in \mathbb{N}^{+}$, let $O$ be some fixed $d$-dimensional observable and $N_{O,t}$ be the number of variables in the simplified SDP for general $t$-query shadow inversion of $d$-dimensional unitary under $O$. We assume that $O'\cap\opr{U}(d) \cong U(l_{1}) \times U(l_2) \times \cdots \times U(l_m)$ for some $m \in \mathbb{N}^+$ with $l_{i} \in \mathbb{N}^+$ and $\sum_{i=1}^{m} l_i = d$, then we will have the following equation:
\begin{equation*}
    N_{O,t} = m\,I_{t+1}(d)\,J_{t}(d)
\end{equation*}
where $I_{k}(d)$ and $J_{s}(d)$ for $k,s,d \in \mathbb{N}^{+}$ are given as
\begin{align*}
    I_{k}(d) &= \sum_{\substack{\lambda \vdash k\\ l(\lambda) \leq d}} \left(\frac{k!}{H_{\lambda}}\right)^2,\\
    J_{s}(d) &= \sum_{k_{1}+\cdots+k_m = s} \Bigl[\frac{s!}{k_{1}!k_{2}!\cdots k_{m}!}\Bigr]^{2} \prod_{r=1}^{m} I_{k_r}(l_r).
\end{align*}
Here $H_{\lambda}$ is the hook length of the Young diagram $Y_{\lambda}$.
\end{proposition}
\begin{proof}
In this proof, we will use the notations and properties introduced in Section~\ref{g-r-t}. By Character theory \cite{kowalski2025introduction}[Theorem~5.5.1]
\begin{equation*}
\begin{aligned}
    N_{O,t} = \sum_{r} m_{r}^2 &= \langle \chi_{\rho_{O,t}},\chi_{\rho_{O,t}}\rangle_{L^{2}(G \times O'\cap\opr{U}(d) \times O'\cap\opr{U}(d))}\\
    &= \int_{G \times O'\cap\opr{U}(d) \times O'\cap\opr{U}(d)} |(\tr U)^{t+1} \, (\tr V)^{\intercal} \, \tr W|^2 d\mu_{H'}(U \times V \times W)\\
    &= I_{t+1}(d)\,J_{t}(d)\,J_{1}(d)
\end{aligned}
\end{equation*}
where we introduce the notation:
\begin{equation*}
    I_{k}(d):= \int_{\opr{U}(d)} |\tr U|^{2k} dU, \,\, J_{s}(d):= \int_{O'\cap\opr{U}(d)} |\tr V|^{2s} dV.
\end{equation*}
Next let $a_r = \tr(M_r), M_{r} \in U(l_r)$ for $r = 1,2,\cdots,m$, then we can get
\begin{equation*}
\begin{aligned}
    |\tr V|^{2s} &= (a_{1}+a_{2}+\cdots+a_{m})^{s} (\bar{a}_{1} + \bar{a}_{2} + \cdots + \bar{a}_{m})^s\\
    &= \sum_{\substack{k_{1}+\cdots+k_m = s\\
    n_{1}+\cdots+n_{m} = s}} \frac{s!}{k_{1}! k_{2}! \cdots k_{m}!} \cdot \frac{s!}{n_{1}! n_{2}! \cdots n_{m}!} \cdot \prod_{r=1}^{m} a_{r}^{k_r} \cdot \bar{a}_{r}^{n_r}.
\end{aligned}
\end{equation*}
Then by the orthogonality of irreducible characters
\begin{equation*}
    \begin{aligned}
      J_{s}(d) &= \sum_{\substack{k_{1}+\cdots+k_m = s\\
    n_{1}+\cdots+n_{m} = s}} \frac{s!}{k_{1}! k_{2}! \cdots k_{m}!} \cdot \frac{s!}{n_{1}! n_{2}! \cdots n_{m}!} \cdot \prod_{r=1}^{m} \int_{U(l_r)} a_{r}^{k_r} \cdot \bar{a}_{r}^{n_r} dM_{r}\\
    &= \sum_{\substack{k_{1}+\cdots+k_m = s\\
    n_{1}+\cdots+n_{m} = s}} \frac{s!}{k_{1}! k_{2}! \cdots k_{m}!} \cdot \frac{s!}{n_{1}! n_{2}! \cdots n_{m}!} \cdot \prod_{r=1}^{m} \delta_{k_{r},n_{r}} I_{k_r}(l_{r})\\
    &= \sum_{k_{1}+\cdots+k_m = s} \Bigl[\frac{s!}{k_{1}!k_{2}!\cdots k_{m}!}\Bigr]^{2} \prod_{r=1}^{m} I_{k_r}(l_r).
    \end{aligned}
\end{equation*}
Next by Character theory again we can get 
\begin{equation*}
\begin{aligned}
    I_{k}(d) &= \sum_{\substack{\lambda \vdash k\\ l(\lambda) \leq d}} \dim(\mathcal{S}_{\lambda})^2\\
    &= \sum_{\substack{\lambda \vdash k\\ l(\lambda) \leq d}} \left(\frac{k!}{H_{\lambda}}\right)^2.
\end{aligned}
\end{equation*}
Together with the following equation 
\begin{equation*}
    J_{1}(d) = \sum_{k_{1}+\cdots+k_m = 1} \Bigl[\frac{1}{k_{1}!k_{2}!\cdots k_{m}!}\Bigr]^{2} \prod_{r=1}^{m} I_{k_r}(l_r) = m,
\end{equation*}
we will complete the proof.
\end{proof}

\upperboundvariables*


\begin{proof}
    Note that we have the following estimation
    \begin{equation*}
        I_{k}(d) = \sum_{\substack{\lambda \vdash k\\ l(\lambda) \leq d}} \left(\frac{k!}{H_{\lambda}}\right)^2 \leq \sum_{\lambda \vdash k}\left(\frac{k!}{H_{\lambda}}\right)^2 = k!
    \end{equation*}
    for any $d \in \mathbb{N}^{+}$. Then we can get
    \begin{equation*}
        \begin{aligned}
            J_{t}(d) &= \sum_{k_{1}+\cdots+k_m = t} \Bigl[\frac{t!}{k_{1}!k_{2}!\cdots k_{m}!}\Bigr]^{2} \prod_{r=1}^{m} I_{k_r}(l_r)\\
            &\leq \sum_{k_{1}+\cdots+k_m = t} \Bigl[\frac{t!}{k_{1}!k_{2}!\cdots k_{m}!}\Bigr]^{2} \prod_{r=1}^{m} k_{r}!\\
            &= t! \sum_{k_{1}+\cdots+k_m = t} \frac{t!}{k_{1}!k_{2}!\cdots k_{m}!}\\
            &= t! \, m^{\intercal}.
        \end{aligned}
    \end{equation*}
    Finally we can derive that
    \begin{equation*}
        N_{O,t} = m\,I_{t+1}(d)\,J_{t}(d) \leq (t+1)! \, t! \, m^{t+1} \leq (t+1)! \, t! \, d^{t+1}.
    \end{equation*}
\end{proof}
\subsection{The simplification process for SDP}\label{subsec:simplification}
In this section, we will show how to make use of the symmetry property of the Choi operator $C$ to simplify the SDP \eqref{formulation-SDP} which is established for sequential quantum comb and we will directly give the results established for parallel quantum comb which can be derived through the same process by making use of the symmetry property \eqref{assump-sym-parallel}. We expand each $|v_{r,a,\alpha}\rangle \in (\mathbb{C}^{d})^{\otimes (2t+2)}$ in the computation basis 
\begin{equation*}
|v_{r,a,\alpha}\rangle = \sum_{i_{1},i_{2},\cdots,i_{2t+2} \in [d]} p_{i_{1},i_{2},\cdots,i_{2t+2}}^{r,a,\alpha} |e_{i_1} \otimes e_{i_2} \cdots \otimes e_{i_{2t+2}} \rangle
\end{equation*}
where $\{e_{i}\}_{i=1}^{d}$ is the standard orthogonal basis of $\mathbb{C}^{d}$. Hence we have
\begin{equation}\label{eq:form-of-C}
    \begin{split}
        C = \; &\sum_{r \in I}\sum_{\alpha,\beta = 1}^{m_r} \sum_{a=1}^{\dim(V_r)}
        \sum_{\substack{i_{1},\cdots,i_{2t+2} \in [d] \\[0.4em] j_{1},\cdots,j_{2t+2} \in [d]}} 
        c_{\alpha,\beta}^{(r)}\, p_{i_{1},i_{2},\cdots,i_{2t+2}}^{r,a,\alpha}\, \Big(p_{j_{1},j_{2},\cdots,j_{2t+2}}^{r,a,\beta}\Big)^{*}\\[1mm]
        &\quad \; |e_{i_{\pi(1)}} \otimes e_{i_{\pi(2)}} \otimes \cdots \otimes e_{i_{\pi(2t+2)}} \rangle 
        \langle e_{j_{\pi(1)}} \otimes e_{j_{\pi(2)}} \otimes \cdots \otimes e_{j_{\pi(2t+2)}}|,
    \end{split}
\end{equation}
where we make use of the definition that for any $\pi \in S_{2t+2}$, we have
\begin{equation*}
    P_{\pi^{-1}}|e_{i_1} \otimes e_{i_2} \otimes \cdots \otimes e_{i_{2t+2}} \rangle = |e_{i_{\pi(1)}} \otimes e_{i_{\pi(2)}} \otimes \cdots \otimes e_{i_{\pi(2t+2)}} \rangle.
\end{equation*}
Next, we relabel the summation indices by the permutation $\pi$. Denote
\begin{equation*}
    i_{k}' = i_{\pi(k)}, \quad j_{k}' = j_{\pi(k)}
\end{equation*}
for each $k \in \{1,\cdots,2t+2\}$ and rename $i_{k'} \to i_{k}, j_{k'} \to j_{k}$. Since the sums run over all index values, this relabeling leaves the total invariant but restores the tensor product factors to canonical order, that is we can rewritten $C$ as:
\begin{equation}\label{Form-of-C}
    \begin{split}
        C = \; &\sum_{r \in I}\sum_{\alpha,\beta = 1}^{m_r} \sum_{a=1}^{\dim(V_r)}
    \sum_{\substack{i_{1},\cdots,i_{2t+2} \in [d] \\[0.4em] j_{1},\cdots,j_{2t+2} \in [d]}} 
        c_{\alpha,\beta}^{(r)}\, p_{P_{\pi}(i_{1},i_{2},\cdots,i_{2t+2})}^{r,a,\alpha}\, \left(p_{P_{\pi}(j_{1},j_{2},\cdots,j_{2t+2})}^{r,a,\beta}\right)^{*}\\[1mm]
        &\quad \; |e_{i_{1}} \otimes e_{i_{2}} \otimes \cdots \otimes e_{i_{2t+2}} \rangle 
        \langle e_{j_{1}} \otimes e_{j_{2}} \otimes \cdots \otimes e_{j_{2t+2}}|
    \end{split}
\end{equation}
where we use the notation that
\begin{equation*}
    P_{\pi}(i_{1}, i_{2}, \cdots, i_{2t+2}) = (i_{\pi^{-1}(1)},i_{\pi^{-1}(2)},\cdots,i_{\pi^{-1}(2t+2)}).
\end{equation*}
\subsubsection{Simplification for the constraints}
To compute the partial trace of sub-systems,  we introduce the following notation: for any $K \in \{1,\cdots,2t+2\}$, we set
\begin{equation*}
    \mathbf{K} = \{2t+3-K, \cdots, 2t+2\}
\end{equation*}
to be the index set of last $K$ systems, namely, systems $2t+3-K$ through $2t+2$. We also denote
\begin{equation*}
    \mathbf{R_{K}} = \{1,\cdots, 2t+2-K\} 
\end{equation*}
be the index set of first $2t+2-K$ systems, namely, systems $1$ through $2t+2-K$. We also denote 
\begin{equation*}
i_{\mathbf{K}} = (i_{k})_{k \in \mathbf{K}},\,\, j_{\mathbf{R_K}} = (j_{l})_{l \in \mathbf{R_K}},\,\,
|e_{i_{\mathbf{K}}}\rangle = \bigotimes_{k \in \mathbf{K}} |e_{i_k}\rangle,\,\,
|e_{j_{\mathbf{R_K}}}\rangle = \bigotimes_{l \in \mathbf{R_K}} |e_{i_l}\rangle
\end{equation*}
for convenience. Then, the constraints for quantum comb (See Eq.~\eqref{eq:sequential_constraints} in Appendix \ref{app:comb}) can be written as:
\begin{equation*}
    \tr_{\mathbf{K-1}}(C) = \tr_{\mathbf{K}}(C) \otimes \frac{I_{d}}{d}, \, \forall K \in \{2k : k = 1,2,\cdots,t+1\}
\end{equation*}
and
\begin{equation*}
    \tr(C) = d^{t+1}.
\end{equation*}
We all know that for a given operator $A \in \mathsf{L}(\mathcal{H}_{1}\otimes \cdots \otimes \mathcal{H}_{2t+2})$, its partial trace over the subsystems $\mathbf{K}$ is defined by
\begin{equation*}
\operatorname{Tr}_{\mathbf{K}}(A) = \sum_{i_{\mathbf{K}} \in [d]^{|\mathbf{K}|}} \Biggl( \bigotimes_{l\in \mathbf{R_{K}}} I_l \otimes \langle e_{i_{\mathbf{K}}}| \Biggr) A \Biggl( \bigotimes_{l\in \mathbf{R_K}} I_l \otimes |e_{i_{\mathbf{K}}}\rangle \Biggr).
\end{equation*}
Using the orthogonality of the standard basis
\begin{equation*}
\langle e_{i_k}|e_{j_k}\rangle = \delta_{i_k,j_k},
\end{equation*}
we can calculate that
\begin{equation*}
\operatorname{Tr}_{\mathbf{K}}\Bigl( |e_{i_1}\otimes \cdots \otimes e_{i_{2t+2}}\rangle \langle e_{j_1}\otimes \cdots \otimes e_{j_{2t+2}}| \Bigr)
= \prod_{k\in \mathbf{K}} \delta_{i_k,j_k} \; |e_{i_{\mathbf{R_K}}}\rangle \langle e_{j_{\mathbf{R_K}}}|.
\end{equation*}
When applying the partial trace to the operator $C$ in \eqref{Form-of-C}, we can get
\begin{equation*}
    \begin{aligned}
    \tr_{\mathbf{K}}(C) = &\sum_{r \in I}\sum_{\alpha,\beta = 1}^{m_r} \sum_{a=1}^{\dim(V_r)}
    \sum_{i_{\mathbf{R_K}}, j_{\mathbf{R_K}} \in [d]^{|\mathbf{R_K}|}} \sum_{i_{\mathbf{K}} \in [d]^{|\mathbf{K}|}}
    c_{\alpha,\beta}^{(r)}\, p_{P_{\pi}(i_{\mathbf{R_K}},i_{\mathbf{K}})}^{r,a,\alpha}\\ & \times \left(p_{P_{\pi}(j_{\mathbf{R_K}},i_{\mathbf{K}})}^{r,a,\beta}\right)^{*} |e_{i_{\mathbf{R_K}}} \rangle \langle e_{j_{\mathbf{R_K}}}|.
    \end{aligned}
\end{equation*}
Moreover, we can calculate that
\begin{equation*}
    \begin{aligned}
        \operatorname{Tr}_{\mathbf{K-1}}(C) &= \sum_{r \in I}\sum_{\alpha,\beta = 1}^{m_r} \sum_{a=1}^{\dim(V_r)} \sum_{i_{\mathbf{R_{K}}},j_{\mathbf{R_{K}}} \in [d]^{|\mathbf{R_{K}}|}} \sum_{x,y \in [d]} \sum_{i_{\mathbf{K-1}}  \in [d]^{|\mathbf{K-1}|}}\\
         &\quad \times c_{\alpha,\beta}^{(r)}\, p_{P_{\pi}(i_{\mathbf{R_{K}}},x,i_{\mathbf{K-1}})}^{r,a,\alpha}\,\left(p_{P_{\pi}(j_{\mathbf{R_{K}}},y,i_{\mathbf{K-1}})}^{r,a,\beta}\right)^{*} \\
        &\quad \times \left(|e_{i_{\mathbf{R_{K}}}}\rangle \langle e_{j_{\mathbf{R_{K}}}}| \otimes |e_x\rangle \langle e_y|\right)
    \end{aligned}
\end{equation*}
while we have
\begin{equation*}
    \begin{aligned}
        \operatorname{Tr}_{\mathbf{K}}(C) \otimes \frac{I_d}{d} &=  \frac{1}{d}\sum_{r \in I}\sum_{\alpha,\beta = 1}^{m_r} \sum_{a=1}^{\dim(V_r)} \sum_{i_{\mathbf{R_{K}}},j_{\mathbf{R_{K}}} \in [d]^{|\mathbf{R_{K}}|}} \sum_{z,y \in [d]} \sum_{i_{\mathbf{K-1}}  \in [d]^{|\mathbf{K-1}|}}\\ &\quad \times c_{\alpha,\beta}^{(r)}\, p_{P_{\pi}(i_{\mathbf{R_{K}}},z,i_{\mathbf{K-1}})}^{r,a,\alpha}\left(p_{P_{\pi}(j_{\mathbf{R_{K}}},z,i_{\mathbf{K-1}})}^{r,a,\beta}\right)^{*} \\
        &\quad \times \left(|e_{i_{\mathbf{R_{K}}}}\rangle \langle e_{j_{\mathbf{R_{K}}}}| \otimes |e_y\rangle \langle e_y|\right).
    \end{aligned}
\end{equation*}
\medskip
Therefore, for any fixed $i_{\mathbf{R_K}},j_{\mathbf{R_K}} \in [d]^{|\mathbf{R_K}|}$ and $x,y \in [d],x \neq y$, we have
\begin{equation*}
    \sum_{r \in I}\sum_{\alpha,\beta = 1}^{m_r} \sum_{a=1}^{dim(V_r)}\sum_{i_{\mathbf{K-1}}  \in [d]^{|\mathbf{K-1}|}} c_{\alpha,\beta}^{(r)} \, p_{P_{\pi}(i_{\mathbf{R_K}},x,i_{\mathbf{K-1}})}^{r,a,\alpha}\,\left(p_{P_{\pi}(j_{\mathbf{R_K}},y,i_{\mathbf{K-1}})}^{r,a,\beta}\right)^{*} = 0.
\end{equation*}
Also for any fixed $i_{\mathbf{R_K}},j_{\mathbf{R_K}} \in [d]^{|\mathbf{R_K}|}$ and $x = y = k \in [d]$, we have
\begin{equation*}
    \begin{split}
        &\sum_{r \in I}\sum_{\alpha,\beta = 1}^{m_r} \sum_{a=1}^{dim(V_r)}\sum_{i_{\mathbf{K-1}}  \in [d]^{|\mathbf{K-1}|}} c_{\alpha,\beta}^{(r)} \, p_{P_{\pi}(i_{\mathbf{R_K}},k,i_{\mathbf{K-1}})}^{r,a,\alpha} \, \left(p_{P_{\pi}(j_{\mathbf{R_K}},k,i_{\mathbf{K-1}})}^{r,a,\beta}\right)^{*}\\
        =&\frac{1}{d}\sum_{r \in I}\sum_{\alpha,\beta = 1}^{m_r} \sum_{a=1}^{dim(V_r)}\sum_{i_{\mathbf{K-1}}  \in [d]^{|\mathbf{K-1}|}} \Biggl[\sum_{z \in [d]}c_{\alpha,\beta}^{(r)} \, p_{P_{\pi} (i_{\mathbf{R_K}},z,i_{\mathbf{K-1}})}^{r,a,\alpha} \, \left(p_{P_{\pi}(j_{\mathbf{R_K}},z,i_{\mathbf{K-1}})}^{r,a,\beta}\right)^{*} \Biggr]. 
    \end{split}
\end{equation*}
Together with the condition that
\begin{equation*}
    \begin{split}
    \operatorname{Tr}(C)
    &=\; \sum_{r \in I} \sum_{\alpha,\beta = 1}^{m_r} \sum_{a=1}^{\dim(V_r)}
    c_{\alpha,\beta}^{(r)}\, \tr\bigl(\ket{V_{r,a,\alpha}}\bra{V_{r,a,\beta}}\bigr)\\
    &=\; \sum_{r \in I} \sum_{\alpha = \beta = 1}^{m_r} \sum_{a = 1}^{\dim(V_r)} c_{\alpha,\beta}^{(r)}\\
    &=\; \sum_{r \in I} \dim(V_{r}) \tr(C^{(r)}) = d^{t+1},
    \end{split}
\end{equation*}
we get the constraints for optimization variables $c_{\alpha,\beta}^{(r)}$.

\subsubsection{Simplification for the target function}
In this section, we will deal with the target function, the main part is to compute the following equation:
\begin{equation*}
C \ast |U\rangle\!\rangle\!\langle\!\langle U|^{\otimes t} = \operatorname{Tr}_{2\cdots(2t+1)}(C^{T_{2\cdots(2t+1)}}(I \otimes |U\rangle\!\rangle\!\langle\!\langle U|^{\otimes t}_{2\cdots(2t+1)} \otimes I))
\end{equation*}
where $I = I_d$ and recall that
\begin{equation*}
|U\rangle\!\rangle = (I \otimes U)|I\rangle\!\rangle = (I \otimes U)(\sum_{i=1}^{d} |e_{i}\rangle \otimes |e_i\rangle).
\end{equation*}
By the equation \eqref{eq:form-of-C}, we can derive that
\begin{equation*}
    \begin{split}
    C^{T_{2\cdots(2t+1)}} = \; &\sum_{r \in I}\sum_{\alpha,\beta = 1}^{m_r} \sum_{a=1}^{\dim(V_r)}
    \sum_{\substack{i_{1},\cdots,i_{2t+2} \in [d] \\[0.4em] j_{1},\cdots,j_{2t+2} \in [d]}} 
    c_{\alpha,\beta}^{(r)}\, p_{i_{1},i_{2},\cdots,i_{2t+2}}^{r,a,\alpha}\, \Big(p_{j_{1},j_{2},\cdots,j_{2t+2}}^{r,a,\beta}\Big)^{*}\\[1mm]
    &|e_{i_{\pi(1)}} \otimes e_{j_{\pi(2)}} \otimes \cdots \otimes e_{j_{\pi(2t+1)}} \otimes e_{i_{\pi(2t+2)}} \rangle \\ 
    &\langle e_{j_{\pi(1)}} \otimes e_{i_{\pi(2)}} \otimes \cdots \otimes e_{i_{\pi(2t+1)}} \otimes e_{j_{\pi(2t+2)}}|.
    \end{split}
\end{equation*}
Next, we can also get that
\begin{equation*}
    \begin{split}
        &I \otimes |U\rangle\!\rangle\!\langle\!\langle U|^{\otimes t}_{2\cdots(2t+1)} \otimes I \\
         = \sum_{k_{1},\cdots,k_{t+2} \in [d]} &|e_{k_1} \otimes e_{k_2} \otimes Ue_{k_2} \otimes \cdots \otimes e_{k_{t+1}} \otimes Ue_{k_{t+1}} \otimes e_{k_{t+2}} \rangle\\
        &\langle e_{k_1} \otimes e_{k_2} \otimes Ue_{k_2} \otimes \cdots \otimes e_{k_{t+1}} \otimes Ue_{k_{t+1}} \otimes e_{k_{t+2}}|.
    \end{split}  
\end{equation*}
When we compute the multiplication of these two matrices, note that
\begin{equation*}
U|e_{i_k}\rangle = \sum_{m=1}^{d} U_{m,i_{k}}|e_{m}\rangle,
\end{equation*}
where $U_{i,j}$ represents the element on row $i$ and column $j$ of $U$. Then
\begin{equation*}
    \begin{aligned}
        \big\langle
        &e_{j_{\pi(1)}} \otimes e_{i_{\pi(2)}} \otimes \cdots
        \otimes e_{i_{\pi(2t+1)}} \otimes e_{j_{\pi(2t+2)}}
        \big|
        \\
        &
        e_{k_1} \otimes e_{k_2} \otimes Ue_{k_2} \otimes \cdots
        \otimes e_{k_{t+1}} \otimes Ue_{k_{t+1}} \otimes e_{k_{t+2}}
        \big\rangle
        \\
        &=
        \delta_{k_{1},\, j_{\pi(1)}}\,
        \delta_{k_{t+2},\, j_{\pi(2t+2)}}
        \prod_{l=2}^{t+1}
        \delta_{k_l,\, i_{\pi(2l-2)}}\,
        \langle e_{i_{\pi(2l-1)}},\, Ue_{k_l} \rangle
        \\
        &=
        \delta_{k_{1},\, j_{\pi(1)}}\,
        \delta_{k_{t+2},\, j_{\pi(2t+2)}}
        \prod_{l=2}^{t+1}
        \delta_{k_l,\, i_{\pi(2l-2)}}\,
        U_{i_{\pi(2l-1)},\, i_{\pi(2l-2)}}
    \end{aligned}
\end{equation*}
from which we can get the following equation
    \begin{align*}
        &
        C^{T_{2\cdots(2t+1)}}\!
        \left(
        I \otimes |U\rangle\!\rangle\!\langle\!\langle U|^{\otimes t}_{2\cdots(2t+1)} \otimes I
        \right)
        \\
        =\, &
        \sum_{r \in I}\
        \sum_{\alpha,\beta = 1}^{m_r}\
        \sum_{a=1}^{\dim(V_r)}\
        \sum_{\substack{
                i_{1},\ldots,i_{2t+2} \in [d] \\
                j_{1},\ldots,j_{2t+2} \in [d]
        }}
        \sum_{m_{1},\ldots,m_{t}=1}^{d}c_{\alpha,\beta}^{(r)} p_{i_{1},\ldots,i_{2t+2}}^{r,a,\alpha}\Bigl(p_{j_{1},\ldots,j_{2t+2}}^{r,a,\beta}\Bigr)^{*}
        \\[-2mm]
        &\times
        \prod_{l=2}^{t+1} U_{i_{\pi(2l-1)},\,i_{\pi(2l-2)}}|e_{i_{\pi(1)}} \otimes e_{j_{\pi(2)}} \otimes e_{j_{\pi(3)}}
        \otimes \cdots \otimes e_{j_{\pi(2t+1)}} \otimes
        e_{i_{\pi(2t+2)}} \rangle
        \\
        &\times
        \langle
        e_{j_{\pi(1)}}
        \otimes e_{i_{\pi(2)}}
        \otimes U_{m_{1},i_{\pi(2)}} e_{m_1}
        \otimes \cdots \otimes
        e_{i_{\pi(2t)}} \otimes
        U_{m_{t},i_{\pi(2t)}} e_{m_t}
        \otimes e_{j_{\pi(2t+2)}}
        |.
    \end{align*}
Therefore, we can derive that
\begin{equation*}
    \begin{aligned}
        C \ast |U\rangle\!\rangle\!\langle\!\langle U|^{\otimes t} = &\sum_{r \in I} \sum_{\alpha, \beta = 1}^{m_r} \sum_{a = 1}^{\dim(V_r)} \sum_{\substack{
                i_{1},\ldots,i_{2t+2} \in [d] \\
                j_{1},\ldots,j_{2t+2} \in [d]\\
                i_{2k} = j_{2k}, \, k=1,\ldots,t
        }} c_{\alpha,\beta}^{(r)} \, p_{P_{\pi}(i_{1},\cdots,i_{2t+2})}^{r,a,\alpha} \left(p_{P_{\pi}(j_{1},\cdots,j_{2t+2})}^{r,a,\beta}\right)^*\\
    & \prod_{l=2}^{t+1} U_{i_{2l-1},i_{2l-2}}(U_{j_{2l-1},i_{2l-2}})^{*} |e_{i_1}\rangle \langle e_{j_1}| \otimes \ket{e_{i_{2t+2}}} \bra{e_{j_{2t+2}}}. 
    \end{aligned}
\end{equation*}
Note that we have the following relationship
\begin{equation*}
     \tr_{F}\!\Big[
    F\big(C \ast |U\rangle\!\rangle\!\langle\!\langle U|_{\boldsymbol{IO}}^{\otimes t}\big)^{\intercal}F
    (O_{F}^{\intercal} \otimes I_P)\Big] =\tr_{F}\!\Big[
    \big(C \ast |U\rangle\!\rangle\!\langle\!\langle U|_{\boldsymbol{IO}}^{\otimes t}\big)^{\intercal}(I_{P} \otimes O_{F}^{\intercal})\Big]
\end{equation*}
where on the left side the system order is $\mathcal{H}_{F} \otimes \mathcal{H}_{P}$ while on the right side it is $\mathcal{H}_{P} \otimes \mathcal{H}_{F}$. That is, the partial trace on the left-hand side is taken over the first subsystem, whereas on the right-hand side it is taken over the second subsystem. Then we define the operator
\begin{equation*}
\begin{aligned}
    S^{\pi}(U,O) = &\sum_{r \in I} \sum_{\alpha, \beta = 1}^{m_r} \sum_{a = 1}^{\dim(V_r)} \sum_{\substack{
            i_{1},\ldots,i_{2t+2} \in [d] \\
            j_{1}, \ldots, j_{2t+2} \in [d]\\
            i_{2k} = j_{2k}, k=1,\ldots,t
    }} c_{\alpha,\beta}^{(r)} \, p_{P_{\pi}(i_{1},\cdots,i_{2t+2})}^{r,a,\alpha} \left(p_{P_{\pi}(j_{1},\cdots,j_{2t+2})}^{r,a,\beta}\right)^*\\
    & \prod_{l=2}^{t+1} U_{i_{2l-1},i_{2l-2}}(U_{j_{2l-1},i_{2l-2}})^{*}
     \operatorname{Tr}\!\left(|e_{j_{2t+2}}\rangle \langle e_{i_{2t+2}}|\,O^{\intercal}\right) \, |e_{j_{1}}\rangle \langle e_{i_{1}}|.
\end{aligned}
\end{equation*}
Now we are ready to transform the constraints and the target function in the SDP optimization problem into constraints and expressions on the variables $c_{\alpha,\beta}^{(r)}$ as
\begin{equation*}
\begin{gathered}
    \min \quad \int_{U} \Bigl\| S^{\pi}(U,O) - UOU^{\dagger} \Bigr\|_{F}\, d\mu_{H}(U), \\[2mm]
    \begin{aligned}
        \textbf{(C1)}:\quad & \sum_{r \in I}\sum_{\alpha,\beta = 1}^{m_r} \sum_{a=1}^{dim(V_r)}\sum_{i_{\mathbf{K-1}}  \in [d]^{|\mathbf{K-1}|}} c_{\alpha,\beta}^{(r)} \, p_{P_{\pi}(i_{\mathbf{R_K}},x,i_{\mathbf{K-1}})}^{r,a,\alpha} \left(p_{P_{\pi}(j_{\mathbf{R_K}},y,i_{\mathbf{K-1}})}^{r,a,\beta}\right)^{*} = 0., \\[1mm]
        &\forall\, K \in \{2,4,\ldots,2t+2\},\ \forall\, i_{\mathbf{R_K}},j_{\mathbf{R_K}}\in [d]^{|\mathbf{R_K}|},\ \forall\, x,y\in [d]\ \text{with}\ x\neq y, \\[3mm]
        \textbf{(C2)}:\quad &\sum_{r \in I}\sum_{\alpha,\beta = 1}^{m_r} \sum_{a=1}^{dim(V_r)}\sum_{i_{\mathbf{K-1}}  \in [d]^{|\mathbf{K-1}|}} c_{\alpha,\beta}^{(r)} \, p_{P_{\pi}(i_{\mathbf{R_K}},k,i_{\mathbf{K-1}})}^{r,a,\alpha}\left(p_{P_{\pi}(j_{\mathbf{R_K}},k,i_{\mathbf{K-1}})}^{r,a,\beta}\right)^{*}\\
        =\frac{1}{d}&\sum_{r \in I}\sum_{\alpha,\beta = 1}^{m_r} \sum_{a=1}^{dim(V_r)}\sum_{i_{\mathbf{K-1}}  \in [d]^{|\mathbf{K-1}|}} \Biggl[\sum_{z \in [d]}c_{\alpha,\beta}^{(r)} \, p_{P_{\pi}(i_{\mathbf{R_K}},z,i_{\mathbf{K-1}})}^{r,a,\alpha}\left(p_{P_{\pi}(j_{\mathbf{R_K}},z,i_{\mathbf{K-1}})}^{r,a,\beta}\right)^{*} \Biggr], \\[1mm]
        & \quad \forall\, K \in \{2,4,\ldots,2t+2\},\ \forall\, i_{\mathbf{R_K}},j_{\mathbf{R_K}} \in [d]^{|\mathbf{R_K}|},\ \forall\, k\in [d], \\[3mm]
        \textbf{(C3)}:\quad &
        \begin{aligned}[t]
            \operatorname{Tr}(C)
            = \sum_{r \in I} \dim(V_{r})\,\operatorname{Tr}\!\bigl(C^{(r)}\bigr) = d^{t+1},
        \end{aligned}
        \\[3mm]
        \textbf{(C4)}:\quad & C^{(r)} = [c_{\alpha,\beta}^{(r)}]_{1 \le \alpha ,\beta \le m_r} \geq 0,\quad \forall\, r \in I.
    \end{aligned}
\end{gathered}
\end{equation*}
Next, we explain how to transform the multiple summation expression into an equivalent block-matrix formulation, and how to pre-process the data in matrix $Q_{O,t}$ to effectively exploit the information it contains. For fixed $K \in \{1,2,\cdots,2t+2\}$, $r \in I$ and $\pi \in S_{2t+2}$, we define
\begin{equation*}
    \begin{aligned}
    \operatorname{Supp}_{r,K}^{\pi}(i_{\mathbf{R_K}},x) := \bigl\{&(a,i_{\mathbf{K-1}}) \in [\dim(V_r)] \times [d]^{K-1}|\\
     &\exists \alpha, s.t.\,\, p_{P_{\pi}(i_{\mathbf{R_K}},x,i_{\mathbf{K-1}})}^{r,a,\alpha} \neq 0\bigr\}
    \end{aligned}
\end{equation*}
for every $(i_{\mathbf{R_K}},x) \in [d]^{|\mathbf{R_K}|} \times [d]$ and define
\begin{equation*}
    \operatorname{S}_{r,K}^{\pi} := \bigl\{(i_{\mathbf{R_K}},x) \in [d]^{|\mathbf{R_K}|} \times [d] | \,\, \operatorname{Supp}_{r,K}^{\pi}(i_{\mathbf{R_K}},x) \neq \emptyset\bigr\}.
\end{equation*}
Then for every $\big((i_{\mathbf{R_K}},x),(j_{\mathbf{R_K}},y)\big) \in \operatorname{S}_{r,K}^{\pi} \times \operatorname{S}_{r,K}^{\pi}$, we define the following matrix $M_{K}^{r,\pi}(i_{\mathbf{R_K}},x,j_{\mathbf{R_K}},y) \in \mathbb{C}^{m_{r} \times m_{r}}$: 
\begin{equation}\label{def:MKR}
[M_{K}^{r,\pi}(i_{\mathbf{R_K}},x,j_{\mathbf{R_K}},y)]_{\beta,\alpha} = \sum_{
    \substack{
        (a,i_{\mathbf{K-1}}) \in \operatorname{Supp}_{r,K}^{\pi}(i_{\mathbf{R_K}},x) \\
        \cap\ \operatorname{Supp}_{r,K}^{\pi}(j_{\mathbf{R_K}},y)
    }
} p_{P_{\pi}(i_{\mathbf{R_K}},x,i_{\mathbf{K-1}})}^{r,a,\alpha} \left(p_{P_{\pi}(j_{\mathbf{R_K}},y,i_{\mathbf{K-1}})}^{r,a,\beta}\right)^{*}
\end{equation}
where if 
$
\operatorname{Supp}_{r,K}^{\pi}(i_{\mathbf{R_K}},x) \cap \operatorname{Supp}_{r,K}^{\pi}(j_{\mathbf{R_K}},y) = \emptyset,
$
we will set
\begin{equation*}
    M_{K}^{r,\pi}(i_{\mathbf{R_K}},x,j_{\mathbf{R_K}},y) = 0.
\end{equation*}
Next we denote the set that
\begin{equation}\label{def:SK}
    \begin{aligned}
    S_{K}^{\pi} := \bigcup_{r \in I} \bigl\{&((i_{\mathbf{R_K}},x),(j_{\mathbf{R_K}},y)) \in S_{r,K}^{\pi} \times S_{r,K}^{\pi}|\\  &\operatorname{Supp}_{r,K}^{\pi}(i_{\mathbf{R_K}},x) \cap \operatorname{Supp}_{r,K}^{\pi}(j_{\mathbf{R_K}},y) \neq \emptyset\bigr\}, 
    \end{aligned}
\end{equation}
then the conditions $\textbf{(C1)}$ and $\textbf{(C2)}$ are equivalent to
\begin{equation*}
    \begin{aligned} &\sum_{r \in I} \tr \big[C^{(r)}    M_{K}^{r,\pi}(i_{\mathbf{R_K}},x,j_{\mathbf{R_K}},y)\big] = \frac{\delta_{x,y}}{d} \sum_{z \in [d]} \sum_{r \in I} \tr \big[C^{(r)} M_{K}^{r,\pi}(i_{\mathbf{R_K}},z,j_{\mathbf{R_K}},z)\big]\\
        & \quad \forall K \in \{2,4,\cdots,2t+2\}, \forall \big((i_{\mathbf{R_K}},x),(j_{\mathbf{R_K}},y)\big) \in S_{K}^{\pi}
    \end{aligned}
\end{equation*}
where if $\big((i_{\mathbf{R_K}},x),(j_{\mathbf{R_K}},y)\big) \notin \operatorname{S}_{r_0,K}^{\pi} \times \operatorname{S}_{r_0,K}^{\pi}$ for some $r_{0} \in I$, we also set 
\begin{equation*}
    M_{K}^{r_0,\pi}(i_{\mathbf{R_K}},x,j_{\mathbf{R_K}},y) = 0.
\end{equation*}
In the similar way, we will deal with the target function as below. For each fixed $r,a, \alpha,\pi$ and $i \in [d]$, we introduce the following notation
\begin{equation*}
    N(r,a,\alpha,\pi,i) := \{(i_{2},\cdots,i_{2t+2}) \in [d]^{2t+1}| \,\, p_{P_{\pi}(i,i_{2},\cdots,i_{2t+2})}^{r,a,\alpha} \neq 0\}.
\end{equation*}
Then for each $(j_{1},i_{1}) \in [d] \times [d]$, we construct $M_{U}^{r,\pi}(j_{1},i_{1}) \in \mathbb{C}^{m_{r} \times m_r}$ as
\begin{equation}\label{def-MUO}
    \begin{aligned}
    [M_{U,O}^{r,\pi}(j_{1},i_{1})]_{\beta,\alpha} = &\sum_{a = 1}^{\dim(V_r)} \sum_{\substack{
            (i_{2},\ldots,i_{2t+2}) \in N(r,a,\alpha,\pi,i_{1}) \\
            (j_{2}, \ldots, j_{2t+2}) \in N(r,a,\beta,\pi,j_{1})\\
            i_{2k} = j_{2k}, k=1,\ldots,t
    }} p_{P_{\pi}(i_{1},\cdots,i_{2t+2})}^{r,a,\alpha} \left(p_{P_{\pi}(j_{1}, \cdots, j_{2t+2})}^{r,a,\beta}\right)^*\\
    & \prod_{l=2}^{t+1} U_{i_{2l-1},i_{2l-2}}(U_{j_{2l-1},i_{2l-2}})^{*}
    \operatorname{Tr}\!\left(|e_{j_{2t+2}}\rangle \langle e_{i_{2t+2}}|\,O^{\intercal}\right).
    \end{aligned}
\end{equation}
Then we will have the following equation
\begin{equation*}
    S^{\pi}(U,O) = \sum_{r \in I} \sum_{i_{1},j_{1} \in [d]} \tr\big[C^{(r)}M_{U,O}^{r,\pi}(j_1,i_1)\big] |e_{j_1}\rangle \langle e_{i_1}|.
\end{equation*}
Hence we can get the following simplified SDP written in block-matrix form
\begin{theorem}
    {For any $d,t \in \mathbb{N}^+$, let $O$ be some fixed $d$-dimensional observable, then the SDP \eqref{formulation-SDP} for general $t$-query shadow inversion of $d$-dimensional unitary under $O$ in the setting of sequential quantum combs is equivalent to the following SDP:}
    \begin{equation*}
        \begin{gathered}
        \min_{\{C^{(r)}\}_{r \in I}}\quad \int_{U} \Bigl\| \sum_{r \in I} \sum_{i_{1},j_{1} \in [d]} \tr\big[C^{(r)}M_{U,O}^{r,\pi}(j_1,i_1)\big] |e_{j_1}\rangle \langle e_{i_1}| - UOU^{\dagger} \Bigr\|_{F}\, d\mu_{H}(U), \\[2mm]
            \begin{aligned}
               \text{s.t.} &\sum_{r \in I} \tr \big[C^{(r)} M_{K}^{r,\pi}(i_{\mathbf{R_K}},x,j_{\mathbf{R_K}},y)\big] = \frac{\delta_{x,y}}{d} \sum_{z \in [d]} \sum_{r \in I} \tr \big[C^{(r)} M_{K}^{r,\pi}(i_{\mathbf{R_K}},z,j_{\mathbf{R_K}},z)\big]\\
                & \quad \forall K \in \{2,4,\cdots,2t+2\}, \forall \big((i_{\mathbf{R_K}},x),(j_{\mathbf{R_K}},y)\big) \in S_{K}^{\pi}, \\[3mm]
                &\operatorname{Tr}(C)
                    = \sum_{r \in I} \dim(V_{r})\,\operatorname{Tr}\!\bigl(C^{(r)}\bigr) = d^{t+1},
                \\[3mm]
                & C^{(r)} = [c_{\alpha,\beta}^{(r)}]_{1 \le \alpha ,\beta \le m_r} \geq 0,\quad \forall\, r \in I
            \end{aligned}
        \end{gathered}
        \end{equation*}
    where $M_{K}^{r,\pi}(i_{\mathbf{R_K}},x,j_{\mathbf{R_K}},y)$, $S_K^{\pi}$ and $M_{U,O}^{r,\pi}(j_1,i_1)$ are defined in \eqref{def:MKR}, \eqref{def:SK} and \eqref{def-MUO} respectively.
\end{theorem}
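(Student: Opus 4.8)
The plan is to assemble the theorem from three ingredients already established in this section: the symmetry reduction of Corollary~\ref{cor-sym}, the Schur decomposition of a symmetrized Choi operator, and the explicit simplifications of the comb constraints and of the objective carried out above. First I would invoke Corollary~\ref{cor-sym}: there is an optimal feasible $C$ for the SDP~\eqref{formulation-SDP} satisfying $[P_\pi C P_\pi,\, U^{\otimes t+1}\otimes V^{\otimes t}\otimes W]=0$ for all $U\in\opr{U}(d)$ and $V,W\in C_O$, i.e. $P_\pi C P_\pi$ lies in the commutant of the representation $\rho_{O,t}$. By Lemma~\ref{C_O-structure} the group $C_O$ is a product of unitary groups, so $\rho_{O,t}$ is a representation of a compact group and decomposes into isotypic components via the Schur unitary $Q_{O,t}$; by Schur's lemma its commutant consists exactly of the operators $P_\pi C P_\pi=\sum_{r\in I}\sum_{\alpha,\beta=1}^{m_r} c^{(r)}_{\alpha,\beta}\sum_{a=1}^{\dim(V_r)}|v_{r,a,\alpha}\rangle\langle v_{r,a,\beta}|$. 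Thus the optimization variable reduces to the family of blocks $\{C^{(r)}=[c^{(r)}_{\alpha,\beta}]\}_{r\in I}$, and since the Schur vectors $\{|v_{r,a,\alpha}\rangle\}$ are orthonormal, $C\ge 0$ holds if and only if each block $C^{(r)}\ge 0$, which is constraint \textbf{(C4)}.

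Next I would substitute the expansion \eqref{Form-of-C} of $C$ into the sequential-comb conditions. This reproduces the partial-trace computation of the previous subsection: the identities $\tr_{\mathbf{K-1}}(C)=\tr_{\mathbf{K}}(C)\otimes I_d/d$ become the scalar identities \textbf{(C1)} and \textbf{(C2)} on the products $c^{(r)}_{\alpha,\beta}\,p^{r,a,\alpha}_{\cdots}\bigl(p^{r,a,\beta}_{\cdots}\bigr)^{*}$, while the normalization $\tr(C)=d^{t+1}$ becomes \textbf{(C3)} after writing $\tr(C)=\sum_{r}\dim(V_r)\tr(C^{(r)})$. I would then repackage \textbf{(C1)}--\textbf{(C2)}: for fixed $K$ and fixed residual indices, grouping the inner sum over $(a,i_{\mathbf{K-1}})$ into the matrix $M^{r,\pi}_K(i_{\mathbf{R_K}},x,j_{\mathbf{R_K}},y)$ of \eqref{def:MKR} turns each scalar constraint into the trace-form identity $\sum_{r}\tr[C^{(r)}M^{r,\pi}_K(i_{\mathbf{R_K}},x,j_{\mathbf{R_K}},y)]=\tfrac{\delta_{x,y}}{d}\sum_{z}\sum_{r}\tr[C^{(r)}M^{r,\pi}_K(i_{\mathbf{R_K}},z,j_{\mathbf{R_K}},z)]$, and one checks that for index pairs outside $S^\pi_K$ the relevant $M$-matrices vanish by definition of the supports $\operatorname{Supp}^\pi_{r,K}$, so quantifying only over $S^\pi_K$ loses no information.

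Finally, for the objective I would reuse the computation of $C\ast|U\rangle\!\rangle\!\langle\!\langle U|^{\otimes t}$ and of $S^\pi(U,O)$ from the previous subsection, together with the swap identity $\tr_F[F(\cdot)^{T}F(O_F^{T}\otimes I_P)]=\tr_F[(\cdot)^{T}(I_P\otimes O_F^{T})]$ that exchanges the $F$ and $P$ tensor factors. Collecting the sum over the internal indices $(i_2,\dots,i_{2t+2})$ and $(j_2,\dots,j_{2t+2})$ into the matrix $M^{r,\pi}_{U,O}(j_1,i_1)$ of \eqref{def-MUO} gives $S^\pi(U,O)=\sum_{r}\sum_{i_1,j_1}\tr[C^{(r)}M^{r,\pi}_{U,O}(j_1,i_1)]\,|e_{j_1}\rangle\langle e_{i_1}|$, and substituting this into $f_O$ yields the objective displayed in the theorem. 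Since the two SDPs then have the same feasible set, expressed in the reduced variables $\{C^{(r)}\}$, and the same objective value at every feasible point, they are equivalent. The main obstacle I anticipate is precisely this last layer of bookkeeping: verifying that restricting \textbf{(C1)}--\textbf{(C2)} to the index set $S^\pi_K$, and likewise restricting the sums defining $M^{r,\pi}_{U,O}$ to the support sets $N(r,a,\alpha,\pi,i)$, are genuine equivalences rather than mere implications; this reduces to observing that outside these sets every term, and hence both sides of each identity, vanishes.
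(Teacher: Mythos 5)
Your proposal mirrors the paper's derivation step for step: invoke Corollary~\ref{cor-sym} to reduce to a $\rho_{O,t}$-commuting $C$, expand $P_\pi C P_\pi$ in the Schur basis to obtain the blocks $\{C^{(r)}\}$, push this expansion through the sequential-comb constraints to obtain \textbf{(C1)}--\textbf{(C3)}, reorganize the inner sums into the matrices $M^{r,\pi}_K$ and $M^{r,\pi}_{U,O}$, and identify $S^\pi(U,O)$ with the objective. This is essentially identical to the paper's proof, which is likewise a packaging of the computations performed in the surrounding text.

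One small imprecision worth flagging: you justify restricting the constraint quantifier to $S_K^\pi$ by claiming that outside $S_K^\pi$ ``every term, and hence both sides of each identity, vanishes.'' That is literally true of the left-hand side and of \textbf{(C1)}, but the right-hand side of the \textbf{(C2)}-type identity is $\tfrac{1}{d}\sum_z \tr[C^{(r)}M^{r,\pi}_K(i_{\mathbf{R_K}},z,j_{\mathbf{R_K}},z)]$, which sums over \emph{all} $z$ and need not vanish just because a particular $(i_{\mathbf{R_K}},k,j_{\mathbf{R_K}},k)$ lies outside $S_K^\pi$. The correct observation is that the constraints for pairs in $S_K^\pi$ already force $\sum_z \tr[C^{(r)}M^{r,\pi}_K(i_{\mathbf{R_K}},z,j_{\mathbf{R_K}},z)]$ to be consistent with the vanishing of the unsupported diagonal entries: summing the imposed \textbf{(C2)} constraints over $z$ shows that if the supported set $\{z:((i_{\mathbf{R_K}},z),(j_{\mathbf{R_K}},z))\in S_K^\pi\}$ has cardinality less than $d$, that average must be zero, and then the omitted constraints read $0=0$. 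This is a minor bookkeeping point (the paper asserts the equivalence without dwelling on it either), but your stated justification is not quite the right one.
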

\noindent Next we will directly give the results about the simplified SDP for parallel situation. Let $P_{\sigma}$ be any permutation operator of $\sigma \in S_{2t+2}$ that maps the tensor factors from the grouped ordering
\begin{equation*}
(P, I_1, \ldots, I_t, O_{1}, \ldots, O_t, F)
\end{equation*}
to the grouped ordering
\begin{equation*}
(P, O_1, \ldots, O_t, I_1, \ldots, I_t, F),
\end{equation*}
i.e., all output systems come before all input systems (while the relative order within each group is irrelevant). For example, we can take $\sigma$ as follows
\begin{equation*}
    \sigma : 
    \begin{cases}
        1 \;\mapsto\; 1, \\[6pt]
        j \;\mapsto\; t+j, & \text{for } j = 2, \dots, t+1, \\[6pt]
        k \;\mapsto\; k-t, & \text{for } k = t+2, \dots, 2t+1, \\[6pt]
        2t+2 \;\mapsto\; 2t+2.
    \end{cases}
\end{equation*}
Then for fixed $K \in \{1,2,\cdots,2t+1\}$,$r \in I$ and $\sigma \in S_{2t+2}$, we define 
\begin{equation*}
    \begin{aligned}
\widetilde{\operatorname{Supp}}_{r,K}^{\sigma}(i_{\mathbf{R_K}},i_{2t+1-K}) := \bigl\{&(a,i_{2K-2t-1}) \in [\dim(V_r)] \times [d]^{2K-2t-1}|\\
     &\exists \alpha, s.t.\,\, p_{P_{\sigma}(i_{\mathbf{R_K}},i_{2t+1-K},i_{2K-2t-1})}^{r,a,\alpha} \neq 0\bigr\}
    \end{aligned}
\end{equation*}
for every $(i_{\mathbf{R_K}},i_{2t+1-K}) \in [d]^{|\mathbf{R_K}|} \times [d]^{2t+1-K}$ and define
\begin{equation*}
\widetilde{\operatorname{S}}_{r,K}^{\sigma} := \bigl\{(i_{\mathbf{R_K}},i_{2t+1-K}) \in [d]^{|\mathbf{R_K}|} \times [d]^{2t+1-K} | \,\, \widetilde{\operatorname{Supp}}_{r,K}^{\sigma}(i_{\mathbf{R_K}},i_{2t+1-K}) \neq \emptyset\bigr\}.
\end{equation*}
Then for every $\big((i_{\mathbf{R_K}},i_{2t+1-K}),(j_{\mathbf{R_K}},j_{2t+1-K})\big) \in \widetilde{\operatorname{S}}_{r,K}^{\sigma} \times \widetilde{\operatorname{S}}_{r,K}^{\sigma}$, we define the following matrix $\widetilde{M}_{K}^{r,\sigma}(i_{\mathbf{R_K}},i_{2t+1-K},j_{\mathbf{R_K}},j_{2t+1-K}) \in \mathbb{C}^{m_{r} \times m_{r}}$: 
\begin{equation}\label{def:MKR2}
\begin{aligned}
&[\widetilde{M}_{K}^{r,\sigma}(i_{\mathbf{R_K}},i_{2t+1-K},j_{\mathbf{R_K}},j_{2t+1-K})]_{\beta,\alpha}\\
= &\sum_{
    \substack{
        (a,i_{2K-2t-1}) \in \widetilde{\operatorname{Supp}}_{r,K}^{\sigma}(i_{\mathbf{R_K}},i_{2t+1-K}) \\
        \cap\ \widetilde{\operatorname{Supp}}_{r,K}^{\sigma}(j_{\mathbf{R_K}},j_{2t+1-K})
    }
} p_{P_{\sigma}(i_{\mathbf{R_K}},i_{2t+1-K},i_{2K-2t-1})}^{r,a,\alpha} \left(p_{P_{\sigma}(j_{\mathbf{R_K}},j_{2t+1-K},i_{2K-2t-1})}^{r,a,\beta}\right)^{*}
\end{aligned}
\end{equation}
where if 
$
\widetilde{\operatorname{Supp}}_{r,K}^{\sigma}(i_{\mathbf{R_K}},i_{2t+1-K}) \cap \widetilde{\operatorname{Supp}}_{r,K}^{\sigma}(j_{\mathbf{R_K}},j_{2t+1-K}) = \emptyset,
$
we will set
\begin{equation*}
    \widetilde{M}_{K}^{r,\sigma}(i_{\mathbf{R_K}},i_{2t+1-K},j_{\mathbf{R_K}},j_{2t+1-K}) = 0.
\end{equation*}
Next we denote the set 
\begin{equation}\label{def:SK2}
    \begin{aligned}
    \widetilde{S}_{K}^{\sigma} := \bigcup_{r \in I} \bigl\{&((i_{\mathbf{R_K}},i_{2t+1-K}),(j_{\mathbf{R_K}},j_{2t+1-K})) \in \widetilde{S}_{r,K}^{\sigma} \times \widetilde{S}_{r,K}^{\sigma}|\\  &\widetilde{\operatorname{Supp}}_{r,K}^{\sigma}(i_{\mathbf{R_K}},i_{2t+1-K}) \cap \widetilde{\operatorname{Supp}}_{r,K}^{\sigma}(j_{\mathbf{R_K}},j_{2t+1-K}) \neq \emptyset\bigr\}.
    \end{aligned}
\end{equation}
If $((i_{\mathbf{R_K}},i_{2t+1-K}),(j_{\mathbf{R_K}},j_{2t+1-K})) \notin \widetilde{S}_{r_0,K}^{\sigma} \times \widetilde{S}_{r_0,K}^{\sigma}$ for some $r_0 \in I$, we will set $\widetilde{M}_{K}^{r_0,\sigma}(i_{\mathbf{R_K}},i_{2t+1-K},j_{\mathbf{R_K}},j_{2t+1-K}) = 0$. Moreover, for each $(j_{1},i_{1}) \in [d] \times [d]$, we construct $\widetilde{M}_{U}^{r,\sigma}(j_{1},i_{1}) \in \mathbb{C}^{m_{r} \times m_r}$ as
\begin{equation}\label{def-MUO2}
    \begin{aligned}
    [\widetilde{M}_{U}^{r,\sigma}(j_{1},i_{1})]_{\beta,\alpha} = &\sum_{a = 1}^{\dim(V_r)} \sum_{\substack{
            (i_{2},\ldots,i_{2t+2}) \in N(r,a,\alpha,\sigma,i_{1}) \\
            (j_{2}, \ldots, j_{2t+2}) \in N(r,a,\beta,\sigma,j_{1})\\
            i_{k} = j_{k}, k= 2,\ldots,t+1
    }} p_{P_{\sigma}(i_{1},\cdots,i_{2t+2})}^{r,a,\alpha} \left(p_{P_{\sigma}(j_{1}, \cdots, j_{2t+2})}^{r,a,\beta}\right)^*\\
    & \prod_{l=2}^{t+1} U_{i_{t+l},i_{l}}(U_{j_{t+l},i_{l}})^{*}
\operatorname{Tr}\!\left(|e_{j_{2t+2}}\rangle \langle e_{i_{2t+2}}|\,O^{\intercal}\right).
    \end{aligned}
\end{equation}
Then we will give the results about the simplified SDP for parallel quantum comb.
\begin{theorem}
    {For any $d,t \in \mathbb{N}^+$, let $O$ be some fixed $d$-dimensional observable, then the SDP for general $t$-query shadow inversion of $d$-dimensional unitary under $O$ in the setting of parallel quantum combs is equivalent to the following SDP:}
    \begin{equation*}
        \begin{gathered}
        \min_{\{C^{(r)}\}_{r \in I}}\quad \int_{U} \Bigl\| \sum_{r \in I} \sum_{i_{1},j_{1} \in [d]} \tr\big[C^{(r)}\widetilde{M}_{U,O}^{r,\sigma}(j_1,i_1)\big] |e_{j_1}\rangle \langle e_{i_1}| - UOU^{\dagger} \Bigr\|_{F}\, d\mu_{H}(U), \\[2mm]
            \begin{aligned}
               \text{s.t.} &\sum_{r \in I} \tr \big[C^{(r)} \widetilde{M}_{t+1}^{r,\sigma}(i_{\mathbf{R_{t+1}}},i_{t},j_{\mathbf{R_{t+1}}},j_{t})\big] = \frac{\delta_{i_{t},j_{t}}}{d^{\intercal}} \sum_{k_t \in [d]^{\intercal}} \sum_{r \in I} \tr \big[C^{(r)} \widetilde{M}_{t+1}^{r,\sigma}(i_{\mathbf{R_{t+1}}},k_t,j_{\mathbf{R_{t+1}}},k_t)\big]\\
               & \quad \forall \big((i_{\mathbf{R_{t+1}}},i_t),(j_{\mathbf{R_{t+1}}},j_t)\big) \in \widetilde{S}_{t+1}^{\sigma}, \\[3mm]
               &\sum_{r \in I} \tr \big[C^{(r)} \widetilde{M}_{2t+1}^{r,\sigma}(i_{\mathbf{R_{2t+1}}},j_{\mathbf{R_{2t+1}}})\big] = \frac{\delta_{i_{\mathbf{R}_{2t+1}},j_{\mathbf{R}_{2t+1}}}}{d} \sum_{k_{\mathbf{R}_{2t+1}} \in [d]} \sum_{r \in I} \tr \big[C^{(r)} \widetilde{M}_{2t+1}^{r,\sigma}(k_{\mathbf{R_{2t+1}}},k_{\mathbf{R_{2t+1}}})\big]\\
                & \quad \forall \big(i_{\mathbf{R_{2t+1}}},j_{\mathbf{R_{2t+1}}}\big) \in \widetilde{S}_{2t+1}^{\sigma}, \\[3mm]
                &\operatorname{Tr}(C)
                    = \sum_{r \in I} \dim(V_{r})\,\operatorname{Tr}\!\bigl(C^{(r)}\bigr) = d^{t+1},
                \\[3mm]
                & C^{(r)} = [c_{\alpha,\beta}^{(r)}]_{1 \le \alpha ,\beta \le m_r} \geq 0,\quad \forall\, r \in I.
            \end{aligned}
        \end{gathered}
        \end{equation*}
        where $\widetilde{M}_{K}^{r,\sigma}(i_{\mathbf{R_K}},i_{2t+1-K},j_{\mathbf{R_K}},j_{2t+1-K})$, $\widetilde{S}_{K}^{\sigma}$ for $K \in \{1,2,\cdots,2t+1\}$ and $\widetilde{M}_{U,O}^{r,\sigma}(j_1,i_1)$ are defined in \eqref{def:MKR2}, \eqref{def:SK2} and \eqref{def-MUO2} respectively.
\end{theorem}
\begin{remark}
    In fact, the Schur matrix $Q_{O,t}$ we construct by the Young tableau method introduced in Section~\ref{g-r-t} is real and hence we can omit the conjugation of the coefficients $(p_{i_{1},\cdots,i_{2t+2}}^{r,a,\alpha})^*$ for any $r \in I,a \in \dim(V_r), \alpha \in m_r$ and $(i_{1},\cdots,i_{2t+2}) \in [d]^{2t+2}$. That is, we can use $p_{i_{1},\cdots,i_{2t+2}}^{r,a,\alpha}$ instead of $(p_{i_{1},\cdots,i_{2t+2}}^{r,a,\alpha})^*$.
\end{remark}

{For any $d,t \in \mathbb{N}^+$, the size of the original variable block in the SDP for general $t$-query shadow inversion of $d$-dimensional unitary under $O$ grows as $d^{4t+4}$. By recognizing symmetry property and block-diagonalizing the variable through group representation, the original single-block is replaced by a set of smaller blocks \( C^{(r)} \) of sizes \( m_r \times m_r \). Consequently, the total number of variables is reduced to
\begin{equation*}
\sum_{r} m_r^2 = m \, I_{t+1}(d) \, J_t(d)
\end{equation*}
by Proposition~\ref{num-ub}. This reduction offers an exponential advantage for large $d$. For example, when \( d=6 \), \( t=3 \), \( m=2 \), and \( (l_1,l_2)=(3,3) \), we can compute that \( I_4(6)=24 \) and \( J_3(6)=48 \), leading to
\begin{equation*}
\sum_{r} m_{r}^2 = 2 \times 24 \times 48 = 2304
\end{equation*}
while the original size is $6^{16}$. This dramatic compression not only reduces dimensionality but also offers substantial practical benefits: the block-diagonal structure transforms each iteration from a large-scale matrix decomposition into multiple smaller, parallelizable decompositions, significantly lowering both memory and computational costs.}

\end{document}